\renewcommand{\tr}{\text{Tr}}
\renewcommand{\tilde}{\widetilde}
\newcommand{\area}{\operatorname{area}}
\newcommand{\RT}{\operatorname{RT}}
\newcommand{\KRT}{\operatorname{KRT}}
\newcommand{\arccosh}{\operatorname{arccosh}}
\newcommand{\arcsinh}{\operatorname{arcsinh}}
\newcommand{\KA}{\operatorname{KA}}
\newcommand{\KL}{\operatorname{KL}}
\newcommand{\A}{\operatorname{A}}
\renewcommand{\L}{\operatorname{L}}
\newcommand{\J}{\operatorname{J}}
\newcommand{\W}{\operatorname{W}}
\newcommand{\kett}[1]{\left.\left| #1 \right\rangle \right\rangle}
\newcommand{\bbra}[1]{\left\langle\left\langle #1 \right| \right.}
\newcommand{\bbrakett}[2]{\left\langle \left\langle #1 | #2 \right\rangle\right\rangle}
\newtheorem{theorem}{Theorem}
\newtheorem{lemma}[theorem]{Lemma}
\title{The Markov gap for geometric reflected entropy}
\author{Patrick Hayden,}
\author{Onkar Parrikar,}
\author{and Jonathan Sorce}
\affiliation{Stanford Institute for Theoretical Physics, Stanford University, 382 Via Pueblo Mall, Stanford, CA 94305-4060, U.S.A.}
\emailAdd{phayden@stanford.edu}
\emailAdd{parrikar@stanford.edu}
\emailAdd{jsorce@stanford.edu}
\abstract{The reflected entropy $S_R(A:B)$ of a density matrix $\rho_{AB}$ is a bipartite correlation measure lower-bounded by the quantum mutual information $I(A:B)$. In holographic states satisfying the quantum extremal surface formula, where the reflected entropy is related to the area of the entanglement wedge cross-section, there is often an order-$N^2$ gap between $S_R$ and $I$. We provide an information-theoretic interpretation of this gap by observing that $S_R - I$ is related to the fidelity of a particular Markov recovery problem that is impossible in any state whose entanglement wedge cross-section has a nonempty boundary; for this reason, we call the quantity $S_R - I$ the \emph{Markov gap}. We then prove that for time-symmetric states in pure AdS$_3$ gravity, the Markov gap is universally lower bounded by $\log(2) \ell_{\text{AdS}}/2 G_N$ times the number of endpoints of the cross-section. We provide evidence that this lower bound continues to hold in the presence of bulk matter, and comment on how it might generalize above three bulk dimensions. Finally, we explore the Markov recovery problem controlling $S_R - I$ using fixed area states. This analysis involves deriving a formula for the quantum fidelity --- in fact, for all the sandwiched R\'{e}nyi relative entropies --- between fixed area states with one versus two fixed areas, which may be of independent interest. We discuss, throughout the paper, connections to the general theory of multipartite entanglement in holography.}
\begin{document} 
\maketitle
\flushbottom

\section{Introduction}

Any operator $A$ acting on a Hilbert space $\mathcal{H}$ can be interpreted as a vector $\kett{A}$ in the Hilbert space $\mathcal{H} \otimes \mathcal{H}^*$, with $\mathcal{H}^*$ the dual of $\mathcal{H}.$ The inner product on this ``doubled'' Hilbert space is the Hilbert-Schmidt inner product
\begin{equation}
    \bbrakett{A}{B} = \tr(A^{\dagger} B).
\end{equation}
If $\rho$ is a density matrix on $\mathcal{H}$, then the state $\kett{\sqrt{\rho}}$ is a purification of $\rho$, i.e., it satisfies
\begin{equation}
    \rho = \tr_{\mathcal{H^*}} \kett{\sqrt{\rho}}\bbra{\sqrt{\rho}}.
\end{equation}
Because the state $\kett{\sqrt{\rho}}$ is defined without reference to any choice of basis on the Hilbert space $\mathcal{H}$, it is called the \emph{canonical purification} of $\rho$.

In \cite{dutta-faulkner}, Dutta and Faulkner provided an interesting interpretation of the canonical purification when $\rho$ is a semiclassical state in a holographic theory of quantum gravity. Being semiclassical, $\rho$ has a dual description in terms of an ``entanglement wedge'' $\W(\rho)$ --- a bulk domain of dependence bounded spatially by a quantum extremal surface.\footnote{We will assume the reader is familiar with the basics of the quantum extremal surface formula for boundary entropy and with the basics of entanglement wedge reconstruction. The relevant references are \cite{RT1, RT2, RT-homology, HRT, van2011patchwork, gravity-dual-density-matrix, maximin, LM, FLM, headrick2014causality, almheiri2015bulk, JLMS, DLR, QES, DHW, noisy-DHW, DL, hayden2019learning, akers-penington}.} The authors of \cite{dutta-faulkner} argued, using the equivalence of bulk and boundary path integrals in holography, that the canonical purification $\kett{\sqrt{\rho}}$ has a bulk description constructed by pasting $\W(\rho)$ to its CPT conjugate along the quantum extremal surface, then solving the bulk equations of motion with this ``doubled wedge'' as initial data. An example of this pasting for the simple case when $\rho$ describes an interval in the AdS$_3$ vacuum is sketched in figure \ref{fig:simple-pasting}.

\begin{figure}
    \centering
    \makebox[\textwidth][c]{
	\subfloat[\label{fig:simple-pasting-a}]{
	    \includegraphics{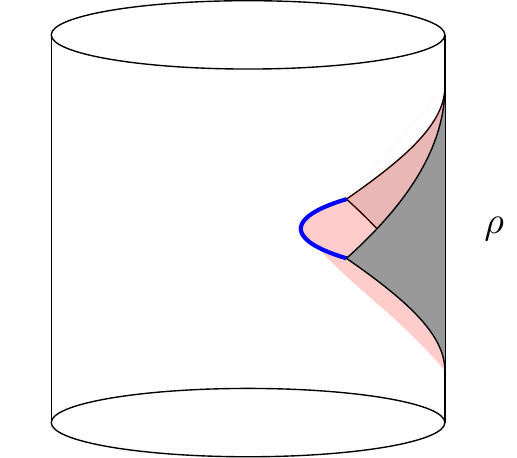}
	}
	\hspace{1em}
	\subfloat[\label{fig:simple-pasting-b}]{
    	\includegraphics{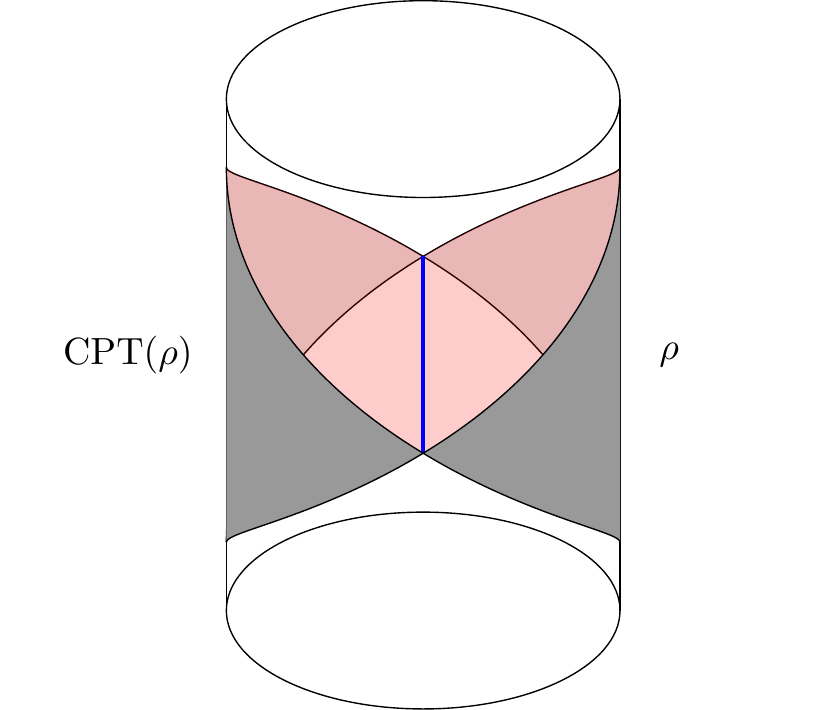}
	}
	}
    \caption{(a) A boundary state $\rho$ whose bulk dual is a Rindler wedge in AdS$_3$. (b) The initial data for the canonical purification $\kett{\sqrt{\rho}}$, formed by pasting $\W(\rho)$ to its CPT conjugate along the quantum extremal surface that forms the spatial boundary of $\W(\rho)$. The canonical purification is the AdS$_3$ vacuum state defined on a boundary circle whose circumference is twice the spatial extent of the domain of dependence on which $\rho$ was initially defined.}
    \label{fig:simple-pasting}
\end{figure}

In the case where $\rho_{AB}$ is a bipartite density matrix, Dutta and Faulkner defined the \emph{reflected entropy} as the entanglement entropy of the $AA^*$ system in the canonical purification:
\begin{equation}
    S_R(A:B) = S_{\kett{\sqrt{\rho_{AB}}}}(AA^*).
\end{equation}
If $\rho_{AB}$ is a semiclassical state satisfying the quantum extremal surface formula \cite{QES}, then its canonical purification ought to satisfy the quantum extremal surface formula as well. It follows, then, that the reflected entropy $S_R(A:B)$ is equal --- up to corrections exponentially small in $1/G_N$ --- to the generalized entropy\footnote{\label{footnote:gen-entropy}The generalized entropy of a codimension-$2$ bulk surface $\gamma$ with respect to a boundary domain of dependence $\Omega$ --- defined only when $\gamma$ is homologous to the spacelike slices of $\Omega$ --- is defined by
\begin{equation*}
    S_{\text{gen}}(\gamma|\Omega) = \frac{\area(\gamma)}{4 G_N} + S_{\text{bulk}}(\gamma|\Omega),
\end{equation*}
where $S_{\text{bulk}}(\gamma|\Omega)$ is the entropy of bulk quantum fields lying in the domain of dependence spacelike between $\gamma$ and $\Omega$.} of the minimal quantum extremal surface dividing the spacetime dual of $\kett{\sqrt{\rho_{AB}}}$ into one region through which the surface is homologous to $AA^*$ and another through which it is homologous to $BB^*$. This is sketched in figure \ref{fig:two-intervals-surfaces-b} for the special case when $\rho_{AB}$ is the density matrix of two equal-time intervals in the AdS$_3$ vacuum. 

Because the canonical purification is symmetric under interchange of $\W(\rho_{AB})$ and $\W(\rho_{A^*B^*})$, if the minimal quantum extremal surface computing $S_{\kett{\sqrt{\rho_{AB}}}}(AA^*)$ is unique then it must also obey this symmetry. The portion of this surface lying in the original spacetime $\W(\rho_{AB})$ --- equivalently, the image of this surface under the $A \leftrightarrow A^*, B \leftrightarrow B^*$ quotient --- is called the \emph{entanglement wedge cross-section} $\sigma_{A:B}$.\footnote{The entanglement wedge cross-section was originally defined in \cite{EOP1, EOP2} as a proposed dual to a boundary quantity called the \emph{entanglement of purification}. Similar claims have been made about the logarithmic negativity \cite{kudler2019entanglement, kusuki2019derivation} and the balanced partial entanglement \cite{wen2021balanced}. These conjectures are not incompatible with the proposed duality between entanglement wedge cross-sections and reflected entropy, but they are more speculative. As such, we will focus entirely on the reflected entropy within this paper; however, under the conjectures mentioned above, all of our results can equivalently be made into statements about the other proposed duals of the entanglement wedge cross-section.} The surface $\sigma_{A:B}$ divides the entanglement wedge $\W(\rho_{AB})$ into a portion through which $\sigma_{A:B}$ is homologous to $A$ and a portion through which $\sigma_{A:B}$ is homologous to $B$. See figure \ref{fig:two-intervals-surfaces-a} for a sketch.

\begin{figure}
    \centering
    \makebox[\textwidth][c]{
	\subfloat[\label{fig:two-intervals-surfaces-a}]{
	    \includegraphics[scale=0.9]{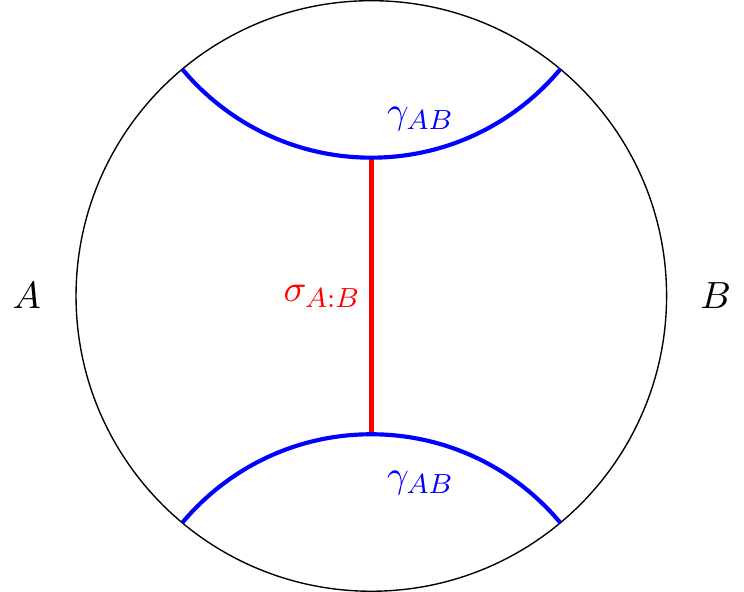}
	}
	\hspace{1em}
	\subfloat[\label{fig:two-intervals-surfaces-b}]{
    	\includegraphics[scale=0.9]{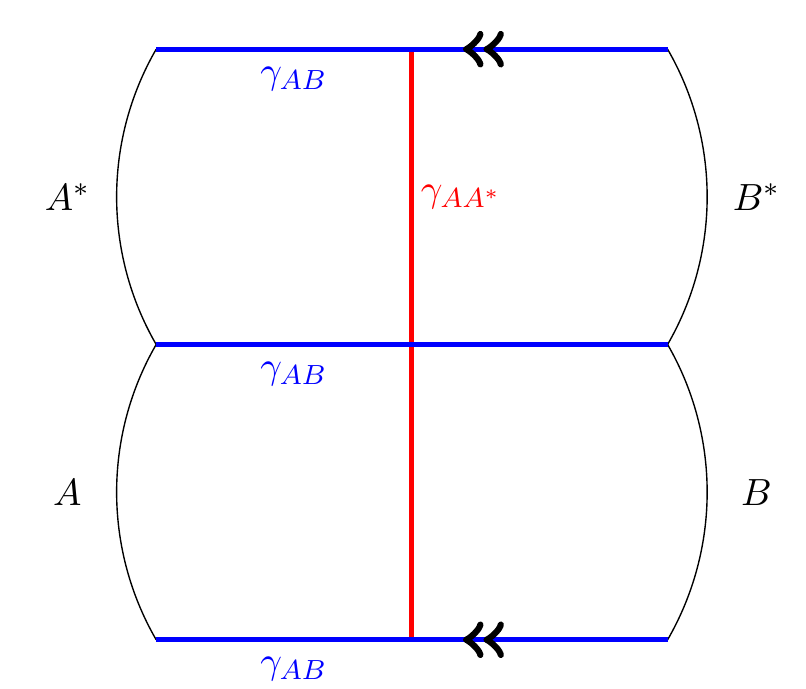}
	}
	}
    \caption{(a) A time slice of the AdS$_3$ vacuum with two boundary intervals $A$ and $B$ labeled. The intervals have been chosen large enough so that their minimal surface is $\gamma_{AB}$. The entanglement wedge cross-section is labeled $\sigma_{A:B}$. (b) A time slice of the canonical purification, formed by pasting $\W(\rho_{AB})$ to its CPT conjugate along the surface $\gamma_{AB}$. The top and bottom edges of this figure are identified. The minimal surface of boundary region $AA^*$ is the image of $\sigma_{A:B}$ under the symmetry $A \leftrightarrow A^*, B \leftrightarrow B^*$.}
    \label{fig:two-intervals-surfaces}
\end{figure}

The area contribution to the entropy $S_{\kett{\sqrt{\rho_{AB}}}}(AA^*)$ is twice the area of the entanglement wedge cross-section $\sigma_{A:B}$. The bulk entropy contribution is easily seen to be the reflected entropy of the bulk fields in $\W(\rho_{AB})$ with respect to the bipartition induced by the dividing surface $\sigma_{A:B}$. From this, we find
\begin{equation}
    S_{\text{gen}}(\gamma_{AA^*}|AA^*)
        = \frac{2 \area(\sigma_{A:B})}{4 G_N} + S_{R, \text{bulk}}(\sigma_{A:B}).
\end{equation}
The quantum extremal surface formula tells us that $S_{\kett{\sqrt{\rho_{AB}}}}(AA^*)$ is the minimum of the above quantity over all quantum extremal surfaces $\gamma_{AA^*}$, which we can rewrite without reference to the canonical purification as\footnote{A generalization of this formula, which includes explicit contributions from entanglement islands, has appeared in \cite{reflected-entropy-islands}.}
\begin{equation} \label{eq:dutta-faulkner}
    S_R(A:B)
        = \min_{\sigma_{A:B}} \left[ \frac{2 \area(\sigma_{A:B})}{4 G_N} + S_{R, \text{bulk}}(\sigma_{A:B}) \right],
\end{equation}
where the minimum is taken over all candidates for the entanglement wedge cross-section that locally extremize the quantity in brackets. We will henceforth use the notation $\sigma_{A:B}$ to refer to the surface that achieves this minimum.

Many interesting properties of the reflected entropy were explored in \cite{dutta-faulkner}. One of particular interest is that in any quantum state, the reflected entropy always exceeds the mutual information:
\begin{equation}
    S_R(A:B) \geq I(A:B).
\end{equation}
While this inequality holds for \emph{all} quantum states, it has a particularly suggestive geometric interpretation for holographic states satisfying the quantum extremal surface formula. For any bipartite boundary state $\rho_{AB}$ with a semiclassical dual, one can define a special surface within the homology class of $A$ by taking the union of the entanglement wedge cross-section $\sigma_{A:B}$ together with the portions of the $AB$ quantum extremal surface that lie ``between $\sigma_{A:B}$ and $A$''. This is sketched for a simple case in figure \ref{fig:KRT-defining-figure}. The surface thus constructed is homologous to $A$.\footnote{In fact, the requirement that there is a bipartition of the quantum extremal surface of $AB$ into pieces $\gamma^{(A)} \cup \gamma^{(B)}$ such that $\gamma^{(A)} \cup \sigma_{A:B}$ is homologous to $A$ and $\gamma^{(B)} \cup \sigma_{A:B}$ is homologous to $B$ could be taken as the definition of an entanglement wedge cross-section --- see section \ref{subsec:general-proof} for more discussion on this point.} We will call this surface the \emph{KRT surface} $\KRT(A)$ --- for ``kinked Ryu-Takayanagi,'' because the surface will have right-angled kinks where $\sigma_{A:B}$ meets the quantum extremal surface of $AB$, as in figure \ref{fig:KRT-defining-figure} --- and we will call the true minimal quantum extremal surface within that homology class $\RT(A)$.\footnote{Veterans of the field might object: we've taken such great pains to ensure the present discussion holds for all quantum extremal surfaces, why should we suddenly start calling them RT surfaces when the term ``RT'' is usually used to refer to classically minimal surfaces in time-symmetric states? Unfortunately, we've found the acronyms $\mathrm{QES}(A)$ and $\mathrm{KQES}(A)$ just slightly too long to fit in our figures; we have elected to simplify matters by calling all minimal quantum extremal surfaces ``RT surfaces.'' Our apologies to colleagues Dong, Engelhardt, Faulkner, Headrick, Hubeny, Lewkowycz, Maldacena, Rangamani, and Wall, all of whom have made important contributions to the general quantum extremal surface formula but who will hopefully understand our avoiding the acronym DEFHHLMRRTW.}

\begin{figure}
    \centering
    \includegraphics{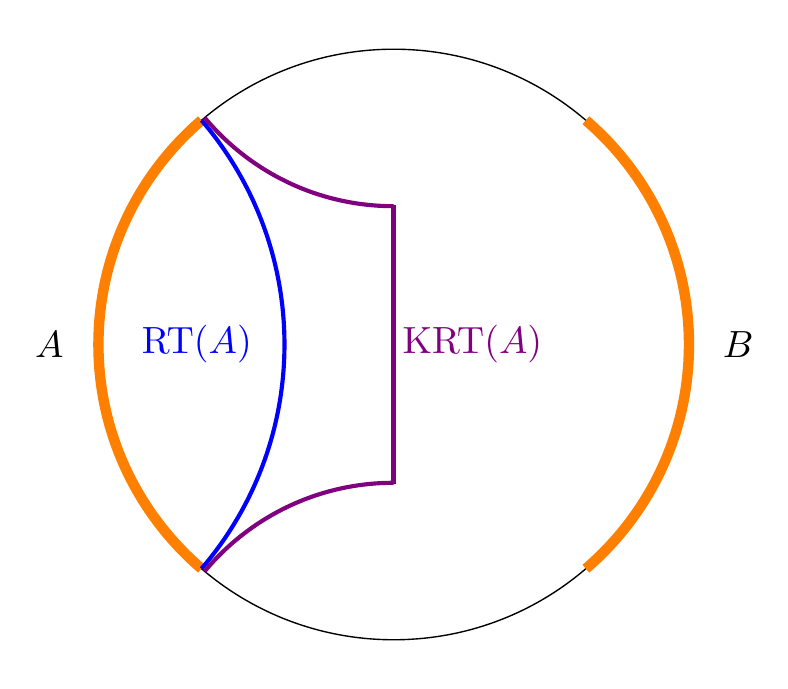}
    \caption{A sketch of the surfaces $\RT(A)$ and $\KRT(A)$ for the case where $A$ and $B$ are two equal-time intervals in the AdS$_3$ vacuum. $\KRT(A)$ is formed by taking the union of the entanglement wedge cross-section $\sigma_{A:B}$ with the portion of $\RT(AB)$ lying ``towards $A$'' from the cross-section. Both $\KRT(A)$ and $\RT(A)$ are homologous to $A$.}
    \label{fig:KRT-defining-figure}
\end{figure}

The area term in $S_R(A:B) - I(A:B)$ is given by
\begin{equation}
    [S_R - I]_{\area}
        = \frac{2 \area(\sigma_{A:B}) - \area(\RT(A)) - \area(\RT(B)) + \area(\RT(AB))}{4 G_N},
\end{equation}
which we may rewrite in terms of KRT surfaces as
\begin{equation} \label{eq:SRmI-area}
    [S_R - I]_{\area}
        = \frac{\area(\KRT(A)) - \area(\RT(A))}{4 G_N} + \frac{\area(\KRT(B)) - \area(\RT(B))}{4 G_N}.
\end{equation}
To put the bulk entropy term of $S_R - I$ in a suggestive form, we apply the general inequality $S_R \geq I$ for the bulk state in $\rho_{AB}$ subject to the bipartition induced by the entanglement wedge cross-section $\sigma_{A:B}.$ Using the definition of the mutual information $I(P:Q)=S(P)+S(Q)-S(PQ)$, we obtain
\begin{equation} \label{eq:SRmI-bulk}
    S_{R, \text{bulk}}(\sigma_{A:B}) \geq S_{\text{bulk}}(\KRT(A)|A) + S_{\text{bulk}}(\KRT(B)|B) - S_{\text{bulk}}(\RT(AB)|AB),
\end{equation}
where $S_{\text{bulk}}$ was defined in footnote \ref{footnote:gen-entropy}.

Using equations \eqref{eq:SRmI-area} and \eqref{eq:SRmI-bulk}, together with the boundary reflected entropy formula \eqref{eq:dutta-faulkner} and the boundary mutual information formula
\begin{equation}
    I(A:B) = S_{\text{gen}}(\RT(A)|A) + S_{\text{gen}}(\RT(B)|B) - S_{\text{gen}}(\RT(AB)|AB),
\end{equation}
we obtain the inequality
\begin{align} \label{eq:SRmI-Sgen}
    S_R(A:B) - I(A:B)
        & \geq \, [S_{\text{gen}}(\KRT(A)|A) - S_{\text{gen}}(\RT(A)|A)] \nonumber \\
               & \qquad + [S_{\text{gen}}(\KRT(B)|B) - S_{\text{gen}}(\RT(B)|B)].
\end{align}
Because $\KRT(A)$ (respectively $\KRT(B)$) is in the same homology class as $\RT(A)$ (respectively $\RT(B)$), we might expect that its generalized entropy is superminimal, and thus that each bracketed term on the right-hand side of the above expression is individually nonnegative. One has to be a little careful with this statement, though, because $\RT(A)$ does not have minimal generalized entropy within its homology class, but rather minimal generalized entropy \emph{among the quantum extremal surfaces in its homology class}. This caveat is necessary to allow for the fact that highly boosted surfaces in the $\RT(A)$ homology class can have arbitrarily small areas. This is an obstacle to understanding inequality \eqref{eq:SRmI-Sgen}, because KRT surfaces are not generally quantum extremal. For states with no quantum matter, however, nonnegativity of the right-hand side of \eqref{eq:SRmI-Sgen} can be established using the maximin formula \cite{maximin}. If one accepts the quantum focusing conjecture \cite{bousso2016quantum} and the quantum maximin formula \cite{quantum-maximin}, then nonnegativity of the right-hand side of inequality \eqref{eq:SRmI-Sgen} can be established in complete generality. The details of these arguments are given in appendix \ref{app:KRT}.

While we did have to apply $S_R \geq I$ to the bulk quantum state in order to obtain inequality \eqref{eq:SRmI-Sgen}, the final expression is quite suggestive. It tells us that the quantity $S_R - I$ is related to the difference in generalized entropy between two preferred surfaces in the $A$ and $B$ homology classes --- the surfaces $\KRT(A)$ and $\RT(A)$ (respectively $\KRT(B)$ and $\RT(B)$). The difference in generalized entropy between certain special candidate surfaces in a homology class is an interesting physical quantity; one might imagine that by studying it more carefully, we could improve our understanding of the inequality $S_R \geq I$ in holographic states.  The goal of the present work is to undertake that analysis, and in doing so to prove a stronger inequality than $S_R \geq I$ in certain holographic states. Indeed, holographic states generally satisfy some special constraints on their entanglement structure; the monogamy of mutual information 
\cite{hayden2013holographic} is an example of an inequality that is satisfied by classical holographic states, but not by general quantum states. Finding such constraints which are special to holographic states can in turn provide insight into the entanglement structure of these states.\footnote{There is an interesting related research program that attempts to classify all von Neumann entropy inequalities implied by the Ryu-Takayanagi formula; see for example \cite{bao2015holographiccone, hubeny2018holographicrelations}.}

The main technical contribution of this paper is a proof that in time-symmetric states of pure AdS$_3$ gravity satisfying the Ryu-Takayanagi formula, the inequality $S_R \geq I$ can be enhanced to
\begin{equation} \label{eq:big-technical-claim}
    S_R(A:B) - I(A:B) \geq \frac{\log(2) \ell_{\text{AdS}}}{2 G_N} \times (\text{\# of cross-section boundaries}) + o\left(\frac{1}{G_N}\right).
\end{equation}
In AdS$_3$ gravity, the entanglement wedge cross-section is a one-dimensional curve --- the ``\# of cross-section boundaries'' appearing in the above inequality is the number of endpoints of that curve. The main conceptual contribution of this paper will be an interpretation of the quantity $S_R - I$ --- which we call the \emph{Markov gap} for reasons made clear in section \ref{sec:info-theory} --- in terms of the optimal fidelity of a particular recovery process on the canonical purification $\kett{\sqrt{\rho_{AB}}}$. The interpretation of the Markov gap in terms of a recovery process will suggest a close link between the quantity $S_R - I$ and the geometry of the boundaries of the entanglement wedge cross-section; this link will be bolstered by the proof of inequality \eqref{eq:big-technical-claim}. We will also comment on potential generalizations of this inequality for non-time-symmetric states, states in higher-dimensional theories of gravity, and states with bulk matter.

Before proceeding to the plan of the paper, we detour to highlight a link between our work and a very interesting article by Akers and Rath \cite{akers-rath-tripartite}. In that paper, the authors used the Dutta-Faulkner formula for holographic reflected entropy to disprove a conjecture made in \cite{mostly-bipartite} about the entanglement structure of holographic states. Motivated by structural theorems about bit threads calculating holographic entropies, the original conjecture was that entanglement between three spatial regions in semiclassical boundary states is mostly \emph{bipartite} in nature, in the sense that any multipartite contribution would be subleading in $1/G_N$. (The analogous statement for four boundary regions, by contrast, was already known to be false~\cite{balasubramanian2014multiboundary}.) In \cite{akers-rath-tripartite}, Akers and Rath showed that states with mostly bipartite entanglement patterns satisfy $S_R - I \approx 0.$ They wrote down some particular boundary states for which $S_R - I$ is order $1/G_N$, proved some useful statements about the continuity of the reflected entropy, and concluded that the states in question must have significant amounts of tripartite entanglement. Combining their observations with our inequality \eqref{eq:big-technical-claim} makes for a tantalizing observation: the mere \textit{presence} of boundaries in an entanglement wedge cross-section requires significant quantities of tripartite entanglement in the corresponding boundary state. The fact that our inequality \eqref{eq:big-technical-claim} scales with the \emph{number} of cross-section boundaries seems to suggest that \emph{every} boundary in the entanglement wedge cross-section must be supported by its own pattern of tripartite entanglement in the boundary state --- this idea is taken up in considerably more detail in section \ref{subsec:bulk-gap}. 

The nature of that tripartite entanglement remains to be understood but at least one familiar form can be ruled out. The monogamy of mutual information~\cite{hayden2013holographic, maximin} eliminates a four- or higher-party GHZ entanglement structure in classical states in holography, because GHZ states explicitly violate the inequality.\footnote{More precisely, states that are \textit{entirely} GHZ are excluded by monogamy. It is conceivable that GHZ-type entanglement could co-exist with other types of entanglement in such a way that the monogamy property remained satisfied overall even if it were violated by a GHZ-entangled factor in the boundary Hilbert space.} Even the three-party GHZ state can be ruled out as follows: consider the two party state obtained by tracing out one of the factors, and construct the bulk geometry dual to the canonical purification of that two party state using the Dutta-Faulkner trick described above. Had the original state been GHZ, the canonical purification would necessarily also be GHZ and violate the monogamy of mutual information, which is then a contradiction.
It is our hope that inequality \eqref{eq:big-technical-claim}, and related inequalities that may be provable in more general holographic theories of gravity, will contribute to our understanding of the thorny but fascinating problem of multipartite holographic entanglement.\footnote{While not directly related to the present paper, we would also like to draw attention to the series of papers \cite{ning1,ning2,ning3,ning4}, which undertake an orthogonal approach to understanding multipartite entanglement by defining multi-party generalizations of entanglement wedge cross-sections.}

\begin{figure}
    \centering
    \includegraphics{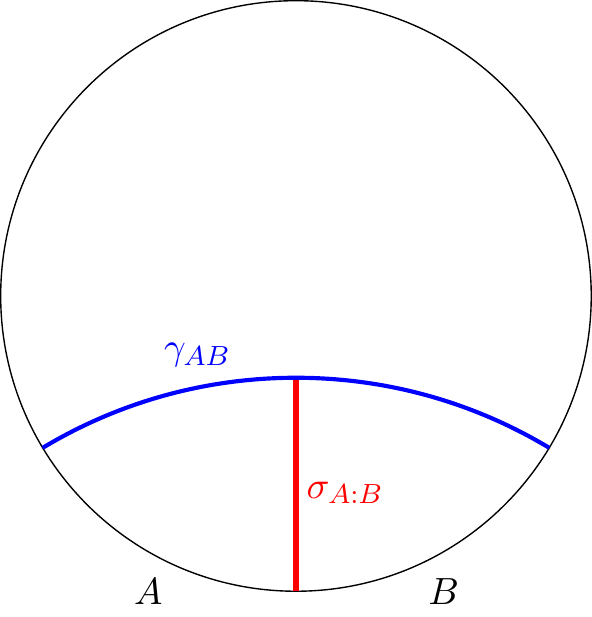}
    \caption{The minimal surface $\gamma_{AB}$ and entanglement wedge cross-section $\sigma_{A:B}$ for the case where $A$ and $B$ are neighboring intervals on a time slice of the AdS$_3$ boundary. In this case, the entanglement wedge cross-section has only one boundary. It has no boundary at infinity, since that point is at infinite distance.}
    \label{fig:neighboring-intervals}
\end{figure}

One other related paper worth highlighting is \cite{zou2021universal}. In that paper, the authors started from the observation that in the vacuum state of pure AdS$_3$ gravity, the quantity $S_R - I$ for two neighboring, equal-time intervals is independent of the size of those intervals and equal to $\log(2) \ell_{AdS} / 2 G_N$. This is a saturation of inequality \eqref{eq:big-technical-claim}, because in that case (sketched in figure \ref{fig:neighboring-intervals}), the entanglement wedge cross-section has only one boundary. The authors computed $S_R - I$ --- and the related quantity $2 E_P - I$ with $E_P$ the entanglement of purification --- in various critical spin chains and found interesting universal features.

The plan of the present paper is as follows.

In section \ref{sec:info-theory}, we introduce the idea of a \emph{Markov recovery process} from quantum information theory, and show that $S_R(A:B) - I(A:B)$ is lower bounded by a function of the fidelity of a particular Markov recovery process on $\kett{\sqrt{\rho_{AB}}}$. We give a preliminary holographic interpretation of this recovery process, and argue that boundaries of the entanglement wedge cross-section present obstructions to this recovery process that are reflected in the Markov gap. The geometric argument of this section is largely heuristic, and relies on the general principle from \cite{van2010building} that geometric connections in a bulk state are supported by entanglement patterns in the boundary state.

In section \ref{sec:geometric-proof}, we prove the claimed inequality \eqref{eq:big-technical-claim} for time-symmetric states in pure AdS$_3$ quantum gravity. The proof is an exercise in two-dimensional hyperbolic geometry; we first prove the desired lower bound on a quantity analogous to $\area(\KRT(A)) - \area(\RT(A))$ defined purely in the hyperbolic disc, then use the fact that every hyperbolic $2$-manifold is covered by the hyperbolic disc to prove the general bound. One charming feature of the proof is that it constructs a partial tiling of the homology region between $\KRT(A)$ and $\RT(A)$ by right-angled hyperbolic pentagons, with each kink in $\KRT(A)$ being a vertex of its own pentagon; the desired inequality follows from standard trigonometric relations among the side-lengths of these pentagons.

In section \ref{sec:generalizations}, we discuss potential generalizations of inequality \eqref{eq:big-technical-claim} beyond the regime of time-symmetric, pure, AdS$_3$ gravity. We provide evidence that the bound continues to hold in AdS$_3$ gravity upon the addition of bulk matter, and discuss a possible generalization of \eqref{eq:big-technical-claim} to bulk dimensions greater than $3$.

In section \ref{sec:recovery-models}, we return to the understanding developed in section \ref{sec:info-theory} relating the Markov gap to recovery processes on $\kett{\sqrt{\rho_{AB}}}$. We create a tractable model of the relevant recovery process using the fixed area states introduced in \cite{fixed-area-AR, fixed-area-DHM}, and compute a bound on the fidelity of the process using the gravitational path integral. This computation reproduces some features of the inequality \eqref{eq:SRmI-Sgen} directly from the perspective of Markov recovery processes. The techniques developed in this section may be of independent interest --- the story we tell about recovery processes can be reduced to a more general question that is physically interesting in its own right:
\begin{itemize}
    \item Given a boundary region $R$ and a state $\rho_{R}$ whose bulk entanglement wedge is bounded by the bulk surface $\gamma$, and which contains another covariantly defined surface $\tilde{\gamma}$ with greater generalized entropy, define $\tilde{\rho}_{R}$ to be a state whose entanglement wedge ends at $\tilde{\gamma}$. What is the fidelity $F(\rho_R, \tilde{\rho}_{R})$?\footnote{The curious reader is encouraged to look ahead to figure \ref{fig:two-QES-state} for an illustration of this setup.}
\end{itemize} 
In a fixed-area-state model of this question, when the state $\tilde{\gamma}$ is extremal, we compute not only the fidelity but all of the sandwiched R\'{e}nyi relative entropies defined in \cite{muller-lennert_quantum_2013, wilde_strong_2014}; this result is given in equations \eqref{eq:sandwiched-Renyis} and \eqref{eq:reversed-sandwiched-Renyis}. Readers primarily interested in understanding this result can skip directly to the matching bullet of section \ref{subsec:fixed-area} without losing any essential context.

Finally, in section \ref{sec:discussion}, we conclude with a summary of our results and some possible directions for future work. We pay particular attention to what the present explorations might teach us about multipartite holographic entanglement, and to what techniques might be used to generalize inequality \eqref{eq:big-technical-claim}.

In appendix \ref{app:KRT}, we show that the inequality $S_{\text{gen}}(\KRT(A)|A) \geq S_{\text{gen}}(\RT(A)|A)$ follows from the quantum maximin formula.

Throughout the paper we set $c = \hbar = 1$ while leaving $G_N$ explicit. We frequently work in units where the AdS radius is set to $1$. The term ``area'' is used to refer to the volume of any codimension-$2$ bulk hypersurface, regardless of the total bulk dimension. Once or twice we use the term ``area'' to refer to the volume of a codimension-$3$ hypersurface, but we signpost this explicitly in the relevant sections. When distinguishing between the regimes of validity of the ``minimal'' versus ``extremal'' surface formulas for holographic entanglement entropy, we use the term ``time-symmetric'' to refer to any spacetime that has a moment of time symmetry --- i.e., a complete, achronal slice whose extrinsic curvature tensor vanishes. Time-symmetric states do not necessarily have a global $t \mapsto -t$ symmetry.

\section{Information-theoretic origin of the Markov gap}
\label{sec:info-theory}

As mentioned in the introduction, the quantity $S_R(A:B) - I(A:B)$ can be related to the fidelity of a particular Markov recovery process on the canonical purification of $\rho_{AB}$. In subsection \ref{subsec:QI-prelims}, we review and explain the results from quantum information theory needed to understand this claim. We define Markov recovery maps and the quantum fidelity of states, and give a refinement of the inequality $S_R - I \geq 0$ in terms of the fidelity of a Markov recovery map. In subsection \ref{subsec:bulk-gap}, we give a holographic interpretation of the refined inequality, and explain why the Markov gap must be nonzero at order $1/G_N$ whenever the entanglement wedge cross-section of $\rho_{AB}$ has a nontrivial boundary.

\subsection{Quantum information preliminaries}
\label{subsec:QI-prelims}

Given a three-party quantum state $\rho_{ABC}$, a \emph{Markov recovery map} is a quantum channel from a one-party subsystem into a two-party subsystem. For example, we might have a map $\mathcal{R}_{B \rightarrow BC}$ that takes system $B$ into system $BC$. By acting on the reduced state $\rho_{AB}$ with this channel, we produce a tripartite state on the whole system:
\begin{equation}
    \tilde{\rho}_{ABC} = \mathcal{R}_{B \rightarrow BC}(\rho_{AB}).
\end{equation}
By a \emph{Markov recovery process}, we will mean the problem of trying to reproduce the tripartite state $\rho_{ABC}$ from one of its bipartite reduced states (for example $\rho_{AB}$) using a Markov recovery map (for example $\mathcal{R}_{B \rightarrow BC}$.)

The name ``Markov'' was first associated with this kind of problem in \cite{markov-states} --- based on earlier work in \cite{accardi1983markovian} on an analogous problem involving countably many systems --- where a state $\rho_{ABC}$ that can be \emph{perfectly} recovered from $\rho_{AB}$ via a quantum channel on $B$ was called a \emph{short quantum Markov chain} for the ordering $A \rightarrow B \rightarrow C$. I.e., $\rho_{ABC}$ is a quantum Markov chain for $A \rightarrow B \rightarrow C$ if there exists a quantum channel $\mathcal{R}_{B \rightarrow BC}$ satisfying
\begin{equation}
    \rho_{ABC} = \mathcal{R}_{B \rightarrow BC}(\rho_{AB}).
\end{equation}
It follows from the results of \cite{petz1986sufficient} that $\rho_{ABC}$ is a quantum Markov chain for $A \rightarrow B \rightarrow C$ if and only if the conditional mutual information $I(A:C|B)$ vanishes.\footnote{Classically, the random variables $X, Y, Z$ form a Markov chain if and only if their joint distribution factorizes as $p(x,y,z) = p(x)p(y|x)p(z|y)$. When $I(A:C|B)$ vanishes in the quantum case, this factorization generalizes to a canonical form for $\rho_{ABC}$~\cite{markov-states}.}

The conditional mutual information is defined as the following linear combination of entanglement entropies:
\begin{equation}
    I(A:C|B) = S(AB) + S(BC) - S(ABC) - S(B) = I(A:BC) - I(A:B).
\end{equation}
The famous strong subadditivity inequality is exactly the statement that conditional mutual information is always nonnegative. Roughly speaking, $I(A:C|B)$ measures the amount of correlation between the $A$ and $C$ subsystems that does not ``pass through'' the $B$ subsystem. When $I(A:C|B)$ vanishes, all correlations between $A$ and $C$ are visible to the $B$ subsystem, and can be reproduced by a quantum channel acting only on $B$; this is the intuition behind the statement that $A \rightarrow B \rightarrow C$ is a quantum Markov chain iff $I(A:C|B)$ vanishes.

In \cite{fawzi2015quantum}, the authors proved a refinement of the relationship between Markov chains and conditional mutual information.\footnote{A series of follow-up papers culminating in \cite{junge2018universal} actually proved a much more general result --- a refinement of the monotonicity of relative entropy under quantum channels --- of which strong subadditivity is a special case. A good review of the general theorem is presented in chapter 12 of \cite{WildeBook}, with the application to conditional mutual information given in section 12.6.1.} They showed that states with small (but nonzero) conditional mutual information are \emph{approximate} Markov chains --- that is, they admit Markov recovery processes that do a good job reproducing $\rho_{ABC}$ as measured by the quantum fidelity. The formal statement is as follows:
\begin{equation} \label{eq:first-fidelity-inequality}
    \max_{\mathcal{R}_{B \rightarrow BC}} F(\rho_{ABC}, \mathcal{R}_{B \rightarrow B C}(\rho_{AB})) \geq e^{-I(A:C|B)}. 
\end{equation}
The quantity $F$ appearing on the left-hand side of this inequality is the quantum fidelity, defined by
\begin{equation}
    F(\rho, \sigma) = \left[ \tr\sqrt{\sqrt{\rho} \sigma \sqrt{\rho}} \right]^2.
\end{equation}
It is symmetric in its arguments, lies in the range $0 \leq F(\rho, \sigma) \leq 1$, equals $1$ if and only if $\rho$ and $\sigma$ are equal, and equals $0$ if and only if $\rho$ and $\sigma$ have orthogonal support. If two states have fidelity close to $1$, then for any bounded observable $\mathcal{O}$ the expectation values $\tr(\rho \mathcal{O})$ and $\tr(\sigma \mathcal{O})$ are similar.\footnote{The precise version of this ``similarity'' statement goes as follows. The fidelity is related to the one-norm distance by (see eqs. (9.100-9.101) of \cite{nielsen-chuang}, though note their definition of fidelity differs by a power of $2$)
\begin{equation*}
   \lVert \rho - \sigma \rVert_1 \leq 2 \sqrt{1 - F(\rho, \sigma)},
\end{equation*}
so when $F$ is close to one, the one-norm distance is close to zero. The Schatten norms satisfy a form of H\"{o}lder's inequality, giving
\begin{equation*}
    |\tr\left((\rho - \sigma) \mathcal{O}\right)| \leq \tr\left( \left|(\rho - \sigma) \mathcal{O} \right| \right) = \lVert (\rho - \sigma) \mathcal{O} \rVert_1 \leq \lVert \rho - \sigma \rVert_1 \lVert \mathcal{O} \rVert_{\infty},
\end{equation*}
where $\lVert \cdot \rVert_{\infty}$ is the \emph{operator norm}, equal to the largest absolute value among the eigenvalues of $\mathcal{O}.$} Elementary properties of the fidelity are nicely reviewed in chapter 9.2.2 of \cite{nielsen-chuang}.

Let us now take a closer look at inequality \eqref{eq:first-fidelity-inequality}. The left-hand side of the inequality involves a maximum over all Markov recovery channels $\mathcal{R}_{B \rightarrow BC}$. If $I(A:C|B)$ is close to zero, then the right-hand side of the inequality is close to 1; this means that there must exist \emph{some} Markov recovery map $\mathcal{R}_{B \rightarrow BC}$ for which the fidelity $F(\rho_{ABC}, \mathcal{R}_{B\rightarrow BC}(\rho_{AB}))$ is close to one. So when $I(A:C|B)$ is small, there exists some Markov recovery map on $B$ that approximately recovers $\rho_{ABC}$ from $\rho_{AB}$. The way this inequality was proved in \cite{fawzi2015quantum,junge2018universal} was by constructing a particular Markov recovery map --- called the \emph{rotated Petz map} or \emph{twirled Petz map} --- that beats $e^{- I(A:C|B)}$ in terms of fidelity.\footnote{A curious feature of the twirled Petz map is that the reconstructed state
\begin{equation*}
    \tilde{\rho}_{ABC} = \mathcal{R}_{\text{tPetz}, B \rightarrow BC}(\rho_{AB})
\end{equation*}
is not only very close to $\rho_{ABC}$ in terms of the fidelity, but is \emph{exactly} equal to $\rho_{ABC}$ on the $BC$ subsystem; i.e., we have $\rho_{BC} = \tilde{\rho}_{BC}.$ Again, we refer the reader to section 12.6.1 of \cite{WildeBook} for a review of this fact.}

In its current form, inequality \eqref{eq:first-fidelity-inequality} looks like a bound on the fidelity of an optimal Markov recovery process in terms of the conditional mutual information. But we are free to rewrite it as a bound on the conditional mutual information in terms of the fidelity of the optimal Markov recovery process:
\begin{equation} \label{eq:second-fidelity-inequality}
    I(A:C|B) \geq - \max_{\mathcal{R}_{B \rightarrow BC}} \log F(\rho_{ABC}, \mathcal{R}_{B \rightarrow B C}(\rho_{AB})).
\end{equation}
We can view this as a state-dependent enhancement of the strong subadditivity inequality. The ordinary strong subadditivity inequality is $I(A:C|B) \geq 0$. If we happen to know, however, that the optimal Markov recovery process for the chain $A \rightarrow B \rightarrow C$ is imperfect --- i.e., its fidelity is less than one --- then inequality \eqref{eq:second-fidelity-inequality} tells us that $I(A:C|B)$ must be bounded away from zero.

For the reflected entropy of a bipartite state $\rho_{AB}$, the quantity $S_R(A:B) - I(A:B)$ can be written as a conditional mutual information of the canonical purification:
\begin{equation}
    S_R(A:B) - I(A:B) = I(A:B^*|B) = I(B:A^*|A).
\end{equation}
This observation was made in \cite{dutta-faulkner}, and it is in fact how they proved the inequality $S_R \geq I$. Using inequality \eqref{eq:second-fidelity-inequality}, however, we can prove two stronger inequalities:
\begin{align}
    S_R(A:B) - I(A:B) \label{eq:BBstar-Markov}
        & \geq - \max_{\mathcal{R}_{B \rightarrow B B^*}} \log F(\rho_{A B B^*}, \mathcal{R}_{B \rightarrow B B^*}(\rho_{AB})). \\
    S_R(A:B) - I(A:B)
        & \geq - \max_{\mathcal{R}_{A \rightarrow A A^*}} \log F(\rho_{A A^* B}, \mathcal{R}_{A \rightarrow A A^*}(\rho_{AB})).
\end{align}
These inequalities are what lead us to call the quantity $S_R - I$ the \emph{Markov gap}.

\subsection{Cross-section boundaries and bulk entanglement}
\label{subsec:bulk-gap}

We will now argue on quite general grounds --- albeit rather heuristically --- that in holographic states, the fidelities of Markov recovery processes on the canonical purification are controlled by the number of boundaries in the entanglement wedge cross-section. Put simply: the more boundaries there are in the entanglement wedge cross-section, the harder it is for a Markov recovery process to accurately reproduce the chain $A \rightarrow B \rightarrow B^*$ (or $B \rightarrow A \rightarrow A^*$). This discussion serves as a prelude to the concrete analysis of section \ref{sec:geometric-proof}, where we will prove our claimed inequality \eqref{eq:big-technical-claim} lower-bounding $S_R - I$ in terms of the number of cross-section boundaries for time-symmetric states in AdS$_3$ gravity.

Let us examine in some detail a particular example of a canonical purification. Let $\rho_{AB}$ be the density matrix of two equal-time intervals in the AdS$_3$ vacuum, with the intervals being large enough that the entanglement wedge of $\rho_{AB}$ is connected. In figure \ref{fig:intervals-jagged-surfaces}, we have sketched a static time-slice of the canonical purification of $\rho_{AB}$. As explained in the introduction, the bulk dual of a canonical purification has, as initial data, the geometry formed by gluing together two copies of the entanglement wedge along their spacelike boundaries. For the specific case of two equal-time intervals in the AdS$_3$ vacuum, the canonical purification is a two-boundary wormhole.

\begin{figure}
    \centering
    \includegraphics{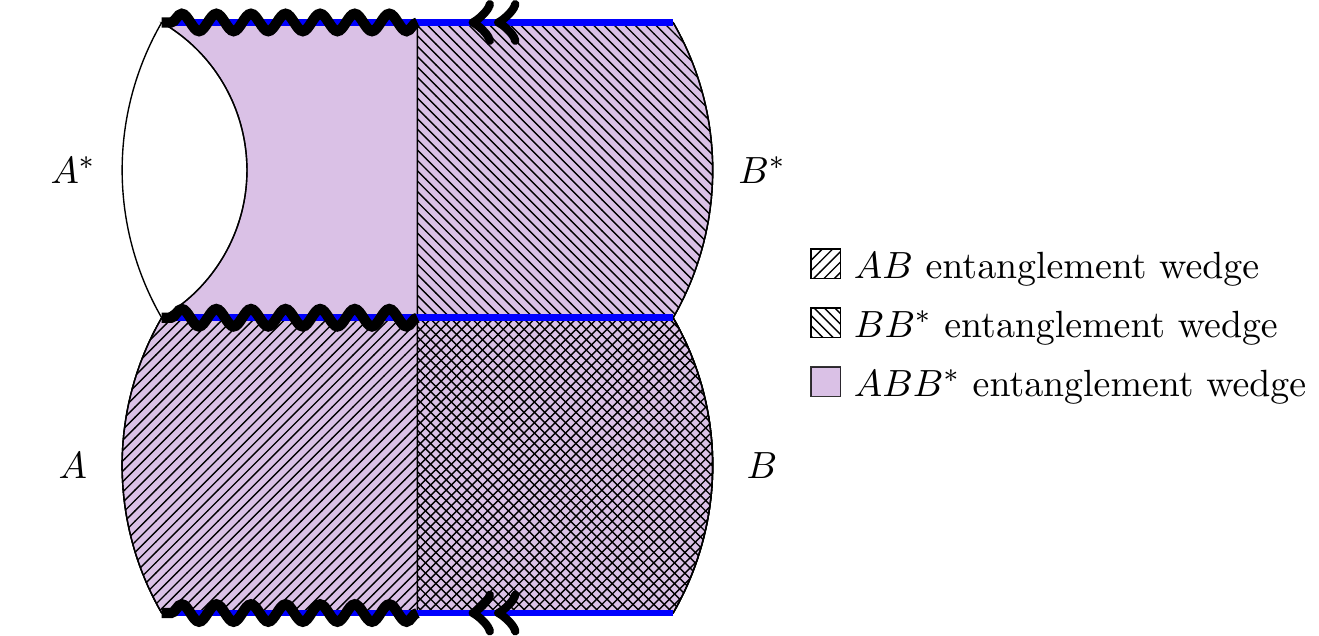}
    \caption{A sketch of the canonical purification for the density matrix of two intervals in the AdS$_3$ vacuum; the top and bottom lines are identified, making the entire geometry a two-boundary wormhole. The $ABB^*$ entanglement wedge is shaded in violet, and the $AB$ and $BB^*$ entanglement wedges are indicated with crosshatching. The blue lines are minimal surfaces for $AB$ and $A^* B^*$. Jagged curves have been imposed over two minimal surfaces whose tubular neighborhoods are visible to $ABB^*$ but not to $AB$ or $BB^*$; we argue in the text of the paper that these surfaces must be supported by boundary entanglement contributing to $I(A:B^*|B).$}
    \label{fig:intervals-jagged-surfaces}
\end{figure}

We will focus on the Markov chain $A \rightarrow B \rightarrow B^*$, though we could equally well study the chain $B \rightarrow A \rightarrow A^*$ with similar conclusions. We would like to know, abstractly, how hard it is to recover the 3-party state $\rho_{ABB^*}$ from the two-party state $\rho_{AB}$ with a Markov map $\mathcal{R}_{B \rightarrow B B^*}.$ The bulk regions dual to the density matrices $\rho_{AB}$, $\rho_{BB^*}$, and $\rho_{ABB^*}$ are indicated with shading and crosshatching in figure \ref{fig:intervals-jagged-surfaces}. In particular, any sufficiently small tubular neighborhood of the ``jagged'' surfaces indicated in figure \ref{fig:intervals-jagged-surfaces} with thick, wavy lines is contained in the entanglement wedge of $\rho_{ABB^*}$, but no tubular neighborhood is contained entirely within the entanglement wedges of either $\rho_{AB}$ or $\rho_{BB^*}$.

There is by now very good reason to believe that geometric connections in bulk states are sourced, in a meaningful sense, by large amounts of boundary entanglement. This heuristic, originally advocated by Van Raamsdonk in \cite{van2010building} using the Ryu-Takayanagi formula, is the fundamental principle underlying many important developments such as ``ER = EPR'' \cite{maldacena2013cool} and entanglement wedge reconstruction behind black hole horizons \cite{islands1, islands2, RWW}.\footnote{The ``geometric connection'' appearing in the behind-the-horizon reconstruction problem is most apparent in the doubly holographic model of entanglement islands introduced in \cite{AMMZ}. See also \cite{Anderson:2021vof} for a singly holographic realization.} If we apply this heuristic to the situation elaborated in the preceding paragraph, it seems natural to conclude that the boundary entanglement sourcing the geometric connections across the jagged surfaces of figure \ref{fig:intervals-jagged-surfaces} is visible to $\rho_{ABB^*}$, but not to $\rho_{AB}$ or $\rho_{BB^*}$. This suggests that we could never hope to reproduce that entanglement by acting on $\rho_{AB}$ with a Markov channel $\mathcal{R}_{B \rightarrow BB^*}$ --- the entanglement sourcing the smooth geometries of the jagged surfaces isn't already present in $\rho_{AB}$, and it can't be added to the state with a channel that doesn't access the $A$ subsystem. \textit{The very existence of the jagged surfaces in figure \ref{fig:intervals-jagged-surfaces} precludes a perfect Markov recovery $\rho_{ABB^*} = \mathcal{R}_{B \rightarrow BB^*}(\rho_{AB}).$} This, in turn, guarantees a nontrivial Markov gap by inequality \eqref{eq:BBstar-Markov}.


The preceding discussion suggests that each boundary in the entanglement wedge cross-section necessarily makes some contribution to the Markov gap, because every boundary in the entanglement wedge cross-section is attached to a ``jagged surface'' supported by entanglement contributing to $I(A:B^*|B)$. In our detailed analysis below, we will focus our attention on ``corners,'' geometric structures formed by the intersection of a jagged surface and an entanglement wedge cross-section. As we will see, each corner is supported by entanglement making an irreducible contribution to the Markov gap, even in the limit that the length of the jagged surface itself vanishes. In that sense, the contribution of these corners to the Markov gap can be viewed as more fundamental than the entanglement supporting the jagged surface itself.

To argue for this intuitive principle, we will examine a different canonical purification than the one already considered. Let $\rho_{AB}$ be the density matrix of two asymptotic boundaries of a three-boundary wormhole geometry, with the third asymptotic boundary being sufficiently small that the $\rho_{AB}$ entanglement wedge is connected. A static time-slice of this geometry and a static time-slice of its canonical purification are sketched in figure \ref{fig:wormhole-jagged-surfaces}.

\begin{figure}
    \centering
    \includegraphics{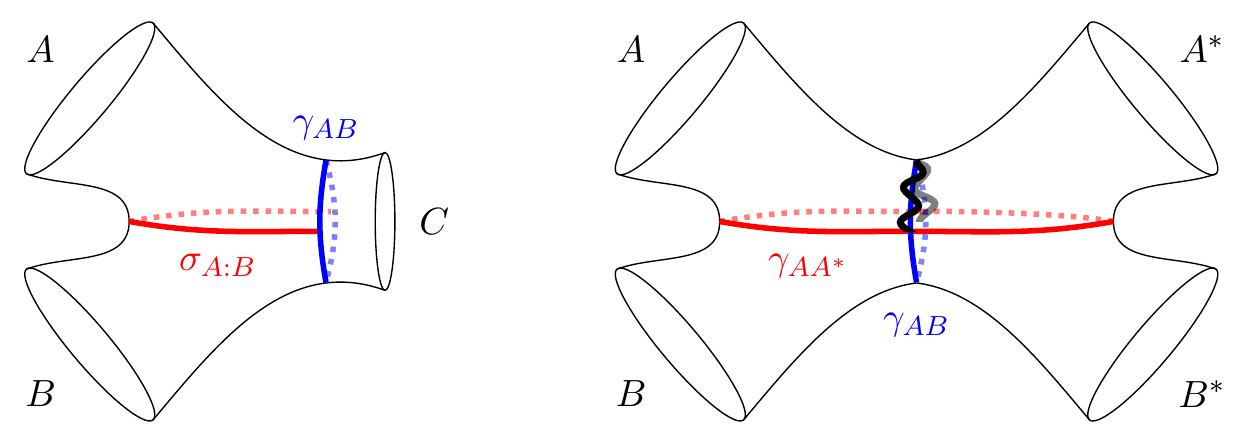}
    \caption{The left side of this figure shows a three boundary wormhole with $\rho_{AB}$ having a connected entanglement wedge; the $AB$ extremal surface is marked $\gamma_{AB}$, and the entanglement wedge cross-section is labeled $\sigma_{A:B}$. The right side shows a time-slice of the spacetime corresponding to the canonical purification. On the canonical purification, we have marked with a jagged curve the portion of $\gamma_{AB}$ whose tubular neighborhoods are visible to $\rho_{ABB^*}$ but not to $\rho_{AB}$ or $\rho_{BB^*}.$}
    \label{fig:wormhole-jagged-surfaces}
\end{figure}

There are two important features of this canonical purification that differ from the one sketched in figure \ref{fig:intervals-jagged-surfaces}. First, note that there is only \emph{one} jagged surface, as opposed to the two that appeared in the canonical purification of two intervals. So we observe, immediately, that boundaries of the entanglement wedge cross-section are not necessarily in one-to-one correspondence with jagged surfaces. Second, in this case, the jagged surface is compact --- and, indeed, one can tweak the moduli of the three-boundary wormhole so that the area of the jagged surface is arbitrarily small.

Still, the very existence of this surface, no matter how small it might be, requires the presence of some boundary entanglement contributing to $I(A:B^* | B).$ We can therefore interpret inequality \eqref{eq:big-technical-claim}, which asserts that at least in certain AdS$_3$ states every boundary of the entanglement wedge cross-section gives a finite contribution to the Markov gap, as suggesting that the geometric connection through a \emph{corner} where a jagged surface intersects an entanglement wedge cross-section must be supported by a chunk of boundary entanglement that is, at least in some coarse sense, quantified by the lower bound on the Markov gap.

We emphasize that thus far, everything we have discussed in this subsection is entirely heuristic. The rest of the paper is dedicated to presenting concrete, technical results that elaborate and reinforce the ideas presented in this subsection. The first of these, presented in the following section, is a proof of the inequality \eqref{eq:big-technical-claim} whose significance was explained in the introduction. We will also take up the question of Markov recovery maps again explicitly in section \ref{sec:recovery-models}, where we construct some models of Markov recovery processes using fixed area states and compute the relevant fidelities using the gravitational path integral.

\section{Lower bounds in pure AdS$_3$ gravity}
\label{sec:geometric-proof}

In holographic theories of quantum gravity, the entanglement entropies of time-symmetric states with no bulk matter can be computed using the Ryu-Takayanagi formula.\footnote{This was argued in great generality by Lewkowycz and Maldacena in \cite{LM}. The main weakness of the argument is that it assumes the gravitational path integrals that compute the integer R\'{e}nyi entropies are dominated by a single family of replica-symmetric saddles. Luckily, this is only expected to fail at ``phase transitions'' where multiple families of saddles (and therefore multiple minimal surfaces) vie for dominance. See \cite{competing-saddles-1, competing-saddles-2} for some calculations in this regime.} The reflected entropies of such states can be computed using the Dutta-Faulkner formula \eqref{eq:dutta-faulkner}. The Markov gap $S_R - I$ is then computable as an area difference between locally minimal surfaces that satisfy certain  global homology constraints, with all of the surfaces lying in a single spacelike geometry.\footnote{In the introduction, we referred to these surfaces as ``KRT surfaces'' and ``RT surfaces'', a terminology that we will take up again in subsection \ref{subsec:two-special-cases}.} Obtaining universal lower bounds for the Markov gap, like the one we claimed in the introduction (inequality \eqref{eq:big-technical-claim}), is then reduced to an exercise in Riemannian geometry.

The problem we have set for ourselves becomes even simpler if we restrict our attention to AdS gravity in three bulk dimensions. Time-symmetric slices of asymptotically AdS$_3$ spacetimes are hyperbolic $2$-manifolds, about which there is an extensive mathematical literature. The minimal surfaces whose areas come into the calculation of $S_R - I$ are all geodesics --- or geodesic segments --- that lie on these hyperbolic $2$-manifolds. It is this considerable simplification that will allow us to prove the universal lower bound claimed in the introduction:\footnote{We remind the reader that the ``\# of cross-section boundaries'' appearing in this inequality is the number of endpoints of the cross section.}
\begin{equation} \label{eq:AdS3-bound-2}
    S_R(A:B) - I(A:B) \geq \frac{\log(2) \ell_{\text{AdS}}}{2 G_N} \times (\text{\# of cross-section boundaries}) + o\left(\frac{1}{G_N}\right).
\end{equation}

In subsection \ref{subsec:pentagons}, we introduce a fundamental trigonometric identity relating the lengths of the sides of a right-angled hyperbolic pentagon. In subsection \ref{subsec:two-special-cases}, we show how the pentagon identity can be used to prove inequality \eqref{eq:AdS3-bound-2} in two special cases: (i) $A$ and $B$ are equal-time intervals in the AdS$_3$ vacuum, and (ii) $A$ and $B$ are asymptotic boundaries of a genus-zero, three-boundary wormhole. In subsection \ref{subsec:general-proof}, we give the proof of inequality \eqref{eq:AdS3-bound-2} in complete generality for all time-symmetric states in pure AdS$_3$ gravity.

Generalizations for non-time-symmetric states, theories with matter, and higher dimensions are addressed in section \ref{sec:generalizations}.

\subsection{Right angled hyperbolic pentagons}
\label{subsec:pentagons}

We will take as our model of hyperbolic space the Poincar\'{e} disk, with metric
\begin{equation} \label{eq:poincare-disk}
    ds^2 = \frac{4}{(1-r^2)^2} \left( dr^2 + r^2 d\theta^2 \right)
\end{equation}
and coordinate ranges $r \in [0, 1), \theta \in [0, 2 \pi).$ In choosing this model, we have set the radius of curvature to one; factors of the radius of curvature in all of our expressions can be restored by dimensional analysis. A \emph{right-angled hyperbolic pentagon} --- see figure \ref{fig:right-angled-pentagon} for an example --- is a compact, convex set in the Poincar\'{e} disk bounded by five geodesic segments that meet sequentially at right angles. We will also use the term ``right-angled hyperbolic pentagon'' to refer to a pentagon on any Riemannian manifold that is isometric to a right-angled pentagon in the Poincar\'{e} disk.

\begin{figure}
    \centering
    \includegraphics{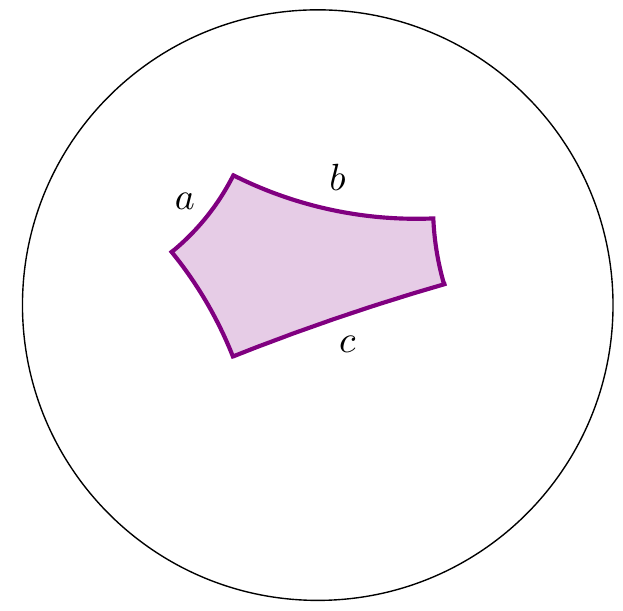}
    \caption{An example of a right-angled pentagon in the hyperbolic disk. Two adjacent sides have been labeled $a$ and $b$, and the unique non-adjacent side has been labeled $c$.}
    \label{fig:right-angled-pentagon}
\end{figure}

Right-angled hyperbolic pentagons satisfy a law of cosines relating their side lengths. If $a$ and $b$ are two adjacent sides, and $c$ is the unique non-adjacent side (again see figure \ref{fig:right-angled-pentagon}), then the side lengths satisfy the identity\footnote{There are many ways of proving this and other trigonometric identities on hyperbolic manifolds. Our favorite methods come from the field of \emph{hyperbolic line geometry}, in which geodesics in three-dimensional hyperbolic space are identified with the SL$_2(\mathbb{C})$ matrices that describe $\pi$-rotations about those geodesics, and geometric relations among geodesics are expressed as algebraic relations among the corresponding matrices. A very nice elaboration of this theory can be found in \cite{fenchel}. Our identity \eqref{eq:law-of-cosines} is given there as equation (2) on page 82, with a sign difference due to an orientation convention adopted therein.}
\begin{equation} \label{eq:law-of-cosines}
    \sinh(a) \sinh(b) = \cosh(c).
\end{equation}
Using this identity, we will compute a universal lower bound on the quantity $a + b - c$ for a right-angled hyperbolic pentagon; in the following subsections, this lower bound will be used to bound the Markov gap.

We may write $a + b - c$ using equation \eqref{eq:law-of-cosines} as
\begin{equation}
    a + b - c
        = a + b - \arccosh[\sinh(a) \sinh(b)].
\end{equation}
We'd like to minimize this quantity over all permitted values of $a$ and $b$. It will actually be conceptually easier to write the identity in terms of $u \equiv \sinh(a)$ and $v \equiv \sinh(b)$,
\begin{equation} \label{eq:apbmc-early}
    a + b - c = \arcsinh(u) + \arcsinh(v) - \arccosh(u v),
\end{equation}
and minimize over all permitted values of $u$ and $v$. Equation \eqref{eq:law-of-cosines} requires that the product $u v$ is in the image of $\cosh$, so we must have $u v \geq 1$, and the condition $a, b \geq 0$ restricts $u$ and $v$ to be individually nonnegative; this is the parameter space over which we will minimize equation \eqref{eq:apbmc-early}.

Computing the gradient of the right-hand side of \eqref{eq:apbmc-early}, it is straightforward to show that it vanishes only at $u, v = \pm i,$ which is not in the allowed parameter space. So the minimum of $a + b - c$ must be realized on the boundary of the parameter space, either on the curve $v = 1/u$ or in a limit as $u$ or $v$ becomes large. Setting $v = 1/u$, we have
\begin{equation}
    a + b - c = \arcsinh\left(\frac{1}{u}\right) + \arcsinh(u),
\end{equation}
which is minimized at $u=1$ with minimum value $2 \log(1 + \sqrt{2}).$ In the limit as one of the parameters becomes large --- say $u$ without loss of generality, since the expression is symmetric in $u$ and $v$ --- we have
\begin{equation}
    a + b - c = \arcsinh(v) - \log(v),
\end{equation}
which is minimized in the limit $v \rightarrow \infty$ with minimum $\log(2).$

As the two candidate minima are $2 \log(1 + \sqrt{2})$ and $\log(2)$, and $\log(2)$ is the smaller of the two, we conclude with the universal lower bound
\begin{equation} \label{eq:apbmc}
    a + b - c \geq \log(2).
\end{equation}

\subsection{Intervals and asymptotic regions}
\label{subsec:two-special-cases}

We will now use the pentagon inequality \eqref{eq:apbmc} to derive the Markov gap bound \eqref{eq:AdS3-bound-2} for two special cases, as a warmup for the general proof in subsection \ref{subsec:general-proof}.

Consider first the case where $A$ and $B$ are two equal-time intervals in the AdS$_3$ vacuum. We will assume that $A$ and $B$ are sufficiently large that their entanglement wedge is in the connected phase; otherwise, $S_R(A:B)$ and $I(A:B)$ both vanish at order $1/G_N$, there is no entanglement wedge cross-section and thus no cross-section boundary, and inequality \eqref{eq:AdS3-bound-2} is trivial. The static time slice is isometric to the Poincar\'{e} disk, and the minimal surfaces are straightforward to compute. We have sketched them in figure \ref{fig:minimal-surface-intervals}, where $\gamma_A$ and $\gamma_B$ are the Ryu-Takayanagi surfaces of $A$ and $B$, $\gamma_{AB}$ is the Ryu-Takayanagi surface of $AB$, and $\sigma_{A:B}$ is the entanglement wedge cross-section.

\begin{figure}
    \centering
    \makebox[\textwidth][c]{
	\subfloat[\label{fig:minimal-surface-intervals}]{
	    \includegraphics[scale=1]{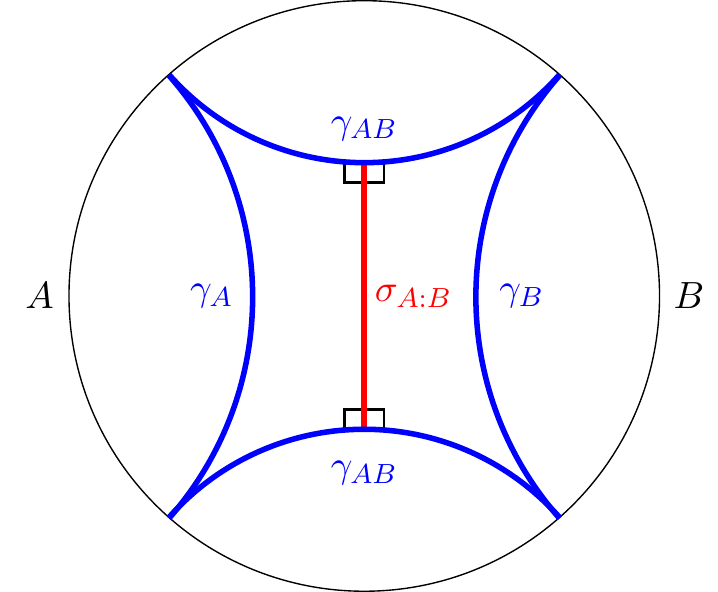}
	}
	\hspace{0.2em}
	\subfloat[\label{fig:krt-rt-intervals}]{
    	\includegraphics[scale=1]{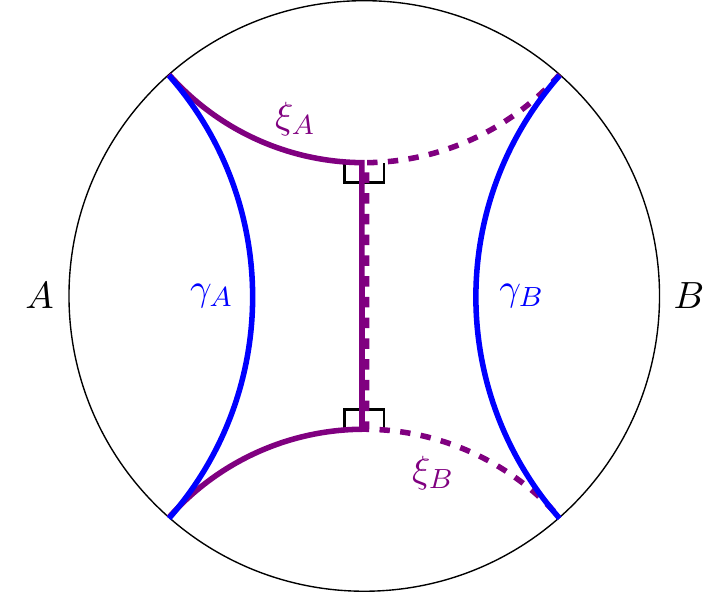}
	}
	}
    \caption{(a) The minimal surfaces relevant for computing $S_R - I$ for two equal-time intervals in the connected phase of the AdS$_3$ vacuum.  (b) The same configuration, but where the disjoint union $\gamma_{AB} \vee \sigma_{A:B} \vee \sigma_{A:B}$ has been redrawn in terms of two ``KRT'' surfaces $\xi_A$ and $\xi_B$. The surface $\xi_A$ has been drawn with a solid curve, while $\xi_B$ has been drawn with a dashed curve.}
\end{figure}

As discussed in the introduction, portions of $\gamma_{AB}$ and $\sigma_{A:B}$ can be combined to form candidate Ryu-Takayanagi surfaces for $A$ and $B$ that contain right-angled ``kinks.'' We call these surfaces KRT surfaces --- for ``kinked Ryu-Takayanagi surface'' --- and denote them by $\KRT(A)$ or by $\xi_{A}$ (respectively $\KRT(B)$ and $\xi_{B}$). These are sketched in figure \ref{fig:krt-rt-intervals}.

Using the Ryu-Takayanagi formula and the Dutta-Faulkner formula, the Markov gap can be written as
\begin{equation} \label{eq:FRT-m-RT-2-intervals}
    S_R(A:B) - I(A:B) = \frac{\area(\xi_{A}) - \area(\gamma_{A})}{4 G_N} + \frac{\area(\xi_{B}) - \area(\gamma_{B})}{4 G_N} + o\left(\frac{1}{G_N}\right).
\end{equation}
Since $\xi_A$ is homologous to $\gamma_A$, and $\gamma_A$ is minimal within its homology class by the assumptions of the Ryu-Takayanagi formula, each term on the right-hand side of \eqref{eq:FRT-m-RT-2-intervals} is individually nonnegative. We can obtain a tighter lower bound on each term by tiling the homology regions between the KRT surfaces and the RT surfaces with right-angled pentagons.

\begin{figure}
    \centering
    \includegraphics{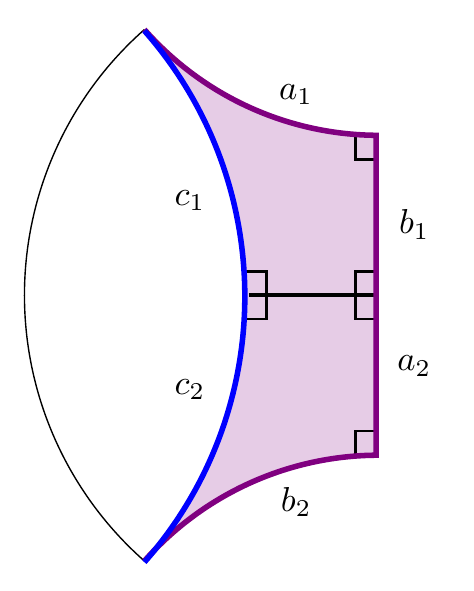}
    \caption{The homology region between the RT surface $\gamma_A$ and corresponding KRT surface $\xi_A$, previously depicted in figure \ref{fig:krt-rt-intervals}, has been tiled by two degenerate right-angled pentagons by drawing the minimal surface connecting $\gamma_A$ to the entanglement wedge cross-section. We have $a_j + b_j - c_j \geq \log(2)$ for $j=1,2.$}
    \label{fig:pentagons-intervals}
\end{figure}

In figure \ref{fig:pentagons-intervals}, we have sketched the homology region between $\xi_A$ and $\gamma_A$, and decomposed it into the union of two right-angled pentagons. This decomposition is constructed by drawing the minimal curve between $\gamma_A$ and $\sigma_{A:B}$, which meets those curves at right angles and subdivides the homology region into two tiles. It may not be obvious that the tiles in this decomposition are right-angled pentagons, as they appear to have only four sides. However, each tile has a \emph{degenerate} side ``at infinity''; the tiles are degenerate right-angled pentagons that can be obtained in a limit as one side of an ordinary right-angled pentagon shrinks and goes off to infinity to become a single vertex. The bound \eqref{eq:apbmc} still holds in this limit, where $a$ and $c$ (or $b$ and $c$) are the two sides that meet at infinity.\footnote{Even though the two sides of the pentagon that meet at infinity each have infinite length, the divergences in the lengths cancel and the difference in their areas is well-defined and independent of cutoff procedure. This is a special case of the general result proved in \cite{sorce-cutoffs}. In particular, choosing a cutoff procedure where we regulate the degenerate pentagon by taking it to be a limit of non-degenerate pentagons, one can show that degenerate right-angled pentagons satisfy inequality \eqref{eq:apbmc}.}

The difference in area between $\xi_A$ and $\gamma_A$ can be written as the sum over one $(a + b - c)$ term for each pentagon in the tiling. Using the labels of figure \ref{fig:pentagons-intervals}, the identity is
\begin{equation}
    \area(\xi_A) - \area(\gamma_A) = (a_1 + b_1 - c_1) + (a_2 + b_2 - c_2).
\end{equation}
Applying inequality \eqref{eq:apbmc} gives
\begin{equation}
    \area(\xi_A) - \area(\gamma_A) \geq 2 \log(2).
\end{equation}
Performing the same calculation for $(\area(\xi_B) - \area(\gamma_B))$ and plugging the lower bounds back into equation \eqref{eq:FRT-m-RT-2-intervals}, we obtain the inequality
\begin{equation}
    S_R(A:B) - I(A:B)
        \geq \frac{\log(2)}{G_N}.
\end{equation}
This matches our claimed general inequality \eqref{eq:AdS3-bound-2}, because for the case of two equal-time intervals in the AdS$_3$ vacuum, the entanglement wedge cross-section has two boundaries --- one where it meets each component of $\gamma_{AB}$. Note that this minimum is saturated only in the limit where the two intervals $A$ and $B$ become large; this follows from the discussion of subsection \ref{subsec:pentagons}.

As a second example, consider the case where $A$ and $B$ are two asymptotic boundaries of a genus-zero, three-boundary wormhole. We will assume that the horizon of the third boundary is sufficiently small that the entanglement wedge of $AB$ is in the connected phase; otherwise, as in the case of two intervals, there is no entanglement wedge cross-section and inequality \eqref{eq:AdS3-bound-2} is trivial. The minimal surfaces for this setup are sketched in figure \ref{fig:minimal-surfaces-wormhole} using the same notation as for the two-interval case in figure \ref{fig:minimal-surface-intervals}.

\begin{figure}
    \centering
    \includegraphics{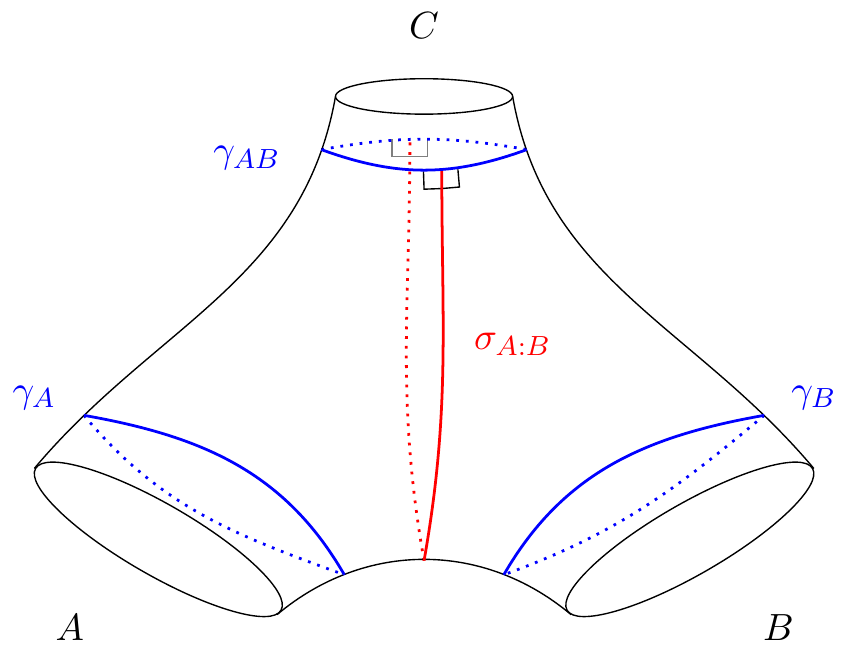}
    \caption{A three-boundary wormhole geometry with boundaries $A,$ $B$, and $C$ such that the state on $AB$ is in its connected phase. The $A$ and $B$ minimal surfaces are labeled $\gamma_{A}$ and $\gamma_{B}$, the $AB$ minimal surface is labeled $\gamma_{AB}$, and the entanglement wedge cross-section is labeled $\sigma_{A:B}.$}
    \label{fig:minimal-surfaces-wormhole}
\end{figure}

In this case, the surfaces $\gamma_{A}, \gamma_{B},$ and $\gamma_{AB}$ bound a ``pair of pants'' geometry, which is subdivided into two pieces by the cross-section $\sigma_{A:B}$. These two pieces are the homology regions bounded by the KRT and RT surfaces for $A$ and $B$; they are sketched in figure \ref{fig:krt-rt-wormhole}. We can further subdivide each of those pieces into two right-angled hyperbolic pentagons by drawing the minimal geodesics between $\gamma_{AB}$ and $\gamma_{A}$ (or $\gamma_{B}$) and between $\sigma_{A:B}$ and $\gamma_{A}$ (or $\gamma_{B}$). This pentagonal tiling of the homology regions is also sketched in figure \ref{fig:krt-rt-wormhole}.\footnote{The fact that these minimal geodesics exist is an elementary feature of the pair of pants geometry. In the general proof of subsection \ref{subsec:general-proof}, we will prove the existence of pentagonal tilings for a much more general class of homology regions; the reader wondering about the existence of pentagonal tilings even for the three-boundary wormhole is encouraged to look ahead to the general proof.} As in the two-interval case discussed previously, the quantity $(\area(\xi_{A}) - \area(\gamma_A))$ can be written as a sum of two $(a + b - c)$ terms for the right-angled pentagons tiling the homology region between $\KRT(A)$ and $\RT(A)$; similarly for the quantity $(\area(\xi_{B}) - \area(\gamma_B)).$ Even though the three-boundary wormhole is not globally isometric to the Poincar\'{e} disk, it is \emph{locally} isometric to the Poincar\'{e} disk, and so each pentagon in the tiling of figure \ref{fig:krt-rt-wormhole} is a right-angled hyperbolic pentagon. Once again applying the pentagon inequality \eqref{eq:apbmc}, we obtain the bound
\begin{equation}
    S_R(A:B) - I(A:B)
        \geq \frac{\log(2)}{G_N},
\end{equation}
which matches the general inequality \eqref{eq:AdS3-bound-2}, since the entanglement wedge cross-section has two boundaries (see again figure \ref{fig:minimal-surfaces-wormhole}).

\begin{figure}
    \centering
    \includegraphics{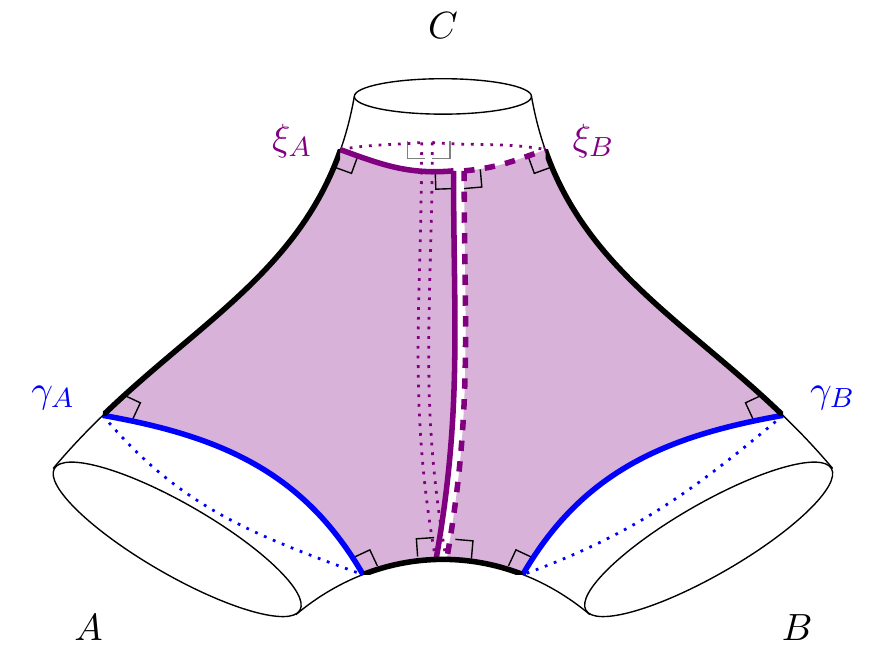}
    \caption{The KRT and RT surfaces for the three-boundary wormhole originally sketched in figure \ref{fig:minimal-surfaces-wormhole}. We have also sketched the tiling of each KRT-RT homology region by right-angled pentagons, though we have only shaded the pentagons appearing on the front-facing side of the wormhole for the sake of visual clarity.} 
    \label{fig:krt-rt-wormhole}
\end{figure}

The general proof, for arbitrary global states and arbitrary subregions $A$ and $B$ within those states, is presented in the following subsection. The general idea is to show that each component of a KRT surface can be homotopically deformed to a geodesic such that the homotopy region can be tiled with right-angled hyperbolic pentagons, with each pentagon in the tiling corresponding to a single kink in the KRT surface. The homotopy tilings will be constructed so that the pentagon inequality \eqref{eq:apbmc} guarantees that the area difference between a KRT component and its homotopic geodesic is lower-bounded by $\log(2)$ times the number of kinks in the KRT component. After every component of a KRT surface has been homotopically deformed to a geodesic, the resulting ``deformed'' surface is smooth and is still in the same homology class as the ``true'' RT surface, and thus has area no greater than the true RT surface. Combining these observations will give us
\begin{equation}
    \area(\KRT) - \area(\RT) \geq \log(2) \times (\text{\# of right-angled kinks}),
\end{equation}
which, upon relating the total number of KRT kinks to the total number of cross-section boundaries, will reproduce inequality \eqref{eq:AdS3-bound-2}.

This sketch will be explained in much more detail over the course of the proof.

\subsection{The general proof}
\label{subsec:general-proof}

The entanglement wedge cross-section was originally defined in \cite{EOP1,EOP2} as a codimension-$2$ surface that splits the entanglement wedge into two homology regions, such that one homology region belongs to each of the two boundary regions $A$ and $B$. It will be useful, for our proof, to reframe this as a statement about minimal surfaces. Formally, we say that the entanglement wedge cross-section $\sigma_{A:B}$ induces a bipartition of the minimal surface $\gamma_{AB}$ into two pieces --- call them $\gamma_{AB}^{(A)}$ and $\gamma_{AB}^{(B)}$ --- such that the following homology equivalences hold:
\begin{align}
    \gamma_A & \sim_{\text{homology}} \gamma_{AB}^{(A)} \cup \sigma_{A:B}, \\
    \gamma_B & \sim_{\text{homology}} \gamma_{AB}^{(B)} \cup \sigma_{A:B}.
\end{align}
The surfaces on the right-hand sides of these equations are what we have been calling the ``KRT surfaces'' $\xi_A$ and $\xi_B$.

On a general hyperbolic $2$-manifold, for general boundary subregions $A$ and $B$, what can we say about the KRT surfaces for $A$ and $B$? First, trivially, each KRT surface is homologous to the corresponding RT surface. Second, each KRT surface is piecewise smooth, with each smooth portion of the surface being a geodesic segment. Third, the only non-smoothness in a KRT surface comes in the form of right-angled kinks. This last claim follows from the fact that $\sigma_{A:B}$ and $\gamma_{AB}$ are individually smooth, so the only non-smoothness in a KRT surface comes from an intersection between $\sigma_{A:B}$ and $\gamma_{AB}$; such intersections must be orthogonal, by the assumption that $\sigma_{A:B}$ has minimal length.\footnote{If the entanglement wedge cross-section did not meet $\gamma_{AB}$ at a right angle, then it would be possible to deform the cross-section to be ``more orthogonal,'' making its area smaller.} Crucially, the right angles of these kinks always face ``inward'' toward the homology region; i.e., the homology region between a KRT surface and its corresponding RT surface is convex. Finally, the KRT surface is simple, i.e., it does not contain any self-intersections aside from the right-angle intersections of its geodesic segments.

Our goal is to use the properties listed in the preceding paragraph to prove that the area difference between a KRT surface and the corresponding RT surface is lower-bounded by $\log(2)$ times the number of kinks in the KRT surface. Because the KRT surface is non-self-intersecting, each of its components must be either (a) a kinked geodesic with two endpoints at infinity (as in figure~\ref{fig:pentagons-intervals}), (b) a kinked closed geodesic (as in figure~\ref{fig:krt-rt-wormhole}), (c) a smooth geodesic with two endpoints at infinity, or (d) a smooth closed geodesic. We label these four types of surfaces ``KA'' (for ``kinked asymptotic''), ``KL'' (for ``kinked loop''), ``A'' (for ``asymptotic''), and ``L'' (for ``loop''). The component decomposition of a KRT surface can be written schematically as
\begin{equation}
    \KRT = \cup_{\alpha} \KA_{\alpha} \cup_{\beta} \KL_{\beta} \cup_{j} \A_{j} \cup_{k} \L_{k}.
\end{equation}

We will soon show that each kinked asymptotic geodesic $\KA_{\alpha}$ is homotopic to a smooth asymptotic geodesic $\A_{\alpha}$ with the same endpoints whose area is smaller by at least $\log(2)$ times the number of kinks in $\KA_{\alpha}$; this area difference will follow from a pentagonal tiling of the homotopy region, where each kink in $\KA_{\alpha}$ becomes a vertex of a right-angled pentagon, and the number of pentagons is equal to the number of kinks. We will show, similarly, that each kinked geodesic loop $\KL_{\beta}$ is homotopic to a smooth geodesic loop $\L_{\beta}$ with area smaller by at least $\log(2)$ times the number of kinks in $\KL_{\beta}.$ This will give us
\begin{equation}
    \KRT \sim_{\text{homotopy}} \cup_{\alpha} \A_{\alpha} \cup_{\beta} \L_{\beta} \cup_{j} \A_{j} \cup_{k} \L_{k}.
\end{equation}
But because homotopy is stronger than homology, this will imply
\begin{equation}
    \KRT \sim_{\text{homology}} \cup_{\alpha} \A_{\alpha} \cup_{\beta} \L_{\beta} \cup_{j} \A_{j} \cup_{k} \L_{k}.
\end{equation}
Because homology is an equivalence relation, and because we have $\KRT \sim_{\text{homology}} \RT$, we obtain
\begin{equation}
    \RT \sim_{\text{homology}} \cup_{\alpha} \A_{\alpha} \cup_{\beta} \L_{\beta} \cup_{j} \A_{j} \cup_{k} \L_{k}.
\end{equation}
The fact that the RT surface is minimal in its homology class gives
\begin{equation}
    \area(\RT) \leq \area\left[ \cup_{\alpha} \A_{\alpha} \cup_{\beta} \L_{\beta} \cup_{j} \A_{j} \cup_{k} \L_{k} \right]
    \leq \area(\KRT) - \log(2) \times (\text{total \# of kinks}).
\end{equation}
From this formula immediately follows the universal bound \eqref{eq:AdS3-bound-2}, because the total number of kinks in each KRT surface ($\KRT(A)$ or $\KRT(B)$) is equal to the number of boundaries of the entanglement wedge cross-section. We therefore have
\begin{align}
    S_R(A:B) - I(A:B)
        & = \frac{\KRT(A) - \RT(A)}{4 G_N} + \frac{\KRT(B) - \RT(B)}{4 G_N} + o\left(\frac{1}{G_N}\right) \nonumber \\
        & \geq \frac{\log(2) \times (\text{\# of $A$ kinks})}{4 G_N} + \frac{\log(2) \times (\text{\# of $B$ kinks})}{4 G_N} + o\left(\frac{1}{G_N}\right) \nonumber \\
        & = \frac{2 \log(2) \times (\text{\# of cross-section boundaries})}{4 G_N} + o\left(\frac{1}{G_N}\right),
\end{align}
which is exactly the bound claimed in \eqref{eq:AdS3-bound-2}.

We now complete the proof of inequality \eqref{eq:AdS3-bound-2} by proving three lemmas. In the first lemma, we prove that any kinked asymptotic geodesic in the Poincar\'{e} disk has the aforementioned ``homotopy shrinking property,'' i.e., that it is homotopic to a smooth asymptotic geodesic whose area is smaller by at least $\log(2)$ times the number of kinks. In the second and third lemmas, we use the fact that every hyperbolic $2$-manifold is universally covered by the Poincar\'{e} disk to prove the analogous statements for kinked asymptotic geodesics and kinked geodesic loops in arbitrary hyperbolic $2$-manifolds.

\begin{lemma}
    \label{lem:poincare-disk}
    Let $\tilde{\KA}$ be a non-self-intersecting, piecewise-geodesic curve in the Poincar\'{e} disk with the following properties: (see figure \ref{fig:krt-poincare-disk} for a sketch)
        \begin{enumerate}[(i)]
            \item Each geodesic segment meets the next segment at a right angle.
            \item All right angles ``point in the same direction'' in the sense that an ant crawling along $\tilde{\KA}$ will have to turn in the same direction each time it encounters a right angle.
            \item $\tilde{\KA}$ has well-defined endpoints at infinity in the sense that there is a smooth asymptotic geodesic $\tilde{\A}(t)$ and a parametrization $t$ of $\tilde{\KA}$ such that the hyperbolic distance $|\tilde{\A}(t) - \tilde{\KA}(t)|$ remains bounded in the limits $t \rightarrow \pm \infty.$
        \end{enumerate}
        
        Then:
        \begin{enumerate}[(1)]
            \item The geodesic $\tilde{\A}$ does not intersect $\tilde{\KA}$, and the homotopy region between the two is convex.
            \item The homotopy region between $\tilde{\KA}$ and $\tilde{\A}$ can be tiled by hyperbolic right-angled pentagons, with one pentagon for each kink in $\tilde{\KA}$.
            \item The area difference between $\tilde{\KA}$ and $\tilde{\A}$ is lower-bounded by $\log(2)$ times the number of kinks in $\tilde{\KA}.$
        \end{enumerate}
\end{lemma}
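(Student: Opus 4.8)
The plan is to prove the three claims in order; the geometric heart is the pentagon tiling of claim (2), after which claims (1) and (3) are comparatively cheap. For claim (1), I would argue directly from convexity: the geodesic $\tilde{\A}$ separates the Poincar\'{e} disk into two half-planes, and since $\tilde{\KA}$ shares its two ideal endpoints with $\tilde{\A}$ by hypothesis (iii) and bends consistently in a single direction by hypothesis (ii), the kinks must point \emph{into} the region bounded by the two curves. That region then has interior angle $\pi/2$ at each kink (hypothesis (i)) and interior angle $0$ at each of the two ideal endpoints, so every interior angle is at most $\pi/2 < \pi$. A region in the hyperbolic disk bounded by geodesic segments all of whose interior angles are less than $\pi$ is convex, so in particular $\tilde{\A}$ cannot re-enter it and the two curves do not cross.

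For claim (2) I would build the tiling explicitly from common perpendiculars. Label the kinks $v_1, \dots, v_k$ in order along $\tilde{\KA}$, with geodesic edges $e_0, \dots, e_k$ (where $e_0, e_k$ are semi-infinite and $e_i$ joins $v_i$ to $v_{i+1}$), and write $E_i$ for the complete geodesic carrying $e_i$. For each interior edge I drop the unique common perpendicular between $E_i$ and $\tilde{\A}$, meeting them orthogonally at points $m_i \in E_i$ and $n_i \in \tilde{\A}$; for the two semi-infinite edges the corresponding common perpendicular degenerates to the shared ideal endpoint, so that the points $m_0, n_0$ (resp.\ $m_k, n_k$) sit at infinity. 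The tile associated with kink $v_i$ is then the pentagon $v_i\, m_{i-1}\, n_{i-1}\, n_i\, m_i$, whose five sides run along $e_{i-1}$, the common perpendicular $m_{i-1} n_{i-1}$, the arc of $\tilde{\A}$ from $n_{i-1}$ to $n_i$, the common perpendicular $n_i m_i$, and $e_i$. All five interior angles are right angles --- the one at the apex $v_i$ by the kink condition (i), the other four by construction of the common perpendiculars --- so this is a right-angled hyperbolic pentagon, degenerating for $i = 1$ and $i = k$ to the one-ideal-vertex pentagons already encountered in figure \ref{fig:pentagons-intervals}, for which \eqref{eq:apbmc} still applies.

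For claim (3) the computation is then immediate. In the pentagon at $v_i$, the two sides meeting at the right-angled apex are the adjacent sides $a_i = |v_i m_{i-1}|$ and $b_i = |v_i m_i|$ of the law of cosines \eqref{eq:law-of-cosines}, and the arc $c_i = |n_{i-1} n_i|$ of $\tilde{\A}$ is the unique side non-adjacent to both. Because each interior edge $e_i$ is split at $m_i$ between the two pentagons flanking it, the sides along $\tilde{\KA}$ reassemble to $\area(\tilde{\KA}) = \sum_i (a_i + b_i)$ and the sides along $\tilde{\A}$ reassemble to $\area(\tilde{\A}) = \sum_i c_i$, whence $\area(\tilde{\KA}) - \area(\tilde{\A}) = \sum_{i=1}^k (a_i + b_i - c_i)$. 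Applying the pentagon inequality \eqref{eq:apbmc} termwise bounds this below by $\log(2)$ times the number $k$ of kinks.

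The step I expect to be the main obstacle is establishing that the tiling of claim (2) is genuinely \emph{valid}, rather than merely formally writing down pentagons. Concretely, I must show that each common-perpendicular foot $m_i$ lands on the true edge segment $e_i$ and not on its geodesic extension, that consecutive common perpendiculars do not cross, and that the feet $n_i$ appear in monotone order along $\tilde{\A}$ --- only then do the pentagons partition the homotopy region without gaps or overlaps. I expect all of these to follow from the convexity proved in claim (1): since the region is convex and meets each line $E_i$ in exactly the segment $e_i$, nearest-point projection onto the bounding geodesic $\tilde{\A}$ is monotone and order-preserving, which pins the feet to the edges and forbids the perpendiculars from crossing. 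Turning this convexity picture into rigorous control of the nearest-point geometry is where the real work lies, together with the mild regularization (taking degenerate pentagons as limits of non-degenerate ones) needed to justify \eqref{eq:apbmc} for the two end tiles.
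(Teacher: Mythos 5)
Your overall architecture --- common perpendiculars dropped from each edge's carrying geodesic to $\tilde{\A}$, one right-angled pentagon per kink, then a termwise application of \eqref{eq:apbmc} --- is the same as the paper's, and your claim (3) bookkeeping (edges split at the feet $m_i$, lengths reassembling into $\sum_i (a_i + b_i - c_i)$, degenerate end-tiles handled by regularization) is correct \emph{given} the tiling. But there are two genuine gaps, and they sit exactly where the paper does its real work.

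First, your argument for claim (1) is circular. You speak of ``the region bounded by the two curves'' and assert that the kinks ``must point into'' it, then invoke the criterion that a geodesic polygon with all interior angles less than $\pi$ is convex. But until you know that $\tilde{\KA}$ and $\tilde{\A}$ do not cross, there is no single well-defined region bounded by them, and ``the kinks point inward'' is precisely the orientation-relative-to-$\tilde{\A}$ statement that claim (1) asserts; hypothesis (ii) only says the turns are mutually consistent, not that they are consistent with respect to $\tilde{\A}$. Note also that at a transversal crossing of two geodesics every adjacent region has an angle less than $\pi$ at the crossing, so an interior-angle criterion by itself cannot detect or exclude crossings. The paper avoids this circularity with a global argument on the boundary circle: each segment $\eta_j$ is extended to a complete geodesic, its pair of ideal endpoints is recorded as an oriented boundary arc $\phi(\eta_j)$, and non-self-intersection together with hypothesis (iii) forces all of these arcs to lie inside the single arc from $p$ to $q$; non-crossing and convexity then follow from the fact that two geodesics in the disk intersect iff their ideal endpoints alternate around the circle.

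Second, for claim (2) you correctly isolate the key difficulty --- that each foot $m_i$ lands on the edge segment $e_i$ itself rather than on its geodesic extension, that consecutive perpendiculars do not cross, and that the feet $n_i$ are monotone along $\tilde{\A}$ --- but you explicitly leave it as an expectation (``where the real work lies''), resting it on the convexity of claim (1), which as noted is itself unproven in your write-up. The paper closes this step with a concrete contradiction argument that reuses the boundary-arc machinery: if the perpendicular $\rho_j$ met the extension of $\eta_j$ outside the segment (say in its ``future''), then the next segment $\eta_{j+1}$ would branch off before $\rho_j$, forcing its arc $\phi(\eta_{j+1})$ outside the arc from $p$ to $q$, contradicting part (1). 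So one and the same tool fills both of your gaps; without it, or a rigorous substitute for the nearest-point-projection argument you sketch, the proposal is an outline of the correct proof rather than a proof.
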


\begin{figure}
    \centering
    \makebox[\textwidth][c]{
	\subfloat[\label{fig:krt-poincare-disk}]{
	    \includegraphics{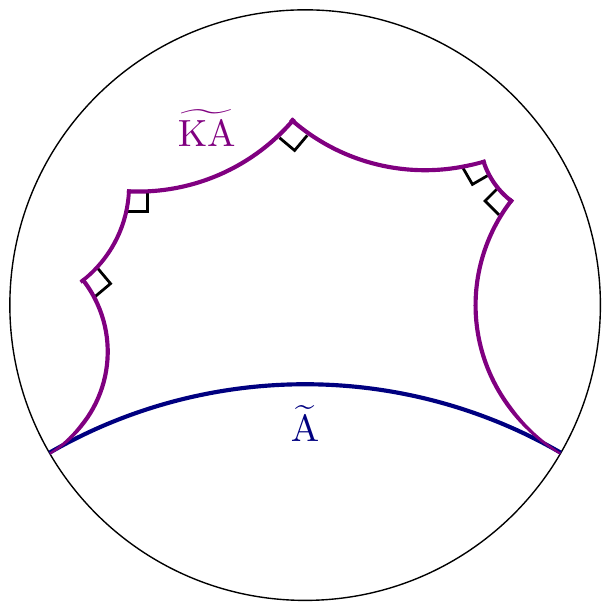}
	}
	\hspace{0.4em}
	\subfloat[\label{fig:phi-map-poincare-disk}]{
    	\includegraphics{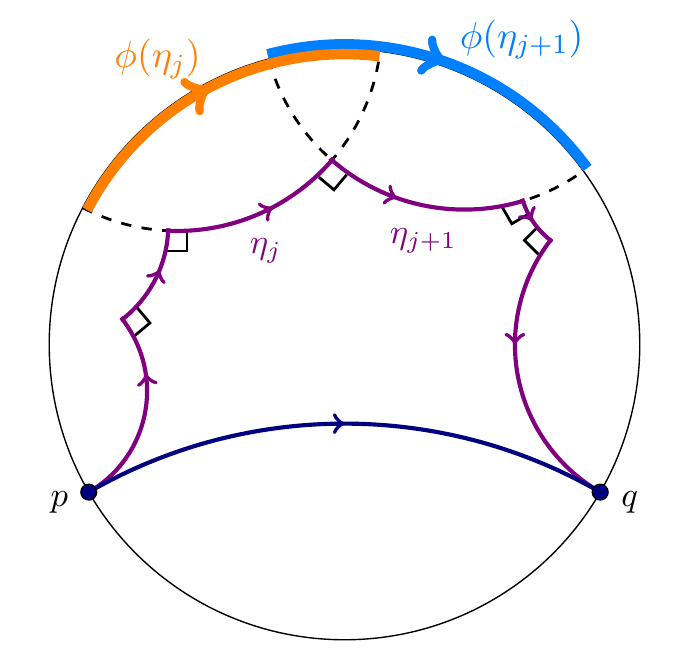}
	}
	}
    \caption{(a) $\tilde{\KA}$ is a piecewise-geodesic curve in the Poincar\'{e} disk satisfying properties (i)-(iii) of lemma \ref{lem:poincare-disk}. $\tilde{\A}$ is the unique smooth geodesic that has the same boundary endpoints as $\tilde{\KA}.$ The curve $\tilde{\KA}$ we have drawn here has finitely many geodesic segments, but the number of geodesic segments could in principle be infinite. (b) A choice of orientation on $\tilde{\KA}$, together with a visual representation of the map $\phi$ from geodesic segments to boundary segments that is defined in the proof of lemma \ref{lem:poincare-disk}. The boundary points of $\tilde{\A}$ have been labeled $p$ and $q$.}
\end{figure}

\begin{proof}

    We will break the proof up in terms of claims (1)-(3) of the lemma statement.
    
    \begin{enumerate}[(1)]
        \item First, let's assign $\tilde{\KA}$ an orientation so that an ant crawling along $\tilde{\KA}$ will have to turn right each time it encounters a kink. This orientation is sketched in figure \ref{fig:phi-map-poincare-disk}. We define a map $\phi$ from geodesic segments of $\tilde{\KA}$ to oriented segments of the boundary of the Poincar\'{e} disk as follows. Each geodesic segment can be extended to a full, asymptotic geodesic that has a ``future'' and ``past'' endpoint with respect to the chosen orientation of $\tilde{\KA}.$ If $\eta$ labels a geodesic segment on $\tilde{\KA}$, then $\phi(\eta)$ will be the clockwise-oriented boundary segment that starts at the ``past'' endpoint of the extension of $\eta$ and ends at the ``future'' endpoint of the extension of $\eta$. This map is also sketched in figure \ref{fig:phi-map-poincare-disk}.
        
        The geodesic segments of $\tilde{\KA}$ are naturally ordered with respect to its orientation; let's call the ordered, possibly finite sequence of segments $\{\eta_j\}.$ The map $\phi$ sends this sequence to an ordered sequence of clockwise-oriented boundary segments $\{\phi(\eta_j)\}.$ Any two consecutive segments $\phi(\eta_j)$ and $\phi(\eta_{j+1})$ will have an intersection bounded by the clockwise endpoint of $\phi(\eta_j)$ and the counterclockwise endpoint of $\phi(\eta_{j+1})$ --- again consult figure \ref{fig:phi-map-poincare-disk}.
        
        Because $\tilde{\KA}$ has a well-defined future endpoint $q$ --- the future endpoint of the smooth geodesic $\tilde{\A}(t)$ --- the clockwise endpoints of the segments $\phi(\eta_j)$ must tend to $q$ in the limit $j \rightarrow \infty$.\footnote{If $\tilde{\KA}$ only has finitely many segments, then the clockwise endpoint of the last $\phi(\eta_j)$ must be $q$.} Similarly, because $\tilde{\KA}$ has a well-defined past endpoint $p$, the counterclockwise endpoints of the segments $\phi(\eta_j)$ must tend to $p$ in the limit $j \rightarrow -\infty$. From this fact, and the fact that $\tilde{\KA}$ is non-self-intersecting, we may conclude that each boundary segment $\phi(\eta_j)$ lies entirely in the clockwise-oriented boundary segment that starts at $p$ and ends at $q$. Because the boundary segments $\phi(\eta_j)$ translate clockwise as $j$ increases, if the clockwise endpoint of $\phi(\eta_j)$ ever \emph{passes} the endpoint $q$, then the segments will need to do a full loop around the boundary to be consistent with the $j\rightarrow \infty$ limit; such a loop would necessarily cause $\tilde{\KA}$ to self-intersect in the bulk. An analogous line of reasoning holds for the counterclockwise endpoints. 
        
        The fact that the boundary segments $\phi(\eta_j)$ all lie within the clockwise boundary segment $p \rightarrow q$ is equivalent to claim (1), that $\tilde{\KA}$ does not intersect $\tilde{\A}$ --- since two geodesics in the Poincar\'{e} disk only intersect if their boundary points ``alternate'' --- and that the homotopy region is convex --- because each right angle must be oriented ``inwards'' toward the geodesic $\tilde{\A}$, as in figure \ref{fig:krt-poincare-disk}.
        
        \item 
        For any two non-intersecting asymptotic geodesics in the Poincar\'{e} disk, there is a unique third geodesic that intersects each of the first two orthogonally. If the original two geodesics have a common endpoint at the boundary, then the mutually intersecting geodesic is just the degenerate point at infinity.
        
        Let $\eta_{j}$ be one of the geodesic segments of $\tilde{\KA}$. Let $\rho_j$ be the unique geodesic segment that connects the full geodesic extension of $\eta_j$ to the full geodesic $\tilde{\A}$ orthogonally. We will show that the intersection of $\rho_j$ with the geodesic extension of $\eta_j$ must lie within the segment $\eta_j$ itself.
        
        \begin{figure}
        \centering
            \includegraphics{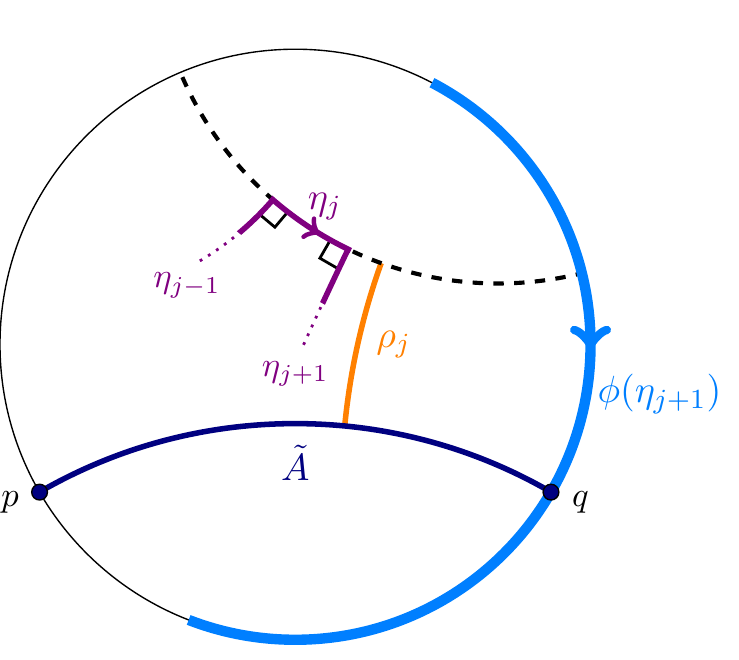}
            \caption{Here $\tilde{\A}$ is the smooth geodesic connecting the endpoints of $\tilde{\KA}$, and $\eta_{j-1},$ $\eta_{j},$ $\eta_{j+1}$ are segments of $\tilde{\KA}.$ The curve $\rho_j$ is the unique geodesic segment orthogonally intersecting both $\tilde{\A}$ and the geodesic extension of $\eta_j$. If we assume that $\rho_j$ is in the ``future'' of the segment $\eta_j$, as sketched here, then $\phi(\eta_{j+1})$ is not contained within the clockwise boundary segment connecting $p$ and $q$; this contradicts part (1) of the proof of lemma \ref{lem:poincare-disk}. An analogous contradiction arises if $\rho_j$ is in the ``past'' of $\eta_j$, which proves that $\rho_j$ intersects the segment $\eta_j$.}
            \label{fig:connecting-lines-poincare-disk}
        \end{figure}
        
        Suppose, toward contradiction, that $\rho_j$ intersects the geodesic extension of $\eta_j$ in a part of the geodesic extension that does not lie in the segment $\eta_j$. Suppose further, without loss of generality, that the point where $\rho_j$ intersects the geodesic extension lies in the ``future'' of the segment $\eta_j$ with respect to the orientation chosen in part (1) of this proof. This is sketched in figure \ref{fig:connecting-lines-poincare-disk}.
        
        The segment $\eta_{j+1}$, since it stems off of $\eta_j$ \emph{before} reaching $\rho_j$, must have the property that the boundary segment $\phi(\eta_{j+1})$ extends beyond the boundary segment lying clockwise between $p$ and $q$. See figure \ref{fig:connecting-lines-poincare-disk} for a sketch. This contradicts part (1) of this proof; we conclude that $\rho_j$ must intersect $\eta_j$.
        
        Tiling the homotopy region between $\tilde{\KA}$ and $\tilde{\A}$ is now simple. We draw all of the geodesic segments $\{\rho_j\}$ that lie orthogonally between $\tilde{\A}$ and the $\tilde{\KA}$ segments $\{\eta_j\}$; the tiles divided by these geodesics, sketched in figure \ref{fig:pentagon-tiles-poincare-disk}, are all hyperbolic right-angled pentagons, possibly with the last tiles in the sequence being degenerate right-angled pentagons if $\{\eta_j\}$ is a finite or half-finite sequence.
        
        \begin{figure}
            \centering
            \includegraphics[scale=1.3]{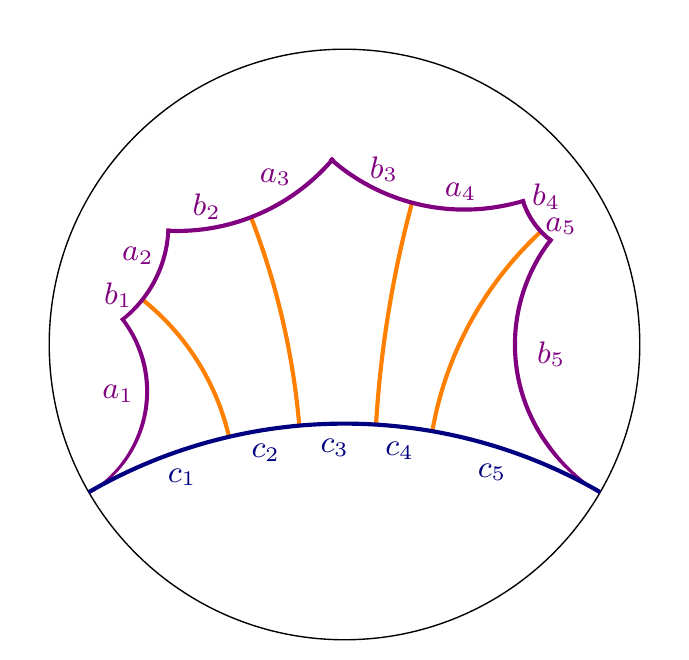}
            \caption{A pentagonal tiling of the homotopy region between the two curves originally sketched in figure \ref{fig:krt-poincare-disk}, formed by drawing the orthogonal geodesic segments joining each segment of $\tilde{\KA}$ to $\tilde{\A}$. Every intersection in this figure is right-angled, except for the two intersections at infinity.}
            \label{fig:pentagon-tiles-poincare-disk}
        \end{figure}
        
        \item 
        Each pentagon in the tiling we have just constructed has, as one of its vertices, a kink of $\tilde{\KA}$. The side of the pentagon opposite that vertex is a segment of the geodesic $\tilde{\A}$. If we label the two sides of the pentagon adjacent to the kink $a_j$ and $b_{j}$, and label the opposite side $c_j$ --- see again figure \ref{fig:pentagon-tiles-poincare-disk} --- then we have
        \begin{equation}
            \area(\tilde{\KA}) - \area(\tilde{\A})
                = \sum_{j} (a_j + b_j - c_j).
        \end{equation}
        Applying inequality \eqref{eq:apbmc}, we obtain the desired bound
    \begin{equation}
            \area(\tilde{\KA}) - \area(\tilde{\A})
                \geq \log(2) \times (\text{\# of kinks}).
        \end{equation}
        
    \end{enumerate}
\end{proof}

\begin{lemma}
    \label{lem:homotopy-shrinking-asymptotic}
    Let $\Sigma$ be a complete hyperbolic $2$-manifold without cusps, and let $\KA$ be a non-self-intersecting, piecewise-geodesic curve on $\Sigma$ satisfying properties (i)-(iii) of lemma \ref{lem:poincare-disk}. To condition (iii) we add the requirement that the geodesic defining the endpoints of $\KA$ does not have a limit cycle, i.e., both ends of $\KA$ genuinely go off ``to infinity.''
    
    Then:
    \begin{enumerate}[(1)]
        \item $\KA$ is homotopic to a geodesic $\A$ with the same asymptotic endpoints as $\KA$.
        \item The homotopy region can be tiled by right-angled hyperbolic pentagons, with each kink in $\KA$ being a vertex of its own pentagon.
        \item The area of $\KA$ exceeds the area of $\A$ by at least $\log(2)$ times the number of kinks in $\KA$.
    \end{enumerate}
\end{lemma}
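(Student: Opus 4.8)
The plan is to reduce the statement to Lemma \ref{lem:poincare-disk} by passing to the universal cover, exactly as advertised in the text preceding the lemma. Since $\Sigma$ is a complete hyperbolic $2$-manifold it is isometric to $\mathbb{H}^2/\Gamma$, where $\mathbb{H}^2$ is the Poincar\'{e} disk and $\Gamma \subset \mathrm{PSL}_2(\mathbb{R})$ is a discrete, torsion-free group of deck transformations acting by isometries; the covering map $\pi : \mathbb{H}^2 \to \Sigma$ is a local isometry. First I would lift $\KA$ to a connected curve $\tilde{\KA}$ in $\mathbb{H}^2$. This lift is again non-self-intersecting: if $\tilde{\KA}(t_1) = \tilde{\KA}(t_2)$ then $\KA(t_1) = \KA(t_2)$, forcing $t_1 = t_2$ by simplicity of $\KA$. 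Properties (i) and (ii) of Lemma \ref{lem:poincare-disk} --- that consecutive geodesic segments meet at right angles all turning the same way --- are purely local and are inherited by $\tilde{\KA}$, since $\pi$ carries geodesics to geodesics and preserves angles.

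The crux is to verify property (iii) for the lift, namely that $\tilde{\KA}$ limits to two \emph{distinct} ideal points $p, q \in \partial \mathbb{H}^2$ and remains at bounded distance from the geodesic $\tilde{\A}$ joining them. This is where the hypotheses of the lemma do their work. The geodesic $\A$ defining the endpoints of $\KA$ lifts to a complete geodesic $\tilde{\A}$; the requirement that $\A$ has no limit cycle --- that both of its ends genuinely exit to the ends of $\Sigma$ rather than spiraling into a closed geodesic --- guarantees that $\KA$ and $\A$ have a well-posed notion of asymptotic endpoints, so that the bounded-distance comparison of condition (iii) makes sense and its lift pins down limiting ideal points for $\tilde{\KA}$, while the absence of cusps prevents those two ideal points from coinciding at a common parabolic fixed point. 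Granting $p \neq q$, the assumed bounded-distance condition for $\KA$ relative to $\A$ transfers through the local isometry to $\tilde{\KA}$ relative to $\tilde{\A}$, and bounded hyperbolic distance from the geodesic line $\tilde{\A}$ forces $\tilde{\KA}$ to share its endpoints $p$ and $q$. I expect this step --- pinning down the asymptotic behavior of the lift and showing its endpoints are distinct --- to be the main obstacle, and the precise place where completeness and the no-cusp, no-limit-cycle assumptions are indispensable.

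With (i)-(iii) established for $\tilde{\KA}$, Lemma \ref{lem:poincare-disk} applies verbatim and yields a smooth geodesic $\tilde{\A}$ with endpoints $p, q$, disjoint from $\tilde{\KA}$, cobounding a convex homotopy region tiled by right-angled pentagons (one per kink) with $\area(\tilde{\KA}) - \area(\tilde{\A}) \geq \log(2) \times (\text{\# of kinks})$. I would then project back down. Setting $\A = \pi \circ \tilde{\A}$ gives a geodesic with the prescribed asymptotic endpoints, and the upstairs homotopy composes with $\pi$ to a homotopy between $\KA$ and $\A$, establishing claim (1). Because $\pi$ is a local isometry the parametrized lengths are preserved, so $\area(\KA) = \area(\tilde{\KA})$ and $\area(\A) = \area(\tilde{\A})$; since kinks are local and lift bijectively, the kink count is unchanged, and claim (3) follows at once. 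For claim (2) the pentagonal tiling projects to a tiling of $\pi(\tilde{D})$, where $\tilde{D}$ is the convex homotopy region; the one point requiring care is the injectivity of $\pi$ on $\tilde{D}$, so that the projected pentagons are embedded and disjoint, which I would verify using that distinct $\Gamma$-translates of the simple curve $\tilde{\KA}$ are pairwise disjoint (torsion-freeness forbids interior fixed points). In any case, only claim (3) is needed for the Markov gap bound \eqref{eq:AdS3-bound-2}, and it descends unconditionally through the local isometry regardless of whether the projection is a global embedding.
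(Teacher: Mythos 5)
Your proposal follows the same route as the paper's proof: lift $\KA$ to the universal cover, verify hypotheses (i)--(iii) of lemma \ref{lem:poincare-disk} for the lift, apply that lemma, and project the geodesic, homotopy, and tiling back down. But the step you yourself identify as the crux --- transferring condition (iii) to the lift --- is executed incorrectly as written. You fix a single complete lift $\tilde{\A}$ of the endpoint-defining geodesic $\A$ and assert that the bounded-distance condition ``transfers through the local isometry'' to the pair $(\tilde{\KA}, \tilde{\A})$. This is false in general: bounded distance on $\Sigma$ only says that each point of $\tilde{\KA}$ lies within bounded distance of \emph{some} $\Gamma$-translate of $\tilde{\A}$, and the relevant translate near one end of $\tilde{\KA}$ need not be the relevant translate near the other end. (Concretely, if $\tilde{\KA}$ has ideal endpoints $p$ and $q$ and $g$ is a deck transformation, the projection of the geodesic from $p$ to $gq$ is a geodesic on $\Sigma$ at bounded distance from both ends of $\KA$, yet no single lift of it stays near both ends of $\tilde{\KA}$.) The paper isolates exactly this subtlety in a footnote: condition (iii) for $\tilde{\KA}$ is verified using \emph{local} lifts of each end of the endpoint-defining geodesic near the asymptotic boundary, and the geodesic with the same endpoints as $\tilde{\KA}$ --- the one lemma \ref{lem:poincare-disk} actually compares against --- is generally \emph{not} a lift of the geodesic chosen downstairs. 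The conclusion you want survives, but your stated mechanism does not; each end must be lifted separately.

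A second, smaller gap concerns the projection step. You deduce $\area(\A) = \area(\tilde{\A})$ from preservation of parametrized length and defer injectivity worries to the tiling region, ultimately hedging that claim (3) ``descends unconditionally.'' The inequality in claim (3) does survive a non-injective projection (the downstairs geodesic can only get shorter), but claim (1) as stated --- homotopy to a geodesic \emph{with the same asymptotic endpoints as} $\KA$ --- does not: if a deck transformation preserved $\tilde{\A}$, its projection would be a closed geodesic with no asymptotic endpoints at all. The paper closes this by proving $\Pi$ is injective on $\tilde{\A}$, and this is where the two extra hypotheses actually earn their keep: the no-limit-cycle condition rules out $\A$ being compact (a curve at bounded distance from the proper curve $\KA$ cannot wrap forever around a closed geodesic), and cusplessness is invoked for the claim that non-compact geodesics on such a surface are non-self-intersecting. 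Your attribution of these hypotheses purely to pinning down distinct ideal endpoints of the lift therefore misplaces where they enter the argument; the endpoint-distinctness issue is real, but it is not the only place the hypotheses are needed.
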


\begin{proof}
    It is a fundamental theorem in hyperbolic geometry that every complete hyperbolic $2$-manifold is universally covered by the Poincar\'{e} disk. Let $\Pi : \mathbb{H}_{2} \rightarrow \Sigma$ be a covering map, fix a point $p$ on $\KA$, and choose a point $\tilde{p}$ in the fiber $\Pi^{-1}(p)$. It is a basic fact in the theory of covering spaces that there is a unique lift of $\KA$ to the Poincar\'{e} disk that passes through $\tilde{p}$. This lift can be constructed by pulling a small neighborhood of $p$ in $\KA$ back to a small neighborhood of $\tilde{p}$ using $\Pi^{-1}$, after which the lift extends uniquely away from the pulled-back segment. The fact that no other lifts of $\KA$ can pass through $\tilde{p}$ follows from the fact that $\KA$ is non-self-intersecting, together with the fact that $\Pi$ is a local homeomorphism.
    
    Let us denote this lift by $\tilde{\KA}.$ By assumption, $\KA$ satisfies conditions (i)-(iii) of lemma \ref{lem:poincare-disk}, and all three of these conditions are preserved under the lift.\footnote{\label{footnote:multi-geodesic-caveat}There is one caveat here worth emphasizing: while condition (iii) being satisfied for $\KA$ guarantees that condition (iii) is satisfied for $\tilde{\KA}$ --- one can show this by defining the endpoints of $\tilde{\KA}$ using local lifts of the $\KA$ ``endpoint-defining geodesic'' in a neighborhood of the asymptotic boundary --- it will generally not be the case that the geodesic defining the endpoints of $\tilde{\KA}$ is a lift of the geodesic we have chosen arbitrarily to define the endpoints of $\KA$. There is only one geodesic in $\mathbb{H}_{2}$ with the same endpoints as $\tilde{\KA}$, while there are many geodesics in $\mathbb{H}_{2}$ whose projections to $\Sigma$ have the same endpoints as $\KA$. This is because geodesics ending at points that are \textit{images} of the $\KA$ endpoints under the quotient will map to geodesics with the same endpoints as $\tilde{\KA}$.} So $\tilde{\KA}$ satisfies conditions (i)-(iii) of lemma \ref{lem:poincare-disk}, and we may conclude that there is a geodesic $\tilde{\A}$ in $\mathbb{H}_{2}$ with the same endpoints as $\tilde{\KA}$, that is homotopic to $\tilde{\KA}$ with the homotopy region tiled by right-angled hyperbolic pentagons, such that the area of $\tilde{\KA}$ exceeds the area of $\tilde{\A}$ by $\log(2)$ times the number of kinks in $\tilde{\KA}$. Homotopies in the universal cover $\mathbb{H}_{2}$ are preserved under projection down to $\Sigma$, so $\KA$ is homotopic to $\A \equiv \Pi(\tilde{\A})$.
    
    We now show that the length of $\KA$ on $\Sigma$ is the same as the length of $\tilde{\KA}$ on $\mathbb{H}_{2}$. The covering map $\Pi$ is a local isometry, so this could only fail to be true if $\Pi$ were non-injective on $\tilde{\KA}$. But if $\Pi$ were non-injective on $\tilde{\KA}$, then the segment of $\tilde{\KA}$ lying between any two points with the same $\Pi$-image would be mapped to a loop in $\KA$; since $\KA$ was assumed to be non-self-intersecting, this cannot be the case and thus we have $\area(\KA) = \area(\tilde{\KA}).$
    
    One can also show that $\Pi$ must be injective on $\tilde{\A}$. Every non-compact geodesic on a cuspless hyperbolic $2$-manifold is non-self-intersecting.\footnote{If a non-compact geodesic had a nontrivial loop, then the element of the fundamental group corresponding to that loop would induce an infinite-order isometry of the covering space preserving the lift of the geodesic; the quotient of the lift by the group of covering transformations, which ought to be isomorphic to the original geodesic, would then have to be compact. More details on this ``fundamental group $\leftrightarrow$ covering transformation'' correspondence are given in the proof of lemma \ref{lem:homotopy-shrinking-loop}.} If $\A$ were compact, then it could not be homotopic to the asymptotic geodesic $\KA$. So $\A$ must be non-self-intersecting, which implies that $\Pi$ is injective on $\tilde{\A}$, which implies $\area(\A) = \area(\tilde{\A})$. Putting this together gives
    \begin{equation}
        \area(\KA) - \area(\A) = \area(\tilde{\KA}) - \area(\tilde{\A}) \geq \log(2) \times \text{\# kinks},
    \end{equation}
    so we may conclude that claims (1)-(3) in the lemma statement hold with $\A \equiv \Pi(\tilde{\A}).$\footnote{In the statement of the lemma, we assumed the existence of a geodesic on $\Sigma$ with the same asymptotic endpoints as $\KA$ (condition iii). This was just a formal way of saying that $\KA$ has well-defined endpoints at infinity. As we emphasized in footnote \ref{footnote:multi-geodesic-caveat}, that geodesic has no relation to $\A$, which is special in that it not only has the same boundary endpoints as $\KA$, but is homotopic to $\KA$ as well.}
\end{proof}

\begin{lemma}
    \label{lem:homotopy-shrinking-loop}
    Let $\Sigma$ be a complete hyperbolic $2$-manifold without cusps, and let $\KL$ be a non-self-intersecting, piecewise-geodesic loop on $\Sigma$ satisfying conditions (i) and (ii) of lemma \ref{lem:poincare-disk}.
    
    Then:
    \begin{enumerate}[(1)]
        \item $\KL$ is homotopic to a unique geodesic loop $\L$.
        \item The homotopy region can be tiled by right-angled hyperbolic pentagons, with each kink in $\KL$ being a vertex of its own pentagon.
        \item The area of $\KL$ exceeds the area of $\L$ by at least $\log(2)$ times the number of kinks in $\KL$.
    \end{enumerate}
\end{lemma}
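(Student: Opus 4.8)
The plan is to reduce the loop case to the situation already handled in Lemma~\ref{lem:poincare-disk} by passing to the universal cover, exactly as in Lemma~\ref{lem:homotopy-shrinking-asymptotic}; the essential new feature is that the lift of a closed loop is an \emph{infinite, periodic} piecewise-geodesic curve rather than a single asymptotic one, so the real work lies in setting up the fundamental-group/covering-transformation dictionary correctly and then accounting for the periodicity. First I would fix the covering map $\Pi : \mathbb{H}_2 \rightarrow \Sigma$, a basepoint $p \in \KL$, and a lift $\tilde{p} \in \Pi^{-1}(p)$, and build the lift $\tilde{\KL}$ by unwinding $\KL$ from $\tilde{p}$. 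Since $\KL$ is non-contractible --- a contractible loop cannot be homotopic to any closed geodesic, and this degenerate case does not arise for the homotopically essential KRT loop components of interest --- traversing $\KL$ once carries $\tilde{p}$ to $g\,\tilde{p}$, where $g \in \pi_1(\Sigma)$ is the covering transformation associated to $[\KL]$. The curve $\tilde{\KL} = \bigcup_{n \in \mathbb{Z}} g^n(\text{one period})$ is then connected, non-self-intersecting (because $\KL$ is simple), and invariant under $g$, with $\langle g \rangle$ its full stabilizer (here I use that a simple loop represents a primitive element). Because $\Sigma$ is complete and \emph{cuspless}, every nontrivial element of $\pi_1(\Sigma)$ acts as a hyperbolic isometry, so $g$ has a well-defined invariant axis $\tilde{\L} \subset \mathbb{H}_2$ between its two fixed points $p^{\pm} \in \partial \mathbb{H}_2$. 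This axis descends to the unique closed geodesic $\L = \Pi(\tilde{\L})$ in the free homotopy class of $\KL$, which, together with the standard uniqueness of the geodesic representative, establishes claim (1).

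Next I would check that $\tilde{\KL}$ satisfies hypotheses (i)--(iii) of Lemma~\ref{lem:poincare-disk} with the comparison geodesic taken to be the axis $\tilde{\A} \equiv \tilde{\L}$. Conditions (i) and (ii) are local and are inherited from $\KL$ under the local isometry $\Pi$. For condition (iii), the key point is that the endpoints of $\tilde{\KL}$ at infinity are exactly the fixed points $p^{\pm}$ of $g$: for any $x_0 \in \tilde{\KL}$ the iterates $g^{\pm n} x_0$ converge to $p^{\pm}$, and since one period of $\tilde{\KL}$ is a compact arc of bounded diameter, the whole curve accumulates only at $p^{\pm}$ --- the same endpoints as the axis. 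The distance from $\tilde{\KL}$ to $\tilde{\L}$ is a $g$-invariant, hence periodic and therefore bounded, function along $\tilde{\KL}$, which supplies the required bounded-distance parametrization. Lemma~\ref{lem:poincare-disk} then applies: $\tilde{\L}$ is disjoint from $\tilde{\KL}$, the homotopy region between them is convex and tiled by right-angled pentagons with one pentagon per kink of $\tilde{\KL}$, and each pentagon obeys $a_j + b_j - c_j \geq \log(2)$.

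Finally I would descend back to $\Sigma$, taking care of the periodicity. The tiling, the axis, and the loop are all $g$-invariant, so a fundamental domain for the $\langle g \rangle$-action on the homotopy region contains precisely the pentagons coming from one period of $\tilde{\KL}$, i.e.\ exactly $(\text{\# of kinks in }\KL)$ pentagons, since one period of $\tilde{\KL}$ is one copy of $\KL$. Because $\KL$ is simple, $\Pi$ is injective on one period of $\tilde{\KL}$ (a repeated value would force a loop in $\KL$), and because the geodesic representative of a simple free homotopy class is itself simple, $\Pi$ is injective on one period of $\tilde{\L}$; hence $\Pi$ restricted to the fundamental domain is an area-preserving bijection onto the homotopy region between $\KL$ and $\L$. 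Summing the pentagon inequality over the fundamental domain gives
\begin{equation}
    \area(\KL) - \area(\L) = \sum_{j} \left( a_j + b_j - c_j \right) \geq \log(2) \times (\text{\# of kinks in } \KL),
\end{equation}
which is claims (2) and (3).

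The step I expect to be the main obstacle is the correct identification of the comparison geodesic with the \emph{axis} of the deck transformation $g$, together with the verification of condition (iii) for the infinite lift --- confirming that $\tilde{\KL}$ limits precisely to the two fixed points of $g$ and remains within bounded distance of the axis, so that Lemma~\ref{lem:poincare-disk} can be invoked at all. Once that identification is secured, the periodicity bookkeeping and the injectivity of $\Pi$ over one period are routine, closely mirroring the corresponding arguments in the proof of Lemma~\ref{lem:homotopy-shrinking-asymptotic}.
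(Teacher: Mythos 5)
Your proof is correct and follows essentially the same route as the paper's: lift to the universal cover, pass to the deck transformation determined by the loop, use smoothness and cusplessness to conclude it is a hyperbolic isometry, take its axis as the comparison geodesic, verify condition (iii) of lemma \ref{lem:poincare-disk} via the bounded (periodic) distance to the axis, invoke that lemma, and project a fundamental domain of the resulting pentagonal tiling back down to $\Sigma$. Your only additions are the explicit flagging of the contractible-loop degenerate case and of the injectivity of $\Pi$ on one period, details the paper's proof leaves implicit.
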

\begin{proof}
    Claim (1) in this lemma is a special case of a classic theorem in hyperbolic geometry, which states that \emph{every} homotopy class of non-self-intersecting loops on a hyperbolic manifold has a unique geodesic representative. Our proof of the present lemma works by following the proof of that theorem --- specifically the proof presented in section 1.2 of \cite{farb2011primer} --- and adding details as needed to prove claims (2) and (3).
    
    As in the previous lemma, let $\Pi : \mathbb{H}_{2} \rightarrow \Sigma$ be a covering of $\Sigma$ by the Poincar\'{e} disk. Fix a point $p$ in the loop $\KL$, and let $\tilde{p}$ be a point in the fiber $\Pi^{-1}(p).$ There is a unique lift of $\KL$ that passes through $\tilde{p}$, which we call $\tilde{\KL}$. Because conditions (i) and (ii) of lemma \ref{lem:poincare-disk} are preserved under lifting, $\tilde{\KL}$ is a kinked geodesic in the hyperbolic disk. Because $\KL$ is non-self-intersecting, so is $\tilde{\KL}.$ The lifting procedure is sketched in figure \ref{fig:KL-lift}.
    
    \begin{figure}
        \centering
        \includegraphics{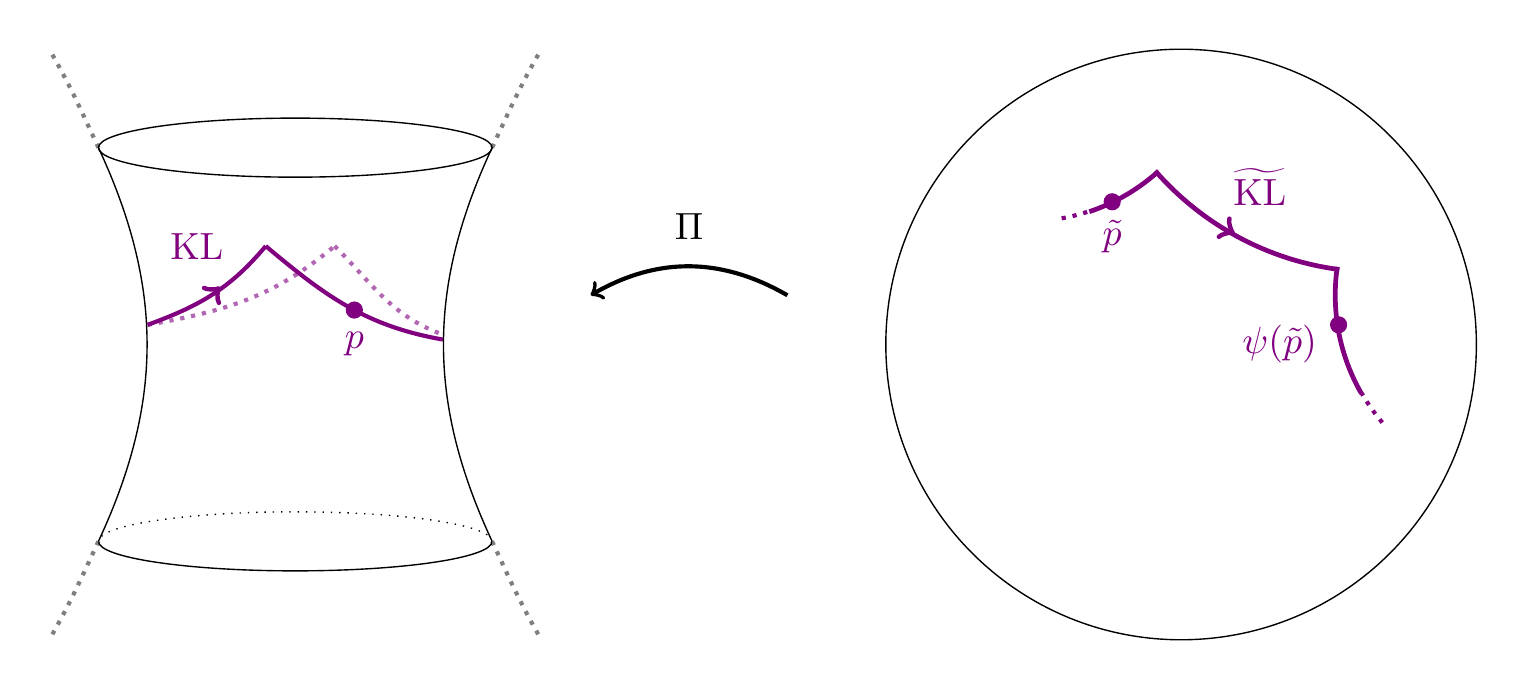}
        \caption{The left-hand side of this figure shows a piece of a hyperbolic $2$-manifold containing a piecewise-geodesic loop $\KL$ satisfying the conditions of lemma \ref{lem:homotopy-shrinking-loop}. The right-hand side of the figure shows a portion of the unique lift $\tilde{\KL}$ passing through the point $\tilde{p}$ in the universal cover. We have also indicated the action of the covering isometry $\psi$ induced by the loop $\KL$, which maps $\tilde{\KL}$ to itself.}
        \label{fig:KL-lift}
    \end{figure}
    
    There is a classic correspondence in algebraic topology between elements of the fundamental group of a space and automorphisms of its universal cover. An automorphism of a universal cover $\Pi : \tilde{M} \rightarrow M$ is a map $\psi$ from $\tilde{M}$ to itself that satisfies $\Pi \circ \psi = \Pi.$ These automorphisms are often called \emph{deck transformations}. It is a theorem in algebraic topology that any deck transformation of a universal cover is completely determined by its action on a single point.\footnote{This follows from the fact that deck transformations are by definition lifts of the covering map, together with the theorem that any two lifts of a map whose domain is connected must agree everywhere if they agree at a single point. This is proved as proposition 1.34 in \cite{hatcher2001algebraic}.} So after choosing a point $p$ in the base space and a point $\tilde{p}$ in its fiber $\Pi^{-1}(p)$, a loop $\gamma$ based at $p$ determines the deck transformation that sends $\tilde{p}$ to the endpoint of the unique path-lift\footnote{By path-lift we mean we lift only a single copy of $\gamma$ starting and ending at $p$, so that $\tilde{\gamma}$ is a compact curve starting at $\tilde{p}$ and ending at some other point in the fiber $\Pi^{-1}(p)$.} $\tilde{\gamma}$ that starts at $\tilde{p}.$
    
    If we assign $\KL$ an orientation, then we can think of it as an element of the fundamental group $\pi_{1}(\Sigma; p).$ Following the algorithm of the preceding paragraph, the loop $\KL$ passing through $p$, together with our choice of point $\tilde{p} \in \Pi^{-1}(p)$ and our choice of orientation, determines a deck transformation $\psi : \mathbb{H}_2 \rightarrow \mathbb{H}_{2}$. This deck transformation necessarily maps the lift $\tilde{\KL}$ to itself. The action of $\psi$ on a sample lift is sketched in figure \ref{fig:KL-lift}.
    
    Because the covering map $\Pi : \mathbb{H}_{2} \rightarrow \Sigma$ is not only a local homeomorphism but also a local isometry, the deck transformation $\psi$ must be an isometry of $\mathbb{H}_{2}$ to itself. The isometries of $\mathbb{H}_{2}$ are categorized into three types: parabolic, elliptic, and hyperbolic.\footnote{The hyperbolic isometries are sometimes called ``loxodromic,'' especially in the analogous classification of isometries of $\mathbb{H}_{3}.$} See chapter 1 of \cite{marden2007outer} for an introduction to this classification. The assumption that $\Sigma$ is smooth rules out the existence of elliptic deck transformations, because these have fixed points in the universal cover that lead to conical defects in the quotient. The assumption that $\Sigma$ has no cusps rules out the existence of parabolic deck transformations, because parabolic deck transformations have no lower bound on the distance between a point and its image; taking the quotient of $\mathbb{H}_{2}$ by a parabolic deck transformation always creates a cusp. We conclude that $\psi$ must be a hyperbolic map from $\mathbb{H}_{2}$ to itself.

    Hyperbolic isometries of the Poincar\'{e} disk have a geodesic axis that is mapped to itself, with one boundary endpoint acting as a source and the other as a sink. The action of a particular hyperbolic isometry is sketched in figure \ref{fig:hyperbolic-isometry}. We denote the geodesic axis of $\psi$ by $\tilde{\L}.$
    
    \begin{figure}
        \centering
        \includegraphics{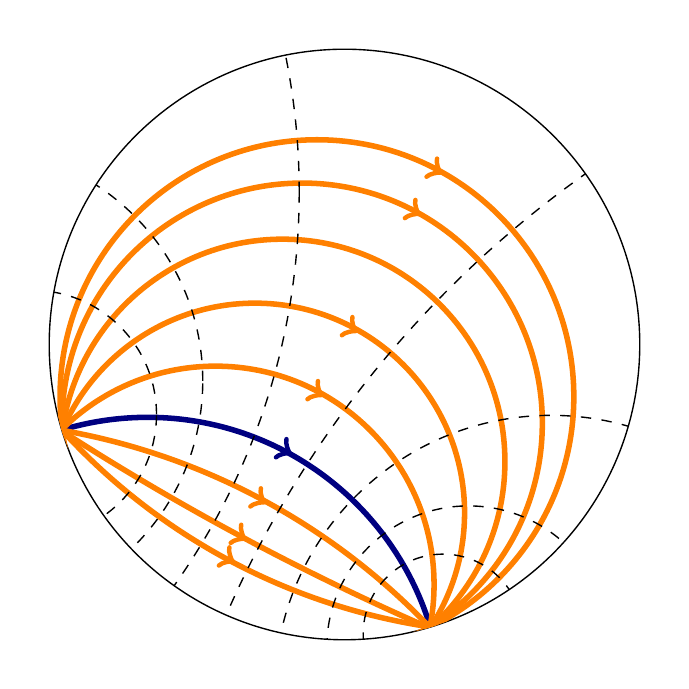}
        \caption{An example of a hyperbolic isometry acting on the Poincar\'{e} disk. The blue curve is a geodesic and is called the ``axis'' of the isometry. The orange curves are invariant curves of the isometry; they are mapped into themselves under a ``push'' whose direction is indicated by the arrows. The dashed lines are mapped one into the next by the isometry. The invariant curves are spaced apart by about half an AdS radius, as are the dashed lines.}
        \label{fig:hyperbolic-isometry}
    \end{figure}
    
    The lift $\tilde{\KL}$ must have well-defined boundary endpoints given by the endpoints to $\tilde{\L}$. This follows from the fact that both are preserved by the isometry; we can pick a fundamental domain for the action of $\psi$ on $\tilde{\L}$, and parametrize $\tilde{\L}$ with a parameter $t$ that gives equal time to each image of the fundamental domain. If we parametrize $\tilde{\KL}$ similarly, then the distance $|\tilde{\KL}(t) - \tilde{\L}(t)|$ is bounded in the limits $t \rightarrow \pm \infty.$ This is because the fundamental domains of $\tilde{\KL}$ and $\tilde{\L}$, being compact, are a bounded distance apart from one another; since $\psi$ is an isometry, that bounded distance is preserved in the limits $t \rightarrow \pm \infty.$
    
    So $\tilde{\KL}$ is a non-self-intersecting, kinked geodesic satisfying conditions (i)-(iii) of lemma \ref{lem:poincare-disk} --- we assumed conditions (i) and (ii), and proved condition (iii) of lemma \ref{lem:poincare-disk} in the preceding paragraph. We can then apply lemma \ref{lem:poincare-disk} to show that $\tilde{\KL}$ is homotopic to $\tilde{\L}$ and that the homotopy region is tiled by right-angled pentagons. If $\rho$ is any of the geodesic segments connecting $\tilde{\KL}$ to $\tilde{\L}$, then the portion of the tiling lying between $\rho$ and its image $\psi(\rho)$ is a fundamental domain for the tiling with respect to $\psi$; this tiling projects down to a tiling of the homotopy region between $\KL$ and $\L \equiv \Pi(\L)$, which proves the lemma. This ``fundamental domain tiling'' argument is sketched in figure \ref{fig:fundamental-domain-tiling}.
    
    \begin{figure}
        \centering
        \includegraphics{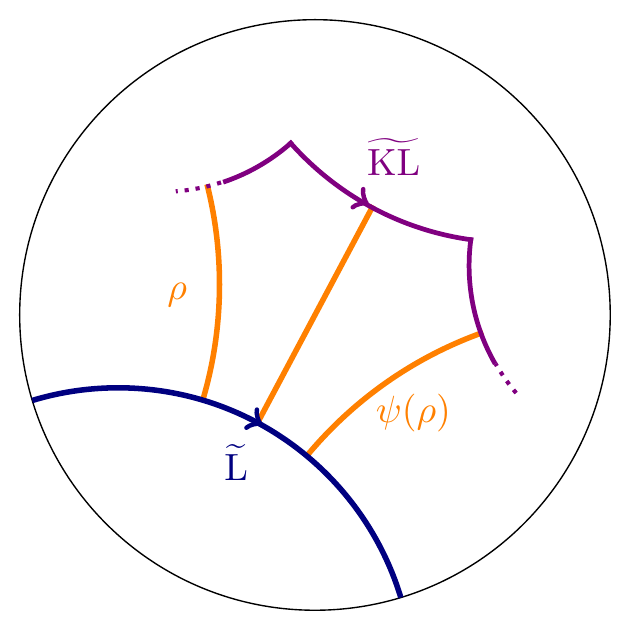}
        \caption{As explained in the proof of lemma \ref{lem:homotopy-shrinking-loop}, the lift $\tilde{\KL}$ gives rise to an isometry $\psi$ with a geodesic axis $\tilde{\L}$. Lemma \ref{lem:poincare-disk} tells us the homotopy region between $\tilde{\KL}$ and $\tilde{\L}$ can be tiled by right-angled pentagons by drawing the orthogonal geodesics between $\tilde{\L}$ and each segment of $\tilde{\KL}$. Here, we have marked one of the geodesics $\rho$ and its isometry image $\psi(\rho)$. The portion of the tiling between these two curves is a fundamental domain for the tiling under the action of $\psi.$}
        \label{fig:fundamental-domain-tiling}
    \end{figure}
    
\end{proof}

\section{Bulk matter and higher dimensions}
\label{sec:generalizations}

It is natural to ask whether inequality \eqref{eq:AdS3-bound-2}, derived in the previous section for time-symmetric states in pure AdS$_3$ gravity, has an analogue in more general theories of gravity. In this section, we take up that question on a few fronts.

In subsection \ref{subsec:spherical-matter}, we give a simple calculation showing that in \emph{any} asymptotically AdS$_3$ spacetime with a spherically symmetric moment of time symmetry, the area contribution to the Markov gap for two antipodal, equal-size intervals in the moment of time symmetry approaches the universal value $[S_R - I]_{\text{area}} = \log(2) / G_N$ in the limit as the intervals become large. Because the Markov gap is saturated for two vacuum intervals only in this limit (cf. section \ref{subsec:two-special-cases}), this implies that inequality \eqref{eq:AdS3-bound-2} holds at the classical level for arbitrary spherically symmetric perturbations to the state of two intervals in the vacuum, to all orders in perturbation theory. In subsection \ref{subsec:quantum-matter}, we show that the inequality holds even if the perturbation contains quantum matter, up to some mild assumptions about limits of bulk entanglement entropies; we also comment on some general features of the classical and quantum contributions to perturbations of the Markov gap. In subsection \ref{subsec:higher-dimensions}, we discuss a possible generalization of \eqref{eq:AdS3-bound-2} to higher dimensions, where the number of cross-section boundaries is replaced by the codimension-$3$ area of the cross-section boundary.

We do not address the issue of non-time-symmetric states here. However, we suspect that the techniques that must be developed to generalize inequality \eqref{eq:AdS3-bound-2} to states with classical matter and higher dimensions will naturally lend themselves to bounding the Markov gap in general, non-time-symmetric states; we comment on this research direction further in the discussion (section \ref{subsec:future-generalizations}).

\subsection{Spherically symmetric matter in three dimensions}
\label{subsec:spherical-matter}

The setting for this subsection will be an arbitrary asymptotically AdS$_3$ spacetime with a spherically symmetric moment of time symmetry.\footnote{When we say ``spherically symmetric,'' we will mean that the spacetime possesses not only a rotational symmetry along an angle $\theta$ but also a $\theta \mapsto -\theta$ reflection symmetry.} The metric on the moment of time symmetry can be written
\begin{equation} \label{eq:spher-sym-spacetime}
    ds^2 = f(r) dr^2 + r^2 d\theta^2
\end{equation}
with $r \in [0, \infty)$. Vacuum AdS$_3$ is the spacetime with $f(r) = 1 / (1+r^2).$ The requirement that the spacetime be asymptotically AdS$_3$ imposes that in a large-$r$ expansion, $f(r)$ differs from the vacuum value $1 / (1+r^2)$ only at order $O(1/r^4)$.

In any such metric, let us consider the Ryu-Takayanagi surface corresponding to a boundary region of angular extent $\Delta \theta$. Because of the spherical symmetry, the geometry of the surface depends only on the angular extent of the boundary region and not on its position; it will also have a reflection symmetry in $\theta$. The coordinate position of such a Ryu-Takayanagi surface is determined by an equation $\theta = h(r)$; the induced metric on this surface is
\begin{equation}
    ds^2 = (f(r) + r^2 h'(r)^2) dr^2,
\end{equation}
and the induced volume form is
\begin{equation} \label{eq:volume-form}
    \sqrt{g_{\text{RT}}}
        = \sqrt{f(r) + r^2 h'(r)^2}.
\end{equation}
If $\theta = h(r)$ is a minimal surface, then it satisfies the Euler-Lagrange equations of motion for the induced volume form, which are
\begin{equation} \label{eq:intermed-EL}
    \frac{r^2 h'(r)}{\sqrt{f(r) + r^2 h'(r)^2}} = \text{const.}
\end{equation}
Let $r_*$ be the minimal value of $r$ attained by the minimal surface $\theta = h(r).$ At $r=r_*$, we have $h'(r) = \infty$; from this relation, we can fix the constant in \eqref{eq:intermed-EL} and solve for $h'(r)^2$ to obtain the equation of motion
\begin{equation} \label{eq:final-EL}
    h'(r)^2 = \frac{r_*^2 f(r)}{r^2 (r^2 - r_*^2)}.
\end{equation}

The total length of a minimal surface contained between a turning point $r=r_*$ and a radial cutoff $r = 1/\epsilon$ can be obtained by plugging \eqref{eq:final-EL} into the volume form \eqref{eq:volume-form}, multiplying by two (because $\theta$ is two-valued as a function of $r$), and integrating. We obtain the expression
\begin{equation} \label{eq:regulated-length}
    L_{\epsilon}(r_*)
        = 2 \int_{r_*}^{1/\epsilon} dr\, r \sqrt{\frac{f(r)}{r^2 - r_*^2}}.
\end{equation}
To compute the mutual information of two antipodal, equal-size intervals $A$ and $B$, we will need to consider two types of minimal surfaces. The two ``disconnected'' surfaces individually homologous to $A$ and $B$, whose turning point we will label $r_*$, and the ``connected'' surfaces that are homologous only to the union $A \cup B$, whose turning point we will label $r_*^c$.\footnote{Explicit formulas for these turning points in terms of the angular extents of the intervals can be obtained, but we will not need them for the present calculation.} Assuming that the intervals are large enough that the connected surface is globally minimal, the classical contribution to the mutual information is given via the Ryu-Takayanagi formula by
\begin{equation}
    I(A:B)_{\text{area}} = \lim_{\epsilon \rightarrow 0} \frac{2 L_{\epsilon}(r_*) - 2 L_{\epsilon}(r_*^c)}{4 G_N}.
\end{equation}
Plugging in formula \eqref{eq:regulated-length} for the lengths, we may write this as
\begin{equation}
    I(A:B)_{\text{area}} = \frac{1}{G_N} \lim_{\epsilon \rightarrow 0} \left[ \int_{r_*}^{1/\epsilon} dr\, r \sqrt{\frac{f(r)}{r^2 - r_*^2}} - \int_{r_*^c}^{1/\epsilon} dr\, r \sqrt{\frac{f(r)}{r^2 - (r_*^c)^2}} \right].
\end{equation}
Partially combining the integrals gives the expression
\begin{equation}
    I(A:B)_{\text{area}} = \frac{1}{G_N} \left[ \int_{r_*}^{r_*^c} dr\, r \sqrt{\frac{f(r)}{r^2 - r_*^2}} + \int_{r_*^c}^{\infty} dr\, r \sqrt{f(r)} \left( \frac{1}{\sqrt{r^2 - r_*^2}}  - \frac{1}{\sqrt{r^2 - (r_*^c)^2}}  \right)\right],
\end{equation}
where we have removed all $\epsilon$ dependence because both integrals in this expression converge.

Thanks to the spherical symmetry of the metric, the entanglement wedge cross-section is guaranteed to be a line of constant $\theta$ with endpoints at $r = r_*^c$. The length of this surface is given by
\begin{equation}
    L(\sigma_{A:B}) = 2 \int_{0}^{r_*^c} \sqrt{f(r)}.
\end{equation}
Using the Dutta-Faulkner formula $S_{R,\text{area}} = L(\sigma_{A:B}) / 2 G_N$, we may now write the classical contribution to the Markov gap as
\begin{align}
    [S_R(A:B) - I(A:B)]_{\text{area}}
        & = \frac{1}{G_N} \left[ \int_{0}^{r_*^c} \sqrt{f(r)}
            - \int_{r_*}^{r_*^c} dr\, r \sqrt{\frac{f(r)}{r^2 - r_*^2}} \right. \nonumber \\
            & \qquad \left. - \int_{r_*^c}^{\infty} dr\, r \sqrt{f(r)} \left( \frac{1}{\sqrt{r^2 - r_*^2}}  - \frac{1}{\sqrt{r^2 - (r_*^c)^2}}  \right)
        \right].
\end{align}
In the limit where the two intervals become large and collectively take up the entire boundary, spherical symmetry guarantees the limits $r_* \rightarrow 0$ and $r_*^c \rightarrow \infty$. It is straightforward to take the limit $r_* \rightarrow 0$ in the above expression; the first two integrals cancel in this limit, and the third integral simplifies, giving
\begin{equation}
    \lim_{r_* \rightarrow 0} [S_R(A:B) - I(A:B)]_{\text{area}}
        = \frac{1}{G_N} \int_{r_*^c}^{\infty} dr\, r \sqrt{f(r)} \left( \frac{1}{\sqrt{r^2 - (r_*^c)^2}} - \frac{1}{r}  \right).
\end{equation}
To study the $r_*^c \rightarrow \infty$ limit of this integral, it will be convenient to make the substitution $r \mapsto r_*^c \sec{\alpha}$, which removes $r_*^c$ from the limits of integration. Under this substitution, we have
\begin{equation}
    \lim_{r_* \rightarrow 0} [S_R(A:B) - I(A:B)]_{\text{area}}
        = \frac{r_*^c}{G_N} \int_{0}^{\pi/2} d\alpha\, \sqrt{f(r_*^c \sec{\alpha})} \sec{\alpha} (\sec{\alpha} - \tan{\alpha})
\end{equation}
In the $r_*^c \mapsto \infty$ limit, we may approximate $f(r_*^c \sec{\alpha})$ by its large-argument expansion $f(r_*^c \sec{\alpha}) \sim 1/(r_*^c \sec{\alpha})^2$. Making this approximation gives 
\begin{equation}
    \lim_{r_*^c \rightarrow \infty} \lim_{r_* \rightarrow 0} [S_R(A:B) - I(A:B)]_{\text{area}}
        = \frac{1}{G_N} \int_{0}^{\pi/2} d\alpha\, (\sec{\alpha} - \tan{\alpha})
        = \frac{\log(2)}{G_N}.
\end{equation}

The result of this calculation seems quite suggestive. For any asymptotically AdS$_3$ spacetime with a spherically symmetric moment of time symmetry, the classical contribution to the Markov gap of two large intervals approaches the lower bound of inequality \eqref{eq:AdS3-bound-2}. This seems to hint at some underlying universality in the Markov gap for two intervals, at least in holographic conformal field theories.

As indicated in the introduction to this section, this calculation also implies that the bound \eqref{eq:AdS3-bound-2} is respected at the classical level for arbitrary spherically symmetric perturbations to the state of two vacuum intervals, to all orders in perturbation theory. This follows from the fact that \eqref{eq:AdS3-bound-2} is saturated only in the limit $r_* \rightarrow 0, r_*^c \rightarrow \infty$; but in that limit, we have just shown that the Markov gap for any spherically symmetric metric approaches that of the vacuum.

\subsection{Quantum perturbations}
\label{subsec:quantum-matter}

In the presence of quantum matter, the Dutta-Faulkner formula is given by equation \eqref{eq:dutta-faulkner}. In small-$G_N$ perturbation theory, the minimum over quantum extremal surfaces appearing in the holographic entanglement entropy formula can be replaced by the generalized entropy of the classically minimal surface --- in fact, this calculation by Faulkner, Lewkowycz, and Maldacena in \cite{FLM} served as the prelude to the general quantum extremal surface formula \cite{QES}. While it is important to remember that this perturbative approach can miss important nonperturbative contributions coming from entanglement islands \cite{islands1, islands2} or large breakdowns of entanglement wedge reconstruction \cite{akers2019large}, it is still useful in regimes far from any phase transition where multiple quantum extremal surfaces vie for dominance. Analogously, in small-$G_N$ perturbation theory, the Dutta-Faulkner formula can be replaced by
\begin{equation}
    S_R(A:B) \approx \frac{2 \area(\sigma_{A:B})}{4 G_N} + S_{R, \text{bulk}}(\sigma_{A:B}),
\end{equation}
where $\sigma_{A:B}$ is the \emph{classical} entanglement wedge cross-section.

In this regime, the Markov gap can be written
\begin{equation} \label{eq:perturbative-DF}
    S_R(A:B) - I(A:B)
        \approx [S_R(A:B) - I(A:B)]_{\text{area}} + S_{R, \text{bulk}}(\sigma_{A:B})
                - I_{\text{bulk}}(A:B).
\end{equation}
We need to be a little careful about what each term in this equation means. The area term is just the sum of area differences for the appropriate classical KRT and RT surfaces. The reflected entropy term is also fairly straightforward: it is the reflected entropy of the quantum fields in the bulk entanglement wedge $\W(AB)$ subject to the bipartition induced by the cross-section $\sigma_{A:B}.$ The bulk term $I_{\text{bulk}}(A:B)$, however, is \emph{not} the mutual information of $\W(AB)$ subject to that bipartition. It is the sum of the entropies of the regions bounded by $\RT(A)$ and $\RT(B)$, minus the entropy of the region bounded by $\RT(AB)$; it is not actually the mutual information of a state of the bulk quantum fields.

\begin{figure}
    \centering
    \includegraphics{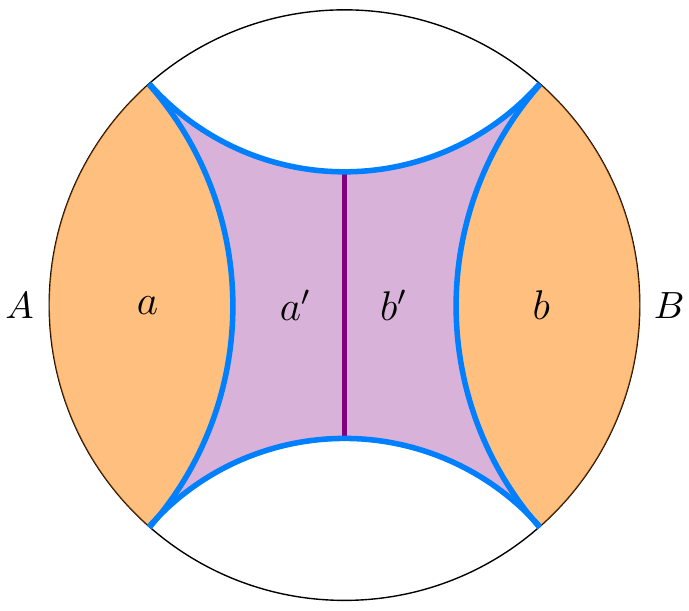}
    \caption{The bulk regions one needs to consider to compute the bulk term in equation \eqref{eq:perturbative-DF}. The region $a$ lies between $A$ and $\RT(A)$, $a'$ lies between $\RT(A)$ and $\KRT(A)$, and analogously for $b$ and $b'.$}
    \label{fig:bunch-of-bulk-regions}
\end{figure}

See figure \ref{fig:bunch-of-bulk-regions} for a representation of the various bulk regions that need to be considered in this calculation. $S_{R,\text{bulk}}(\sigma_{A:B})$ is the reflected entropy of the bulk state on $aa' \cup bb'$. $I_{\text{bulk}}(A:B)$ is the entropy of the bulk state on $a$ plus that of the state on $b$, minus the entropy of the bulk state on $a a' b b'$. Using nonnegativity of the Markov gap for general quantum states gives us
\begin{equation}
    S_{R, \text{bulk}}(\sigma_{A:B})
        = S_R(aa':bb') 
        \geq I(a a' : bb')
        = S(a a') + S(b b') - S(a a' b b').
\end{equation}
Applying this inequality to the bulk term appearing in equation \eqref{eq:perturbative-DF} gives the inequality
\begin{align}
    S_{R, \text{bulk}}(\sigma_{A:B})
                - I_{\text{bulk}}(A:B)
        & = S_{R, \text{bulk}}(\sigma_{A:B})
                - S(a) - S(b) + S(aa' bb')  \nonumber \\
        & \geq S(a a') + S(b b') - S(a) - S(b). \label{eq:bulk-markov-gap}
\end{align}
For general quantum states, the right-hand side of this inequality is not necessarily nonnegative. This may seem puzzling --- after all, as we noted in the introduction in equation \eqref{eq:SRmI-Sgen}, combining the area and bulk terms in equation \eqref{eq:perturbative-DF} with inequality \eqref{eq:bulk-markov-gap} gives a difference in generalized entropies that is guaranteed to be nonnegative when one uses the full, nonperturbative quantum extremal surface formula. At the level of perturbation theory, though, the potential negativity of the bulk contribution to \eqref{eq:perturbative-DF} seems like it could be a problem. However, this problem only arises when the area term in \eqref{eq:perturbative-DF} vanishes --- the perturbative bulk term cannot compete with a nonzero area term, since the area term contributes to the Markov gap at order $1/G_N$ while the perturbative bulk term contributes at order $1$. The potential issue in equation \eqref{eq:perturbative-DF} arising in perturbation theory when the area term vanishes is a manifestation of the principle explained in the first paragraph of this subsection, that perturbative holographic entropy formulas cannot be trusted near entanglement phase transitions.

For the case discussed in subsection \ref{subsec:spherical-matter}, however, where the boundary regions are equal-size antipodal intervals in a spherically symmetric time-slice, it seems reasonable to assume that the right-hand side of inequality \eqref{eq:bulk-markov-gap} will vanish in the limit as the intervals become large. In this limit regions $aa'$ and $a$ both approach a ``half-spacetime'' region bounded in the bulk by a line of constant $\theta$. While they approach this limit in different ways --- for example, the boundary of $aa'$ always has corners while the boundary of $a$ does not --- the bulk entropies appearing in \eqref{eq:bulk-markov-gap} are supposed to be renormalized entropies that capture universal contributions to the entanglement entropy without counting boundary divergences.\footnote{There are some subtleties to keep in mind here, because quantum field theory states on spatial regions with non-smooth boundaries --- such as $aa'$ --- tend to have divergences associated with those corners. Furthermore, higher-derivative corrections to the classical piece of the generalized entropy involve extrinsic curvature terms that are ill-defined at corners. Under a suitable regulation procedure, where the corners are smoothed out enough to regulate the corner divergences and to define the higher-derivative classical entropy, but not enough to appreciably change the area contribution to the entropy, it seems reasonable to hope that the generalized entropy of regions with boundary corners is renormalizable and UV-finite. See section 4.2 of \cite{quantum-maximin} for some discussion of this point.} Barring subtleties in this universality, we suspect that the quantities $S(aa') - S(a)$ and $S(bb') - S(b)$ for antipodal, equal-size intervals in a spherically symmetric state will vanish in the limit as those intervals become large. This establishes the claim made in the introduction to this section, that the bound $S_R - I \geq \log(2) / G_N$ holds in perturbation theory for such ``symmetric two-large-interval states'' even in the presence of quantum matter.

Proving an inequality like \eqref{eq:AdS3-bound-2} for general holographic states with quantum matter seems quite delicate. For states without quantum matter, one could imagine that a clever application of the weak or null energy condition could be used to guarantee inequality \eqref{eq:AdS3-bound-2} in arbitrary states. States with quantum matter are known to violate all the energy conditions satisfied by classical matter, however, and it seems possible that states with quantum matter could cause the area-term contribution to the Markov gap to violate inequality \eqref{eq:AdS3-bound-2}. For the bound to hold in such states, the bulk contribution $S_{R, \text{bulk}} - I_{\text{bulk}}$ would need to be large enough to make up for the deficit in the area term. If this is true, then it implies a rather delicate energy condition in bulk quantum field theories: whenever the energy configuration of the quantum matter is such that $[S_R - I]_{\text{area}}$ dips below the lower bound imposed by equation \eqref{eq:AdS3-bound-2}, the bulk contribution $S_{R, \text{bulk}} - I_{\text{bulk}}$ will need to be larger than the magnitude of that dip.

Should we expect this to be true? Maybe! It certainly seems like a rather intricate condition to impose on bulk quantum fields, especially since, as explained in the discussion surrounding equation \eqref{eq:bulk-markov-gap}, the bulk contribution $S_{R, \text{bulk}} - I_{\text{bulk}}$ is not the Markov gap of any bulk state and does not a priori need to be nonnegative. Determining whether $S_{R, \text{bulk}} - I_{\text{bulk}}$ must be nonnegative due to some other general principle would be an interesting avenue of investigation. Despite the intricacy of these conditions, the classical universality of \eqref{eq:AdS3-bound-2} suggested by the analysis of subsection \ref{subsec:spherical-matter} indicates to us that this inequality really is capturing some interesting universal feature of holographic entanglement. It is our \emph{hope}, then, that the inequality is truly universal or has some universal generalization. However, it is important to note that another classical holographic entropy inequality, the monogamy of mutual information \cite{hayden2013holographic, maximin}, does not generally hold in states with quantum matter \cite{quantum-maximin}.

We comment further on these points in the discussion (section \ref{sec:discussion}).

\subsection{Higher dimensions}
\label{subsec:higher-dimensions}

In inequality \eqref{eq:AdS3-bound-2}, the Markov gap is lower bounded by a universal constant times the number of boundaries in the entanglement wedge cross-section. Because the entanglement wedge cross-section is one-dimensional, its boundary is zero-dimensional, and the only geometric quantity we can associate to it is the number of components. In higher dimensions, the boundary of the entanglement wedge cross-section can be more complicated; it is a codimension-$3$ surface, and one natural generalization of the number of connected components is its codimension-$3$ volume. So we might expect that in $D+1$ bulk dimensions, the generalization of \eqref{eq:AdS3-bound-2} will be something like
\begin{equation} \label{eq:higher-D-inequality}
    S_R(A:B) - I(A:B) \geq \frac{C_{D}}{4 G_N} \times \area(\text{cross-section boundary})
\end{equation}
with $C$ a universal, dimension-dependent constant. If we define the area of a zero-dimensional surface to be its number of components, this bound reproduces \eqref{eq:AdS3-bound-2} in $D=2$ with $C_{2} = 2 \log(2).$

We have attempted to check \eqref{eq:higher-D-inequality} in the case $D>2$ by computing $S_R - I$ numerically for some simple families of states and minimizing the quantity $4 G_N (S_R - I) / \area.$ We hoped that we would find some universal minimum appearing in multiple, a priori unrelated families of states --- this is what happens in the case $D=2$, where both the ``two-interval'' states and the ``two-asymptotic boundaries of a wormhole'' states achieve, as a limit within their families, the minimum value $C_2 = 2 \log(2).$ Unfortunately, this is not the case for the families we have considered in the case $D>2$. This is not evidence \emph{against} the conjectured inequality \eqref{eq:higher-D-inequality}, but it does indicate that the situation above three bulk dimensions is subtle.

The two families we considered in four bulk dimensions were: (i) $A$ and $B$ are two equal-time strips on the boundary of AdS$_{D+1}$ in Poincar\'{e} coordinates; (ii) $A$ and $B$ are two equal-time antipodal caps on the boundary of AdS$_{D+1}$ in global coordinates. While these cases are equivalent in three bulk dimensions, they are not in higher dimensions --- for example, the mutual information between two strips is infinite because they ``touch'' at the point on the boundary sphere that isn't covered by the Poincar\'{e} coordinates (see \cite{fischler2013holographic} for details).

We will not reproduce the calculations here explicitly, because we do not find them particularly instructive. The goal is to compute the quantity
\begin{equation}
    \chi = 4 G_N \frac{S_R(A:B) - I(A:B)}{\area(\text{cross-section boundary})}.
\end{equation}
The mutual information between two strips in arbitrary dimension was computed analytically in \cite{fischler2013holographic}; the mutual information between two caps was computed numerically in \cite{colin2020large}. The methods of those two papers can be adapted quite straightforwardly to compute $S_R(A:B)$ and the codimension-3 area of the cross-section boundary. The final results, in arbitrary dimension $D>2$, are:
\begin{enumerate}
    \item When $A$ and $B$ are vacuum strips in boundary Poincar\'{e} coordinates, $\chi$ monotonically decreases with the size of the strips and monotonically increases with the separation between the strips. Its minimum value, achieved when the separation goes to zero and the strip size goes to infinity, is
    \begin{equation} \label{eq:strip-lower-bound}
        4 G_N \frac{S_R(A:B) - I(A:B)}{\area(\text{cross-section boundary})}
            \geq \frac{2}{D-2} \left[ 1 - \sqrt{\pi} \frac{\Gamma\left(\frac{D}{2(D-1)}\right)}{\Gamma\left(\frac{1}{2(D-1)}\right)}\right].
    \end{equation}
    One can check that the right-hand side goes to $2 \log(2)$ for $D \rightarrow 2.$
    
    \item When $A$ and $B$ are vacuum caps in boundary global coordinates, $\chi$ approaches the right-hand side of \eqref{eq:strip-lower-bound} in the limit as the caps become large and their separation becomes small. This isn't so surprising, because the strips become ``cap-like'' in global coordinates as they increase in size and decrease in separation. However, unlike in the case of strips, in the case of caps $\chi$ can be shown numerically to \emph{decrease} with the separation between the caps --- i.e., the minimal value of $\chi$ for caps is achieved when the caps are as small as possible while still having a connected entanglement wedge, a limit in which there is generally no analytical control of the calculation. (See section 2.2 of~\cite{colin2020large}.) There are, therefore, cap configurations that violate inequality \eqref{eq:strip-lower-bound}, and the minimum value of $\chi$ for antipodal caps has no nice analytic formula.
\end{enumerate}

From these considerations, it is clear that establishing an inequality like \eqref{eq:higher-D-inequality} in arbitrary dimensions is a nuanced undertaking. We mention some techniques that might be helpful for that problem in the discussion (section \ref{subsec:future-generalizations}).

\section{Holographic Markov recoveries and fixed area states}
\label{sec:recovery-models}

In this section, we revisit the Markov recovery processes introduced in section \ref{sec:info-theory}. The essential observation of that section was that the Markov gap $S_R - I$ for a state $\rho_{AB}$ is lower-bounded by a function of the fidelity of Markov recovery processes on the canonical purification. We reproduce the relevant inequalities here:
\begin{align}
    S_R(A:B) - I(A:B) \label{eq:BBstar-Markov-2}
        & \geq - \max_{\mathcal{R}_{B \rightarrow B B^*}} \log F(\rho_{A B B^*}, \mathcal{R}_{B \rightarrow B B^*}(\rho_{AB})). \\
    S_R(A:B) - I(A:B) \label{eq:AAstar-Markov-2}
        & \geq - \max_{\mathcal{R}_{A \rightarrow A A^*}} \log F(\rho_{A A^* B}, \mathcal{R}_{A \rightarrow A A^*}(\rho_{AB})).
\end{align}
Focusing on the first inequality, we argued in section \ref{subsec:bulk-gap} that even the best Markov recovery channel $\mathcal{R}_{B \rightarrow BB^*}$ will be unable to reproduce the boundary entanglement supporting the portion of the quantum extremal surface $\gamma_{AB}$ that lies outside the $B B^*$ entanglement wedge.\footnote{In the language of section \ref{sec:geometric-proof}, this is the non-cross-section portion of the ``kinked Ryu-Takayanagi'' surface $\KRT(A).$} We will call this surface $\J(A)$, referring to section \ref{subsec:bulk-gap}, where we called it a ``jagged surface'' based on the way it was depicted in figures \ref{fig:intervals-jagged-surfaces} and \ref{fig:wormhole-jagged-surfaces}. The failure of the optimal Markov recovery map to produce the entanglement required to support a geometric connection across $\J(A)$ implies an imperfect fidelity of recovery, and therefore a nonzero contribution to the right-hand side of inequality \eqref{eq:BBstar-Markov-2}.

The intuition given in the preceding paragraph is entirely qualitative. It does not give us any quantitative estimate for how the failure of $\mathcal{R}_{B \rightarrow B B^*}$ to create the entanglement supporting $\J(A)$ constrains the fidelity appearing in inequality \eqref{eq:BBstar-Markov-2}. The primary achievement of this section will be to provide such a quantitative estimate. We will argue, in a fixed-area-state toy model of the Markov recovery process, that any Markov recovery map $\mathcal{R}_{B \rightarrow B B^*}$ must have fidelity satisfying the upper bound
\begin{equation} \label{eq:FA-fidelity-bound}
    F(\rho_{ABB^*}, \mathcal{R}_{B \rightarrow B B^*}(\rho_{AB}))
        \leq e^{- \Delta}
                \left[{}_{2} F_1\left(\frac{1}{2}, - \frac{1}{2}; 2; e^{- \Delta} \right) \right]^2,
\end{equation}
with
\begin{equation} \label{eq:entropy-delta}
    \Delta = \frac{\area(\KRT(A)) - \area(\RT(A))}{4 G_N}
\end{equation}
being the entropy difference between the KRT and RT surfaces defined in the introduction. Plugging inequality \eqref{eq:FA-fidelity-bound} in to inequality \eqref{eq:BBstar-Markov-2} gives the lower bound
\begin{equation} \label{eq:final-fidelity-inequality}
    S_R(A:B) - I(A:B)
        \geq \Delta - 2 \log\left[ {}_{2} F_1\left(\frac{1}{2}, - \frac{1}{2}; 2; e^{- \Delta} \right) \right].
\end{equation}
Estimating the right-hand side of \eqref{eq:BBstar-Markov-2} for states with quantum matter --- whose boundary entropies include bulk entropy contributions and thus cannot be modeled as straightforwardly using fixed area states\footnote{However, see \cite{akers-penington} for some entropy calculations in fixed area states with bulk matter.} --- is a more intricate task that we do not undertake in this paper; however, we roughly expect that in suitable toy models, the right-hand side of equation \eqref{eq:entropy-delta} would be replaced by a difference of generalized entropies.

We pause to note that one can check, using $\Delta \geq 0$, that the right-hand side of \eqref{eq:final-fidelity-inequality} is greater than or equal to $\Delta$. If we take this toy model seriously, then we learn that $S_R - I$ is lower-bounded by the entropy difference $S_{\text{gen}}(\KRT(A)) - S_{\text{gen}}(\RT(A))$. At this point, we could declare success; applying the same analysis for the Markov chain $B \rightarrow A \rightarrow A^*$ would give us the two inequalities
\begin{align}
    S_R(A:B) - I(A:B)
        & \geq S_{\text{gen}}(\KRT(A)) - S_{\text{gen}}(\RT(A)). \label{eq:Markov-A-bound} \\
    S_R(A:B) - I(A:B) 
        & \geq S_{\text{gen}}(\KRT(B)) - S_{\text{gen}}(\RT(B)). \label{eq:Markov-B-bound}
\end{align}
If we look back to inequality \eqref{eq:SRmI-Sgen} from the introduction, we could interpret this analysis as teaching us that Markov recovery arguments can reproduce each term of inequality \eqref{eq:SRmI-Sgen} individually. This might, however, seem like a bit of a letdown; after all, since \eqref{eq:SRmI-Sgen} tells us that $S_R - I$ is lower-bounded by the \emph{sum} of these entropy differences, it is stronger than the two inequalities \eqref{eq:Markov-A-bound} and \eqref{eq:Markov-B-bound}. However, we shouldn't expect either of the inequalities \eqref{eq:Markov-A-bound} or \eqref{eq:Markov-B-bound} to be anywhere close to saturation. For one thing, the information inequalities \eqref{eq:BBstar-Markov-2} \& \eqref{eq:AAstar-Markov-2} are not guaranteed to be tight; it is entirely possible, in fact even generic, for the optimal recovery map $\mathcal{R}_{B \rightarrow BB^*}$ to have fidelity greater than $e^{-(S_R - I)}.$ There is also, as we explain in subsection \ref{subsec:tgt-recovery}, generically a significant gap in inequality \eqref{eq:FA-fidelity-bound}. Finally, the second term on the right-hand side of \eqref{eq:final-fidelity-inequality} is never exactly zero. These three sources of error will lead to significant gaps between the left- and right-hand sides of \eqref{eq:Markov-A-bound} and \eqref{eq:Markov-B-bound}, consistent with inequality \eqref{eq:SRmI-Sgen}.

In the following subsections, we give the details of the argument sketched above. In subsection \ref{subsec:tgt-recovery} we describe a particular state $\tilde{\rho}_{ABB^*}$, which we call the ``Too Good to be True'' or TGT state, and give physical arguments for the inequality
\begin{equation}
    F(\rho_{ABB^*}, \mathcal{R}_{B \rightarrow B B^*}(\rho_{AB})) \leq F(\rho_{ABB^*}, \tilde{\rho}_{ABB^*}).
\end{equation}
In subsection \ref{subsec:fixed-area}, we compute the fidelity $F(\rho_{ABB^*}, \tilde{\rho}_{ABB^*})$ for analogous states in fixed-area models, and reproduce inequality \eqref{eq:FA-fidelity-bound}.

\subsection{``Too Good to be True'' recovery}
\label{subsec:tgt-recovery}

The three party reduced state $\rho_{ABB^*}$ has, as its entanglement wedge, the spacelike portion of the canonical purification spacetime bounded in the bulk by $\RT(A^*)$ and on the boundary by $ABB^*$. We sketched a spacelike slice of this entanglement wedge in figure \ref{fig:intervals-jagged-surfaces} for the case where $\rho_{AB}$ is the density matrix of two equal-time intervals in the AdS$_3$ vacuum; we have reproduced a version of that figure with labeling more appropriate for our present purposes as figure \ref{fig:TGT-state}.

\begin{figure}
    \centering
    \includegraphics{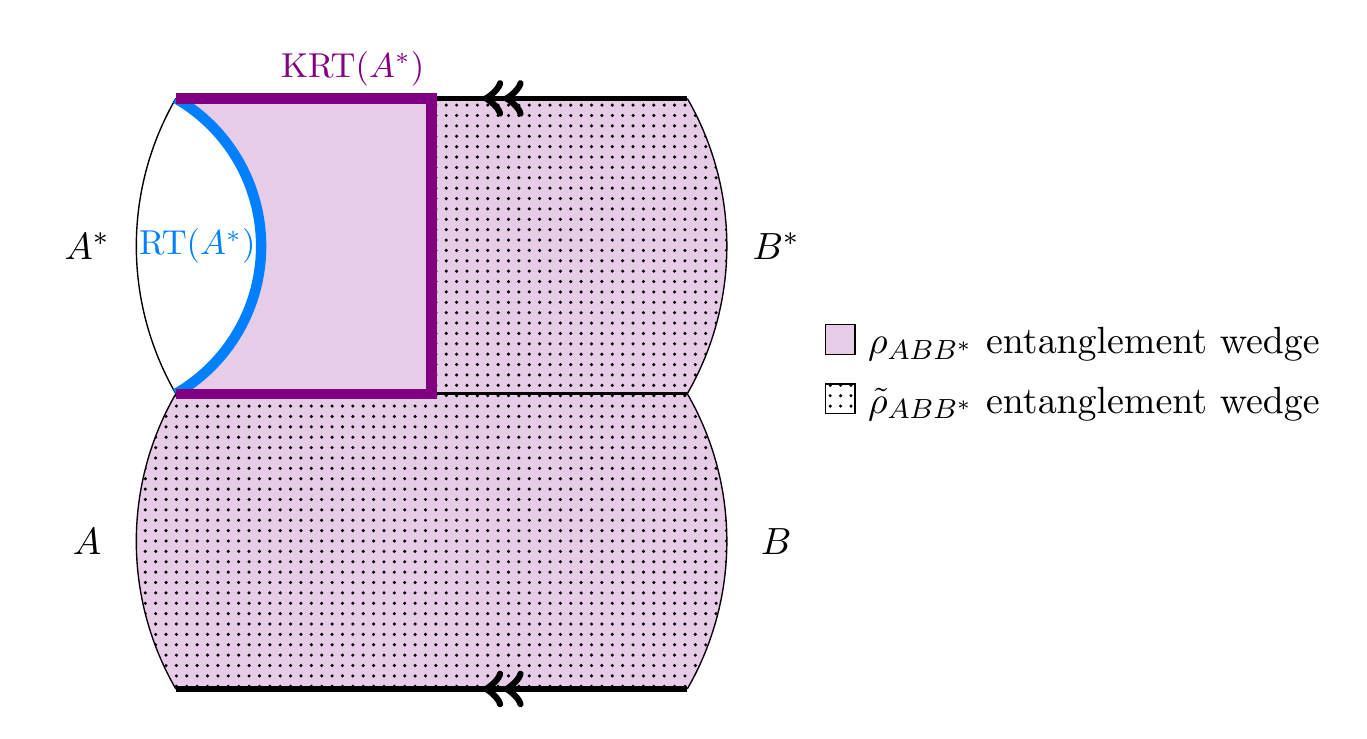}
    \caption{A sketch of the canonical purification for two equal-time intervals in the AdS$_3$ vacuum, including a sketch of two possible states of the system $ABB^*.$ The state $\rho_{ABB^*}$, obtained as a partial trace of the canonical purification, has $\RT(A^*)$ as its entanglement wedge boundary. The TGT state $\tilde{\rho}_{ABB^*}$ has $\KRT(A^*)$ as its entanglement wedge boundary.}
    \label{fig:TGT-state}
\end{figure}

By the ``Too Good to be True'' state, which we will denote $\tilde{\rho}_{ABB^*}$ and sometimes call the TGT state, we will mean a state whose bulk entanglement wedge is the union of the $A B$ and $B B^*$ entanglement wedges. In figure \ref{fig:TGT-state}, this is the bulk region filled with dots. In the language of section \ref{sec:geometric-proof}, this is a state whose entanglement wedge is bounded in the bulk by the KRT surface $\KRT(A^*).$

We pause, for a moment, to reflect on the definition of this state. Within the semiclassical theory defined on the spacetime dual to the canonical purification, the state $\tilde{\rho}_{ABB^*}$ is perfectly well defined --- one simply traces out the bulk QFT degrees of freedom lying between $\RT(A^*)$ and $\KRT(A^*)$. Boundary states, however, are not exactly dual to a single bulk spacetime; the path integrals preparing them admit multiple saddles that contribute to their physics, and tracing out a portion of the dominant saddle is not an operation that has an obvious boundary interpretation. When we talk about the TGT state, we will really mean \emph{any} state for which low-energy correlators and entanglement entropies are accurately reproduced, using the usual AdS/CFT dictionary, by the portion of the canonical purification spacetime lying between $\KRT(A^*)$ and $ABB^*$. In particular, the entropy $S(\tilde{\rho}_{ABB^*})$ will be the generalized entropy of the KRT surface $\KRT(A^*)$. There may be a whole family of states with this property, differing in the details of their subleading saddles, but their physical properties should only differ nonperturbatively in $1/G_N.$ Or perhaps it is not actually possible to construct a TGT state in full AdS/CFT, since the kinks in $\KRT(A^*)$ prevent it from being truly quantum extremal. It \emph{will} be possible to give a precise construction of the TGT state in the fixed-area toy model of the following subsection, which is good enough for our purposes.

Ultimately, the details of the TGT state $\tilde{\rho}_{ABB^*}$ will not be so important. All we will really want is a prescription for computing the fidelity $F(\rho_{ABB^*}, \tilde{\rho}_{ABB^*}).$ The reason for this is that we will assume, on physical grounds, that any Markov recovery map $\mathcal{R}_{B \rightarrow BB^*}$ acting on $\rho_{AB}$ satisfies the inequality
\begin{equation} \label{eq:interm-fidelity-inequality}
    F(\rho_{ABB^*}, \mathcal{R}_{B \rightarrow B B^*}(\rho_{AB})) \leq F(\rho_{ABB^*}, \tilde{\rho}_{ABB^*}).
\end{equation}
This follows, essentially, from the physical arguments made in section \ref{subsec:bulk-gap}. Any $B \rightarrow BB^*$ channel that touches only the $B$ subsystem of $\rho_{AB}$ can \emph{at best} reproduce the $AB$ entanglement wedge, the $BB^*$ entanglement wedge, and whatever shared entanglement is needed to attach those wedges smoothly. In fact, even this is usually too much to ask, which is why we called $\tilde{\rho}_{ABB^*}$ the ``too good to be true'' state.

Indeed, suppose it were possible to obtain the TGT state via a Markov recovery map, i.e., assume we have
\begin{equation}
    \tilde{\rho}_{ABB^*} = \mathcal{R}_{B \rightarrow B B^*}(\rho_{AB})
\end{equation}
for some quantum channel $\mathcal{R}_{B \rightarrow B B^*}.$ It is straightforward to verify the equality
\begin{equation} \label{eq:relent-markov-gap}
    S_R(A:B) - I(A:B) = - S(\rho_{ABB^*}) + S(\rho_{BB^*}) - S(\rho_{AB}||I_A \otimes \rho_{B}),
\end{equation}
where $S(\cdot||\cdot)$ is the relative entropy. Monotonicity of relative entropy under application a quantum channel gives
\begin{equation}
    S(\rho_{AB}||I_A \otimes \rho_{B})
        \leq S(\mathcal{R}_{B \rightarrow BB^*}(\rho_{AB})|| I_A \otimes \mathcal{R}_{B \rightarrow BB^*}(\rho_{B}))
        = S(\tilde{\rho}_{ABB^*} || I_A \otimes \rho_{BB^*}),
\end{equation}
where we have used the fact that the $BB^*$ reduced state of $\tilde{\rho}_{A BB^*}$ agrees with that of $\rho_{ABB^*}.$ Plugging this inequality back into equation \eqref{eq:relent-markov-gap} and writing the relative entropy in terms of entanglement entropies gives
\begin{equation}
    S_R(A:B) - I(A:B) \leq S(\tilde{\rho}_{ABB^*}) - S(\rho_{ABB^*}) = S_{\text{gen}}(\KRT(A^*)) - S_{\text{gen}}(\RT(A^*)).
\end{equation}
Exploiting the $A \leftrightarrow A^*, B \leftrightarrow B^*$ symmetry of the canonical purification, we may rewrite this as
\begin{equation}
    S_R(A:B) - I(A:B) \leq S_{\text{gen}}(\KRT(A)) - S_{\text{gen}}(\RT(A)).
\end{equation}
But, referring back to equation \eqref{eq:SRmI-Sgen} from the introduction, we see that this is a contradiction unless $S_{\text{gen}}(\KRT(B))$ equals $S_{\text{gen}}(\RT(B)).$

We take this to suggest that, except in very special cases, inequality \eqref{eq:interm-fidelity-inequality} really ought to be strict. The gap in that inequality should contribute to a gap in the inequality \eqref{eq:final-fidelity-inequality}; which, as we discussed in the paragraphs following that inequality, will generically be fairly far from equality.

\subsection{Fixed area Markov recovery}
\label{subsec:fixed-area}

We now proceed to computing the fidelity between the canonical purification reduced state $\rho_{ABB^*}$ and the TGT state $\tilde{\rho}_{ABB^*}$ in a toy model where the fidelity can be computed exactly. Our toy model will be made out of fixed area states, introduced in \cite{fixed-area-AR, fixed-area-DHM}, whose gravitational path integral rules were shown in \cite{fixed-area-DHM} to be sufficiently simple that many entropy calculations can be done exactly with fairly mild assumptions.

As we emphasized in the introduction, the surfaces $\RT(A^*)$ and $\KRT(A^*)$ are best thought of as two preferred members of the class of codimension-$2$ spacelike surfaces homologous to $A^*$. The surface $\RT(A^*)$ is singled out because it has minimal generalized entropy among the surfaces that are quantum extremal; the surface $\KRT(A^*)$ is singled out because it contains the entanglement wedge cross-section. So we might rephrase the question of computing $F(\rho_{ABB^*}, \tilde{\rho}_{ABB^*})$ as follows: 
\begin{itemize}
    \item Given a boundary region $R$ and a state $\rho_{R}$ whose bulk entanglement wedge is bounded by the bulk surface $\gamma$, and which contains another covariantly defined surface $\tilde{\gamma}$ with greater generalized entropy, define $\tilde{\rho}_{R}$ to be a state whose entanglement wedge ends at $\tilde{\gamma}$. What is the fidelity $F(\rho_R, \tilde{\rho}_{R})$? 
\end{itemize}
If we substitute $R \rightarrow ABB^*$, $\gamma \rightarrow \RT(A^*)$, and $\tilde{\gamma} \rightarrow \KRT(A^*)$, this is exactly the question we posed in the previous subsection. A general state of this kind is sketched in figure \ref{fig:two-QES-state}.

\begin{figure}
    \centering
    \includegraphics{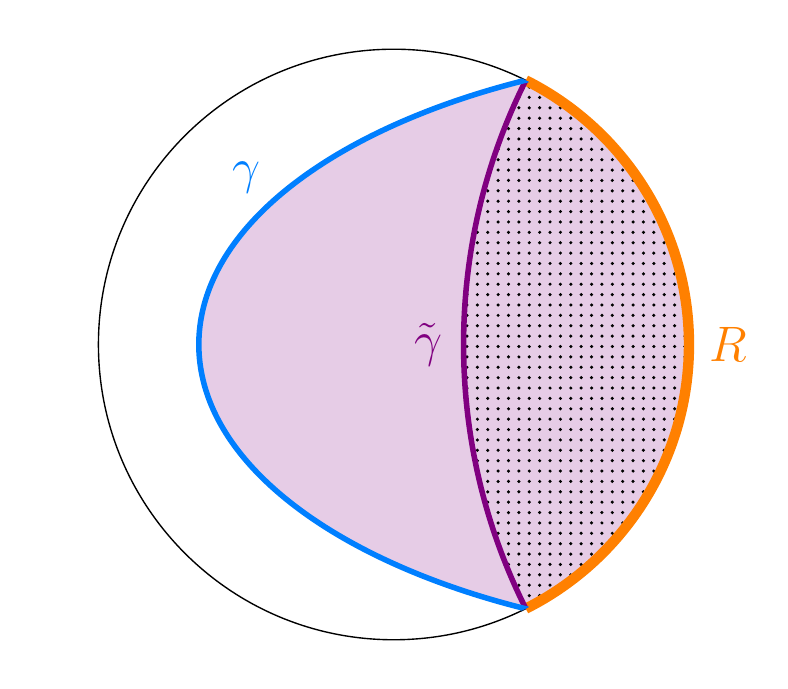}
    \caption{A sketch of a boundary state $\rho_{R}$ whose bulk entanglement wedge contains two covariantly defined surfaces: the minimal quantum extremal surface $\gamma$ and another surface $\tilde{\gamma}$ with greater generalized entropy. The goal of this section is to compute, in a toy model, the quantum fidelity between $\rho_{R}$ and a ``coarser'' state $\tilde{\rho}_{R}$ whose entanglement wedge is only the dotted region.}
    \label{fig:two-QES-state}
\end{figure}

This is an interesting question in general --- presumably the answer tells us something about the boundary degrees of freedom encoding the bulk region between $\gamma$ and $\tilde{\gamma}$. The physical question we are asking is something like, ``how much information do you lose when you trace out that region?'' While we know of no way to compute this quantity for general bulk states, we can compute it exactly in a fixed area state toy model. To apply the fixed-area path integral formalism of \cite{fixed-area-DHM}, we will have to assume that both $\gamma$ and $\tilde{\gamma}$ are classically extremal. This is not exactly the case for the TGT state defined in the previous subsection, even in states with no quantum matter --- while $\tilde{\gamma} = \KRT(A^*)$ is locally extremal, the isolated kinks prevent it from being a truly extremal surface. Verifying that the fixed-area calculation goes through without caveats for extremal surfaces with isolated kinks would be interesting, but would require a careful analysis of boundary conditions at those kinks that is beyond the scope of the present calculation. We will therefore treat $\KRT(A^*)$ as a truly extremal surface. This is not such an extreme assumption, as inequality \eqref{eq:final-fidelity-inequality} is already only supposed to hold in a toy model; the extra assumption that we can treat $\KRT(A^*)$ as extremal should be thought of as an extra assumption on the model.

We now review the bare essentials of fixed area states and their path integrals, originally described in \cite{fixed-area-AR, fixed-area-DHM}; we refer the reader to section 4.2 of \cite{akers-penington} for a very nice, thorough introduction to these techniques. Starting with a static, geometric state in AdS/CFT, a fixed area state is constructed by picking a gauge-invariant surface $\Gamma$\footnote{In fact, it isn't enough to require $\Gamma$ to be gauge invariant; we must also require that it can be sensibly defined in \emph{any} sufficiently smooth spacetime with the right boundary conditions, in order to be able to identify an analogue of $\Gamma$ in spacetimes contributing to the gravitational path integral. The surfaces $\RT(A)$ and $\KRT(A)$ satisfy this property, since they are defined based on certain minimality statements.} and projecting the state onto a narrow band of eigenspaces\footnote{The reason we must project onto a ``narrow band'' of eigenspaces rather than an exact eigenspace is that the eigenvalues of an area operator are continuous.} of the operator that gives the area of $\Gamma$ within a code space of excitations. It was argued in \cite{fixed-area-DHM} that fixed area states can be constructed by taking the gravitational path integral preparing the original state and restricting the integral to be taken over only those metrics for which the area of the designated surface --- again, defined in some gauge-invariant way --- lies within the designated band.\footnote{This discussion is all a bit imprecise, as many of the areas one would like to fix are infinite, and must be somehow regulated. We will not address those subtleties here.} At the level of solutions to the classical equations of motion, fixing the area of an \emph{extremal} surface $\Gamma$ amounts to allowing arbitrary conical defects around that surface; if one approximates a fixed area path integral by summing over saddles, the relevant saddles are spacetimes that are solutions to Einstein's equations everywhere but at the designated surface, where there may be a conical defect.

For our toy model, we will construct a fixed area analogue of $\rho_{R}$ by first taking the state to be classical --- i.e., it contains no quantum matter and therefore boundary entropies are computed by bulk areas, with $\area(\gamma) < \area(\tilde{\gamma})$ --- and then fixing the areas of \emph{both} surfaces $\gamma$ and $\tilde{\gamma}$. We will henceforth abuse notation and call this fixed area state $\rho_{R}$, since from here on out we will work entirely in the toy model. We will define $\tilde{\rho}_{R}$ to be a state where \emph{only} the area of $\tilde{\gamma}$ is fixed. At the level of the saddles contributing to the fixed area path integral we will see shortly that it is sensible to say that $\rho_{R}$ knows about both surfaces $\gamma$ and $\tilde{\gamma}$, while $\tilde{\rho}_{R}$ knows only about the surface $\tilde{\gamma}$; this is the justification for the toy model.

\begin{figure}
    \centering
    \includegraphics{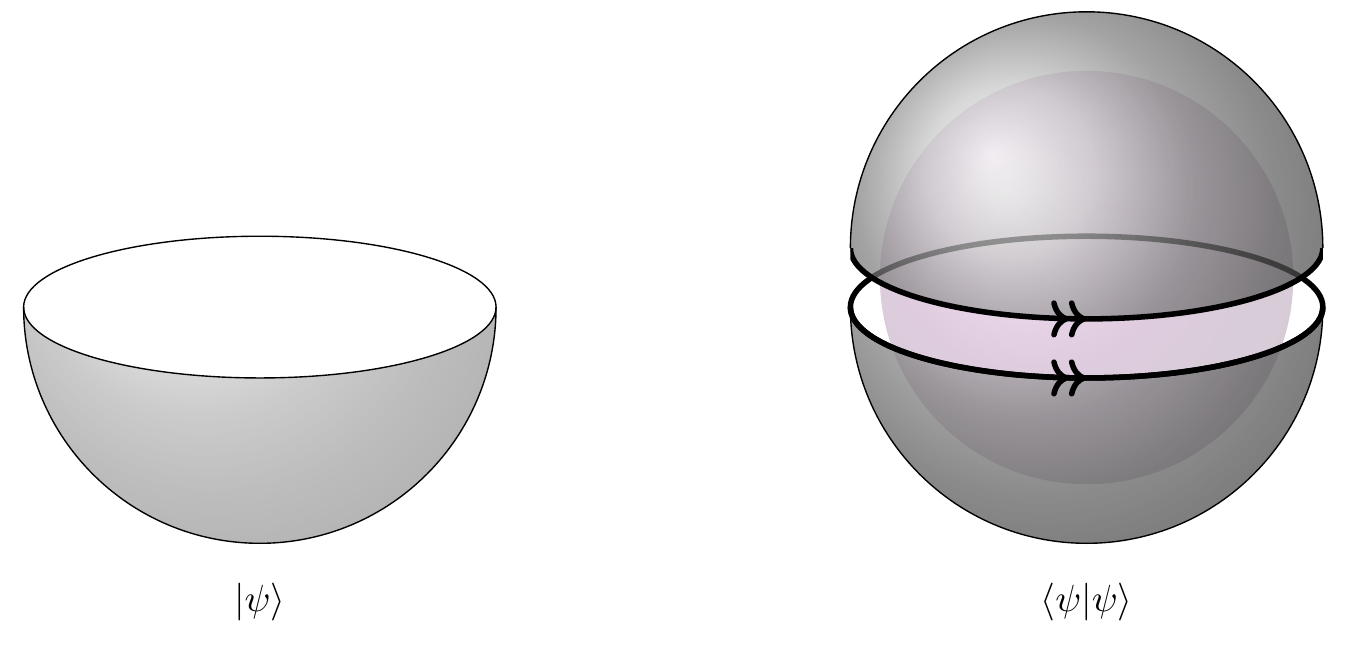}
    \caption{Left-hand side: a Euclidean path integral in the boundary theory preparing a state $\ket{\psi}$. Right-hand side: the Euclidean path integral computing $\braket{\psi}{\psi}$, whose gravitational dual is assumed to be dominated by a single bulk saddle.}
    \label{fig:one-bulk-saddle}
\end{figure}

Now, suppose $\ket{\psi}$ is a state in AdS/CFT that can be prepared by a Euclidean path integral. We will assume that $\ket{\psi}$ is geometric; what this means at a technical level is that the path integral computing $\braket{\psi}{\psi}$ is dominated by a single gravitational saddle point that ``fills in'' the bulk; this is sketched in Figure \ref{fig:one-bulk-saddle}. If $\ket{\psi^{\text{fix}}}$ is a fixed area state constructed from $\ket{\psi}$, and $\sigma^{\text{fix}}_R$ is the corresponding density matrix, then saddles contributing to $\tr((\sigma^{\text{fix}}_R)^n)$ can be constructed by replicating the saddle dominating $\braket{\psi}{\psi}$ $n$ times around the fixed surface. This is sketched in figure \ref{fig:replica-saddles}. The main assumption that goes into the fixed area path integral is that saddles constructed in this way are the only ones that need to be considered to get an accurate answer for quantities like $\tr((\sigma^{\text{fix}}_R)^n).$

\begin{figure}
    \centering
    \includegraphics{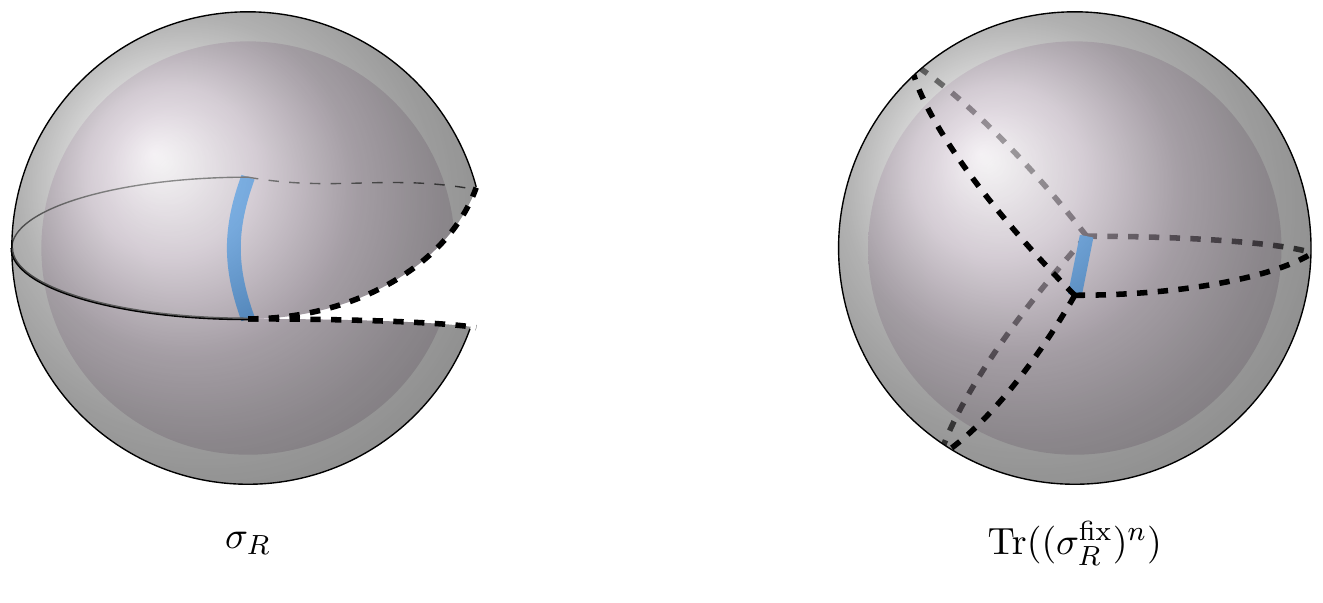}
    \caption{Left-hand side: the path integral preparing $\sigma_R$, whose trace is dominated by a single gravitational saddle containing a surface $\gamma$. Right-hand side: a saddle for $\tr((\sigma_R^{\text{fix}})^3)$, constructed by replicating the single-trace saddle three times around $\gamma$ and thereby giving $\gamma$ a conical defect angle $6 \pi$.}
    \label{fig:replica-saddles}
\end{figure}

We will now describe how to compute the fidelity $F(\rho_{R}, \tilde{\rho}_{R}).$ This is accomplished by computing the quantity
\begin{equation} \label{eq:G-definition}
    G_{m, n}(\rho_{R}, \tilde{\rho}_{R}) = \tr((\tilde{\rho}_{R}^m \rho_{R} \tilde{\rho}_{R}^m)^n)
\end{equation}
for integer values of $m$ and $n$ using the gravitational path integral, analytically continuing to non-integer values of $m$ and $n$, and taking
\begin{equation} \label{eq:fidelity-G}
    F(\rho_R, \tilde{\rho}_R) = G_{\frac{1}{2}, \frac{1}{2}}(\rho_{R}, \tilde{\rho}_{R})^2.
\end{equation}
To compute $G_{m,n}$, we will introduce a useful visual calculus developed in \cite{competing-saddles-2} and \cite{akers-penington}. Every time a factor of $\rho_R$ appears in an expression, we insert a bulk ``pie'' that looks like the one drawn in figure \ref{fig:two-surface-pie}. This is the bulk saddle that dominates the computation of $\tr(\rho_{R})$, with a cut emanating from the fixed-area surface $\gamma$ and ending at $R$. Every time a factor of $\tilde{\rho}_{R}$ appears, we insert a pie like the one shown in figure \ref{fig:one-surface-pie} . To aid in drawing figures, we will ``unwrap'' each pie and draw it as a ``wedge'' --- sketched in figure \ref{fig:wedges}. The path integral for $\tr(\rho_R^n)$, in this language, is sketched in figure \ref{fig:renyi-path-integral}; the segments of cuts neighboring the boundary $R$ are all glued together cyclically, but there is one saddle for each possible gluing of the ``intermediate'' segments of cuts, which are in one-to-one correspondence with permutations in the group $S_n$. Readers who remain confused about this technique are encouraged to consult the more detailed explanation of \cite{akers-penington}.

\begin{figure}
    \centering
    \makebox[\textwidth][c]{
	\subfloat[\label{fig:two-surface-pie}]{
	    \includegraphics{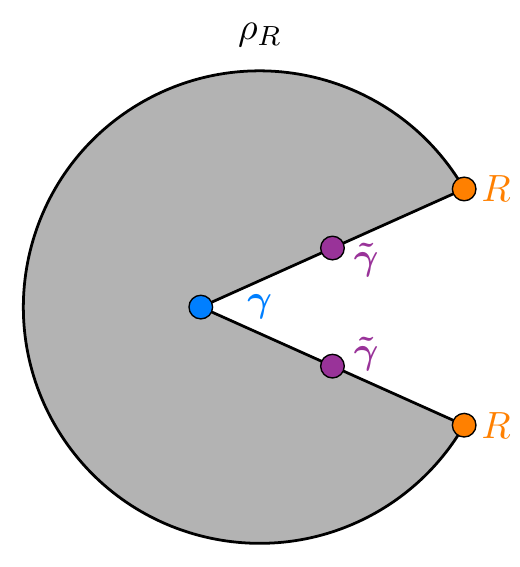}
	}
	\hspace{5em}
	\subfloat[\label{fig:one-surface-pie}]{
    	\includegraphics{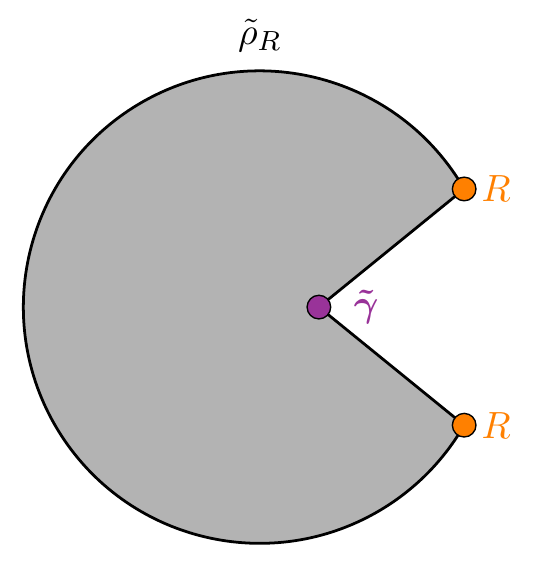}
	}
	}
    \caption{(a) A visual representation of the path integral preparing $\rho_{R}$. We can think of this as a cross-section of the path integral preparing the state shown in figure \ref{fig:two-QES-state}, with the two fixed-area surfaces shown in the dominant bulk saddle. (b) A representation of the path integral preparing $\tilde{\rho}_{R}$, which has only one fixed area surface.}
    \label{fig:pies}
\end{figure}

\begin{figure}
    \centering
    \makebox[\textwidth][c]{
	\subfloat[\label{fig:wedges}]{
	    \includegraphics{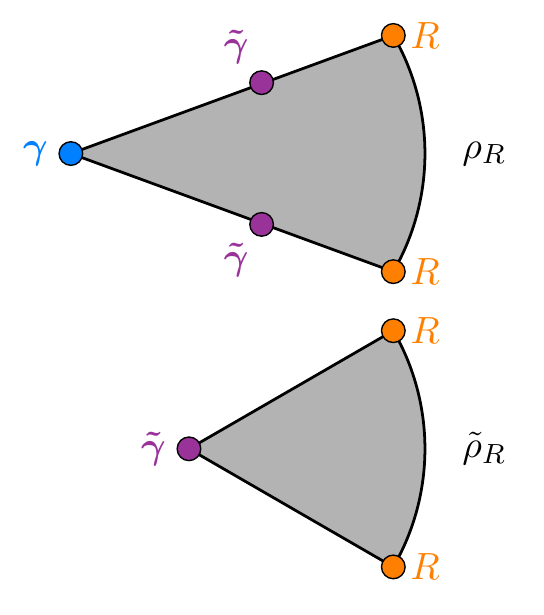}
	}
	\hspace{1em}
	\subfloat[\label{fig:renyi-path-integral}]{
    	\includegraphics{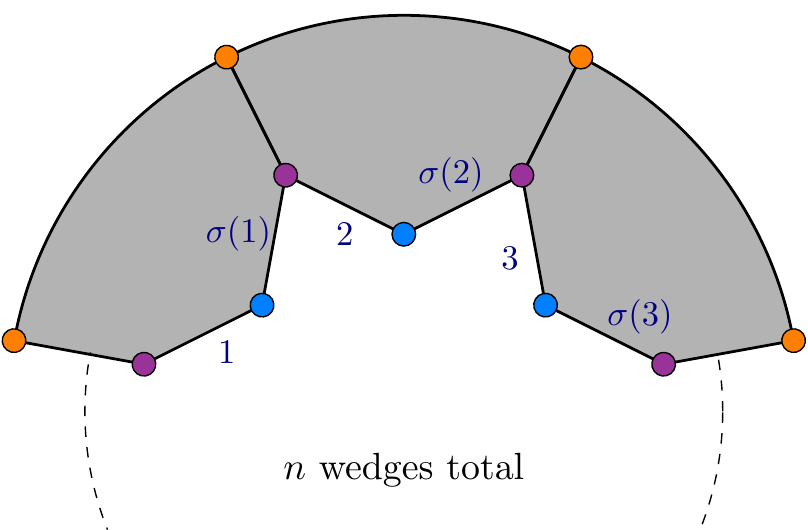}
	}
	}
    \caption{(a) The ``pies'' of figure \ref{fig:pies} unwrapped into ``wedges.'' (b) A saddle contributing to $\tr(\rho_R^n)$. The surfaces lying counterclockwise of each blue dot are labeled $1$ through $n$; for every permutation $\sigma \in S_n$, one can construct a saddle by pasting the surface immediately clockwise of surface $k$ to surface $\sigma(k).$}
\end{figure}

\begin{figure}
    \centering
    \includegraphics{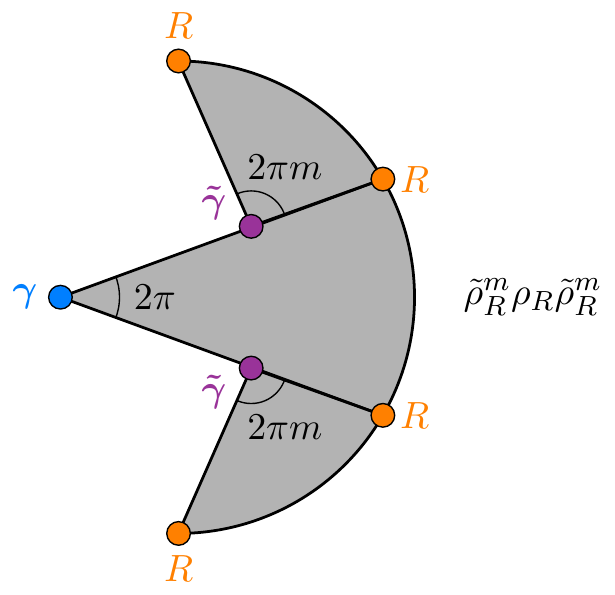}
    \caption{A visual representation of the gravity path integral for the operator $\tilde{\rho}^m_R \rho_R \tilde{\rho}^m_R.$ The function $G_{m,n}$ defined in \eqref{eq:G-definition} is formed by pasting $n$ copies of this path integral together cyclically on the boundary region $R$.}
    \label{fig:integer-operator}
\end{figure}

In this language, the operator $\tilde{\rho}_{R}^m \rho_{R} \tilde{\rho}_R^m$ is prepared by a path integral sketched in figure \ref{fig:integer-operator}. The function $G_{m,n}(\rho_R, \tilde{\rho}_{R})$ is computed by taking $n$ copies of this path integral and gluing the boundaries together cyclically. There is one saddle for each permutation $\sigma \in S_n$, with $\sigma$ determining the gluing pattern for the interior segments of the sides of the wedges. In the saddle corresponding to permutation $\sigma$, there are $C(\sigma)$ copies of surface $\gamma$, where $C(\sigma)$ is the number of cycles in the cycle decomposition of $\sigma$. Before gluing, the total angle subtending $\gamma$-type surfaces is $2 \pi n$, making the total defect
\begin{equation}
    \operatorname{defect}(\gamma) = 2 \pi n - 2 \pi C(\sigma).
\end{equation}
The total number of copies of surface $\tilde{\gamma}$ after gluing is $C(\tau^{-1} \circ \sigma)$, where $\tau$ is the cyclic permutation. The total angle around $\tilde{\gamma}$-type surfaces before gluing is $2 \pi n (2 m + 1)$. This makes the total defect around $\tilde{\gamma}$ surfaces
\begin{equation}
    \operatorname{defect}(\tilde{\gamma}) = 2 \pi n (2 m + 1) - 2 \pi C(\tau^{-1} \circ \sigma).
\end{equation}
It was shown in \cite{fixed-area-DHM} that the total path integral contribution from a saddle with fixed areas $A_j$ and defects $\phi_j$ is
\begin{equation}
    Z = \exp\left[ - \sum_{j} \frac{A_j \phi_j}{8 \pi G_N} \right].
\end{equation}
From this, we get the expression
\begin{equation}
    G_{m, n}(\rho_R, \tilde{\rho}_{R})
        = \sum_{\sigma \in S_n} \exp\left[ \frac{\area(\gamma)}{4 G_N} (C(\sigma) -  n) + \frac{\area(\tilde{\gamma})}{4 G_N} (C(\tau^{-1} \circ \sigma) - 2mn - n)   \right].
\end{equation}

The largest terms in this sum are the ones with $C(\sigma) + C(\tau^{-1} \circ \sigma) = n+1$. Terms with $C(\sigma) + C(\tau^{-1} \circ \sigma) < n + 1$ are subleading by factors like $e^{- \area(\gamma) / 4G_N}$ or $e^{- \area(\tilde{\gamma}) / 4 G_N}$. As emphasized in \cite{akers-penington}, these terms are \emph{infinitely} suppressed when the fixed area surfaces have infinite area (i.e., when region $R$ has a boundary); even when the areas are finite, these terms are suppressed nonperturbatively in $1/G_N$. We will neglect such terms --- as we have already ignored some nonperturbative corrections to $G_{m, n}$ in assuming that the fixed area path integral is dominated by a sum over replicated saddles --- but understanding how such terms contribute to the sum and whether any physics can be extracted from them may be an interesting direction for future work. The permutations $\sigma$ with $C(\sigma) + C(\tau^{-1} \circ \sigma) = n +1$ are called \emph{non-crossing permutations} --- see \cite{mingo2004annular} for a review --- and the number of such permutations with $C(\sigma) = k$ is the Narayana number
\begin{equation}
    N(n, k) = \frac{1}{n} {n \choose k} {n \choose k-1}.
\end{equation}

By summing over only non-crossing permutations, our formula for $G_{m,n}$ simplifies to
\begin{equation}
    G_{m, n}(\rho_R, \tilde{\rho}_{R})
        = \sum_{k=1}^n N(n, k) \exp\left[ \frac{\area(\gamma)}{4 G_N} (k -  n) + \frac{\area(\tilde{\gamma})}{4 G_N} (1 - k - 2mn)  \right].
\end{equation}
This sum can be computed using Mathematica, giving the expression\footnote{We refer the reader to the very nice appendix III of \cite{kudler2021relative} for an explanation of how to do sums like this by hand.}
\begin{equation} \label{eq:G-formula}
    G_{m, n}(\rho_R, \tilde{\rho}_{R})
        = e^{- [2 m n \area(\tilde{\gamma}) - (1-n) \area(\gamma)]/4 G_N} {}_2 F_1 \left(1-n, -n; 2; e^{- [\area(\tilde{\gamma}) - \area(\gamma)]/ 4 G_N} \right).
\end{equation}
The hypergeometric function is analytic in its first two arguments when the third argument is fixed and the fourth argument has magnitude less than one;\footnote{We refer the reader to section 2.1.6 of \cite{bateman1953higher} for a review of analyticity properties of ${}_{2}F_1$.} these conditions are satisfied above because the area of $\tilde{\gamma}$ was assumed to be greater than that of $\gamma$, so we can freely analytically continue to $m=n=1/2$ and obtain
\begin{equation}
    G_{\frac{1}{2}, \frac{1}{2}}(\rho_R, \tilde{\rho}_{R})
        = e^{-[\area(\tilde{\gamma}) - \area(\gamma)]/8 G_N} {}_2 F_1 \left(\frac{1}{2}, -\frac{1}{2}; 2; e^{- [\area(\tilde{\gamma}) - \area(\gamma)]/ 4 G_N} \right).
\end{equation}
Plugging this back into equation \eqref{eq:fidelity-G}, we find the fidelity is
\begin{equation} \label{eq:final-fidelity-computation}
    F(\rho_{R}, \tilde{\rho}_{R})
        = e^{-[\area(\tilde{\gamma}) - \area(\gamma)]/4 G_N} \left[{}_2 F_1 \left(\frac{1}{2}, -\frac{1}{2}; 2; e^{- [\area(\tilde{\gamma}) - \area(\gamma)]/ 4 G_N} \right) \right]^2.
\end{equation}
We stress again that this equation holds only up to corrections nonperturbative in $1/G_N$. With the substitutions $\gamma \rightarrow \RT(A^*), \tilde{\gamma} \rightarrow \KRT(A^*),$ $R \rightarrow ABB^*$, together with the $A \leftrightarrow A^*, B \leftrightarrow B^*$ symmetry of the canonical purification, equation \eqref{eq:final-fidelity-computation} and inequality \eqref{eq:interm-fidelity-inequality} together reproduce the claimed inequality \eqref{eq:FA-fidelity-bound}.

To conclude, we note that the methods of this subsection can easily be generalized to compute all the sandwiched R\'{e}nyi relative entropies of $\rho_{R}$ and $\tilde{\rho}_{R}$. The sandwiched R\'{e}nyi relative entropies were defined in \cite{muller-lennert_quantum_2013, wilde_strong_2014} by
\begin{equation}
    \tilde{D}_{\alpha}(\rho || \sigma) = \frac{1}{\alpha - 1} \log \tr[(\sigma^\gamma \rho \sigma^\gamma)^\alpha]
\end{equation}
with $\gamma = (1-\alpha)/2 \alpha.$ In terms of our function $G_{m, n}$ defined in \eqref{eq:G-definition}, we have
\begin{equation}
    \tilde{D}_{\alpha}(\rho_R || \tilde{\rho}_{R})
        = \frac{1}{\alpha - 1} \log\left[ G_{(1 - \alpha)/2 \alpha, \alpha}(\rho_R, \tilde{\rho}_{R})  \right].
\end{equation}
Plugging in the formula for $G_{m,n}$ from equation \eqref{eq:G-definition} gives
\begin{equation} \label{eq:sandwiched-Renyis}
    \tilde{D}_{\alpha}(\rho_R || \tilde{\rho}_{R})
        = \frac{\area(\tilde{\gamma}) - \area(\gamma)}{4 G_N} + \frac{1}{\alpha - 1} \log\left[ {}_2 F_1 \left(1-\alpha, -\alpha; 2; e^{- [\area(\tilde{\gamma}) - \area(\gamma)]/ 4 G_N} \right)  \right].
\end{equation}

The sandwiched R\'{e}nyi relative entropies for $\alpha \neq 1/2$ are not symmetric in their arguments, but we can compute $\tilde{D}_{\alpha}(\tilde{\rho}_R || \rho_{R})$ quite easily. A similar calculation to the one given for $G_{m, n}$ can be performed for the function
\begin{equation} \label{eq:H-definition}
    H_{m, n}(\rho_{R}, \tilde{\rho}_{R}) = \tr((\rho_{R}^m \tilde{\rho}_{R} \rho_{R}^m)^n),
\end{equation}
yielding
\begin{equation}
    H_{m, n}(\rho_{R}, \tilde{\rho}_{R}) = \sum_{\sigma \in S_{2 m n}}
        \exp\left[\frac{\area(\gamma)}{4 G_N} (C(\sigma) - 2 m n) + \frac{\area(\tilde{\gamma})}{4 G_N} (C(\tau^{-1} \circ \sigma) - n (2m+1) \right].
\end{equation}
Summing over non-crossing permutations in $S_{2 m n}$ gives
\begin{align}
    H_{m, n} (\rho_R, \tilde{\rho}_R)
        & = e^{[-(2 m n - 1) \area(\gamma)- n \area(\tilde{\gamma})]/4 G_N} \times \nonumber \\
            & \qquad 
            {}_{2} F_1\left( - 2 m n, 1 - 2 m n, 2, e^{- [\area(\tilde{\gamma}) - \area(\gamma)]/4 G_N} \right).
\end{align}
Setting $n = \alpha, m = (1 - \alpha) / 2 \alpha$ yields
\begin{align}
    H_{(1-\alpha)/2 \alpha, \alpha} (\rho_R, \tilde{\rho}_R)
        & = e^{- \alpha [\area(\tilde{\gamma}) - \area(\gamma)] / 4 G_N } \times \nonumber \\
        & \qquad
            {}_{2} F_1\left( - 1 + \alpha, \alpha, 2, e^{- [\area(\tilde{\gamma}) - \area(\gamma)]/4 G_N} \right).
\end{align}
From this, we can compute the sandwiched R\'{e}nyi relative entropy as
\begin{equation} \label{eq:reversed-sandwiched-Renyis}
    \tilde{D}_{\alpha}(\tilde{\rho}_{R} || \rho_R)
        = \frac{1}{\alpha - 1} \log\left[H_{(1-\alpha)/2 \alpha, \alpha}(\rho_R, \tilde{\rho}_{R})  \right].
\end{equation}

Note, finally, that in the limit $\alpha \rightarrow 1$, the sandwiched R\'{e}nyi relative entropy approaches the ordinary relative entropy. One can check that $\tilde{D}_{\alpha}(\rho_R || \tilde{\rho}_{R})$ has a finite $\alpha \rightarrow 1$ limit, while $\tilde{D}_{\alpha}(\tilde{\rho}_R || \rho_{R})$ diverges; we take this to mean that the support of $\tilde{\rho}_{R}$ is strictly larger than the support of $\rho_R$. This makes sense: very heuristically speaking, we might think of $\tilde{\rho}_{R}$ as a state constructed from $\rho_R$ by taking the portion of the $R$-Hilbert space encoding the bulk region lying between $\gamma$ and $\tilde{\gamma}$ and erasing the information therein by replacing it with the maximally mixed state.

A slightly different relative entropy between fixed area states was computed in \cite{kudler2021relative}; this was a relative entropy between two states where both areas were fixed but where the bulk states contained orthogonal excitations in the region bounded by $\gamma$ and $\tilde{\gamma}.$

\section{Discussion}
\label{sec:discussion}

Our goal in this paper was threefold: (i) we established a connection between the quantity $S_R(A:B) - I(A:B)$ and Markov recovery processes in canonical purifications (section \ref{sec:info-theory}), arguing in the process that boundaries in the entanglement wedge cross-section of a holographic state require $S_R - I$ to be nonzero at order $1/G_N$; (ii) we proved inequality \eqref{eq:big-technical-claim} for time-symmetric states in AdS$_3$ gravity (section \ref{sec:geometric-proof}), establishing a quantitative bound on the Markov gap in such states, and (iii) we explored a fixed area toy model of the Markov recovery process (section \ref{sec:recovery-models}) to see how information inequalities partially reproduce the geometric bound on the Markov gap given in equation \eqref{eq:SRmI-Sgen}. Along the way, we explored generalizations of inequality \eqref{eq:big-technical-claim} to states with bulk matter and states in higher-dimensional theories of gravity (section \ref{sec:generalizations}).

We now comment on some general lessons from our analysis, and some possible directions for future work.

\subsection{Generalizing the bound}
\label{subsec:future-generalizations}

A natural first step in generalizing the proof of section \ref{sec:geometric-proof} would be to prove inequality \eqref{eq:big-technical-claim} for non-time-symmetric classical states in pure AdS$_3$ gravity. The chief difficulty in pursuing this generalization is that the minimal surfaces under consideration are no longer constrained to lie in a single time slice; one must work in the full, Lorentzian geometry.

The proof of section \ref{sec:geometric-proof} worked by modeling time-symmetric slices of asymptotically AdS$_3$ states as quotients of the hyperbolic plane $\mathbb{H}_{2}$. One could imagine that a similar analysis could be undertaken without the assumption of time symmetry, by modeling an arbitrary geometric state in pure AdS$_3$ gravity as a quotient of global AdS$_3$. While this problem certainly seems tractable, we think it unlikely that the technique of passing to the universal cover will be particularly useful in the long run. After all, one would like to generalize inequality \eqref{eq:big-technical-claim} not only to non-time-symmetric states, but also to states with bulk matter; these states are generically not universally covered by AdS$_3$.

A technique we find more promising, in the sense that it is more easily generalizable, is to reinterpret the proof of inequality \eqref{eq:big-technical-claim} in terms of Stokes' theorem. We emphasized in the introduction that the surfaces $\KRT(A)$ and $\RT(A)$ are homologous to one another. By expressing the areas of $\KRT(A)$ and $\RT(A)$ as integrals over the corresponding surfaces, one can use Stokes' theorem to write the area difference $[\area(\KRT(A)) - \area(\RT(A))]$ as an integral over the homology region between the two. There are many ways to write this integral --- any smooth form on the homology region that interpolates between the volume forms on $\KRT(A)$ and $\RT(A)$ will give rise to a homology integral computing the area difference --- but no matter how we write the integral, it must have the property that it diverges at the kinks of $\KRT(A)$. We think it likely that a careful analysis of this problem will yield an alternative proof of \eqref{eq:big-technical-claim}, though we have not yet succeeded in tackling the problem from that direction. An advantage of the Stokes' theorem approach is that it is naturally generalizable: it can be applied to non-time-symmetric states, as well as to states in arbitrary dimensions, and even to states with classical bulk matter provided that one understands how suitable energy conditions on that matter affect the integral on the homology region.

Generalizing inequality \eqref{eq:big-technical-claim} to states with bulk quantum matter is trickier. As we mentioned in section \ref{subsec:quantum-matter}, the bulk contribution to the Markov gap $S_{R, \text{bulk}} - I_{\text{bulk}}$ is not necessarily nonnegative on its own; even though the Markov gap is nonnegative for any quantum state, the bulk entropy contribution to the mutual information is not actually the mutual information of a bulk state, so one cannot generally guarantee the inequality $S_{R, \text{bulk}} \geq I_{\text{bulk}}$ --- at least not using the techniques of this paper. So proving inequality \eqref{eq:big-technical-claim} in generality for the area contribution to the Markov gap would not suffice to prove the inequality in the presence of quantum matter. This state of affairs is further complicated by the fact that we don't really expect the area contribution of the Markov gap to universally satisfy inequality \eqref{eq:big-technical-claim} on its own; it likely satisfies that inequality under suitable energy conditions imposed on classical matter configurations, but such energy conditions are generally violated by quantum matter. If an inequality like \eqref{eq:big-technical-claim} is to hold for arbitrary states, we expect instead that quantum matter can cause the classical piece of the Markov gap to dip below the lower bound, but that the Markov gap of the quantum matter will make up the difference. One can think of this as a rather delicate energy condition on quantum fields, where quantum field theory states violating some classical energy conditions must have a sufficiently large $S_{R, \text{bulk}} - I_{\text{bulk}}$ to make up for the violation. Any research effort in this direction would likely start with the results of \cite{SRQFT1, SRQFT2, camargo2021long}, where the authors compared reflected entropy and mutual information in simple quantum field theories.

\subsection{Multipartite holographic entanglement}

In the introduction, we mentioned the result of Akers and Rath \cite{akers-rath-tripartite} that any pure three-party state with a mostly-bipartite entanglement structure has Markov gap close to zero. Our inequality \eqref{eq:big-technical-claim} can be viewed as a quantitative version of their observation that certain holographic states require tripartite entanglement. Any holographic state $\ket{\Psi_{ABC}}$ for which the entanglement wedge cross-section of $\rho_{AB}$ has a boundary must have tripartite entanglement at order $1/G_N$; the scaling of inequality \eqref{eq:big-technical-claim} seems to suggest --- though not prove --- that each boundary of the entanglement wedge cross-section requires some minimal amount of tripartite boundary entanglement to support it.

Let us consider this idea in light of the lore, originating in \cite{van2010building}, that geometric connections in bulk states emerge from entanglement patterns in boundary states. What we would like to say is something like, ``connections across codimension-$2$ surfaces are supported by bipartite entanglement in the boundary; connections across codimension-$3$ surfaces (like boundaries of entanglement wedge cross-sections), are supported by tripartite entanglement.''

At present, we do not have any concrete proposals for how to test this idea or make it more precise. We think it worthwhile to mention, however, a connection to another issue in holography: the difficulties involved in understanding intersections of extremal surfaces. The area operators of intersecting extremal surfaces do not commute. As emphasized in section 7 of \cite{bao2019beyond}, this means it is impossible to model the entanglement pattern of intersecting extremal surfaces with a tensor network made up of isometries or with fixed area states. It is generically the case that if one considers $n$ overlapping boundary subregions of a holographic state, the quantum extremal surfaces computing their entropies will have some complicated pattern of intersections in the bulk. The only tensor network model of holography we know of in which multipartite entanglement patterns have been computed explicitly is that of random stabilizer tensor networks; in that context, it was shown in \cite{nezami2020multipartite} that tripartite entanglement is scarce. Because tensor network models \emph{also} cannot model intersecting extremal surfaces, we might expect that the geometry of extremal surface intersections emerges from multipartite entanglement in the boundary state. This line of reasoning, if taken seriously, would provide an independent suggestion that codimension-$3$ and higher features of bulk geometries are signs of multipartite entanglement in the boundary theory.

Unfortunately, quantum extremal surface intersections appear to be a more complicated beast. After this manuscript first appeared online, we learned of forthcoming work \cite{akers2021reflected} that finds a parametrically large Markov gap in \emph{non-stabilizer} random tensor networks, indicating that these tensor networks contain multipartite entanglement. Despite this feature, these tensor networks are not capable of modeling the geometry of intersecting extremal surfaces. We still suspect that multipartite entanglement will be an important part of the story in understanding extremal surface intersections --- but the mere existence of multipartite entanglement is not enough to make such intersections emerge.

\acknowledgments{It is a pleasure to thank Chris Akers, Ning Bao, Geoff Penington, Xiaoliang Qi, and Pratik Rath for illuminating conversations. We especially thank Mark Wilde for explaining equation \eqref{eq:first-fidelity-inequality} during the early stages of this work, and Arvin Shahbazi-Moghaddam for suggesting that bulk strong subadditivity could be used to establish the results of appendix \ref{app:KRT}. This work was partially supported by AFOSR award FA9550-19-1-0369, CIFAR, DOE award DE-SC0019380 and the Simons Foundation.}

\appendix

\section{KRT surfaces and the quantum maximin formula}
\label{app:KRT}

In the introduction, in inequality \eqref{eq:SRmI-Sgen}, we lower-bounded the Markov gap in terms of the generalized entropy differences of KRT and RT surfaces. While the Markov gap is known to be nonnegative from quantum information arguments (see section \ref{sec:info-theory}), it is interesting to see how this nonnegativity is enforced in the bulk theory. If we could establish the inequality
\begin{equation} \label{eq:putative-KRT-RT-entropy}
	S_{\text{gen}}(\KRT(A)|A) \geq S_{\text{gen}}(\RT(A)|A),
\end{equation}
and the equivalent inequality for subregion $B$, then this would immediately imply nonnegativity of the Markov gap via inequality \eqref{eq:SRmI-Sgen}. Inequality \eqref{eq:putative-KRT-RT-entropy} certainly seems reasonable, because $\RT(A)$ has minimal generalized entropy among quantum extremal surfaces in its homology class. As emphasized in the introduction, however, $\KRT(A)$ is not a quantum extremal surface, and so some care must be taken in establishing \eqref{eq:putative-KRT-RT-entropy}.

We will establish that the generalized entropy of $\KRT(A)$ is non-increasing for arbitrary null perturbations towards the boundary region $A$, and then appeal to the quantum focusing conjecture \cite{bousso2016quantum} and quantum maximin \cite{quantum-maximin} to establish inequality \eqref{eq:putative-KRT-RT-entropy}. For states with no quantum matter, the result can be established via the ordinary focusing theorem and the classical maximin formula \cite{maximin}. The key point will be that the bulk state lying between $\KRT(A)$ and $A$ can be extended into two different entanglement wedges --- the $AB$ entanglement wedge in the original state $\rho_{AB}$ and the $AA^*$ entanglement wedge in the canonical purification --- that are each bounded by quantum extremal surfaces.

\begin{figure}
	\centering
	\includegraphics{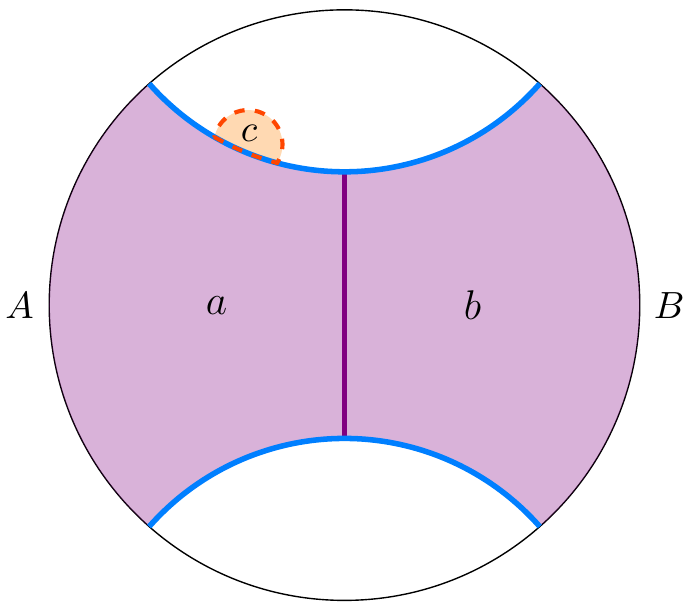}
    \caption{The bulk regions $a$ and $b$ for two intervals in the AdS$_3$ vacuum, together with a small deformation $c$ localized on the surface $\RT(AB)$. While this figure is drawn as though it lies in a single time-slice of the bulk, the deformation should actually lie in a null direction orthogonal to $\RT(AB)$.}
	\label{fig:null-deformation}
\end{figure}

Let us denote by $a$ the wedge lying spacelike between $\KRT(A)$ and $A$, and by $b$ the wedge lying spacelike between $\KRT(B)$ and $b$. This is sketched for a simple spacetime in figure \ref{fig:null-deformation}. Now, consider a null deformation of the portion of $\KRT(A)$ that was inherited from the quantum extremal surface $\RT(AB).$ For reasons that will become clear momentarily, we will choose this deformation to point \emph{away} from $A$. This amounts to adding a small bulk region $c$ to the bulk region $a$; this deformation is also sketched in figure \ref{fig:null-deformation}. The key observation is that this deformation can also be thought of as a deformation of the bulk region $ab$, whose boundary $\RT(AB)$ is quantum extremal. The change in the generalized entropy of $\KRT(A)$ under the addition of region $c$ is given by
\begin{equation}
	\Delta S_{\text{gen}}(\KRT(A)|A)
		= S(a c) - S(a) + \frac{1}{4 G_N} \Delta \area(\KRT(A)).
\end{equation}
But because the perturbation is localized to the portion of $\KRT(A)$ contained within $\RT(AB),$ we have $\Delta \area(\KRT(A)) = \Delta \area(\RT(AB)).$ Furthermore, applying the strong subadditivity inequality to the system $abc$ gives
\begin{equation}
	S(ac) - S(a) \geq S(abc) - S(ab).
\end{equation}
Combining these observations yields the inequality
\begin{equation}
	\Delta S_{\text{gen}}(\KRT(A)|A)
		\geq S(a b c) - S(a b) + \frac{1}{4 G_N} \Delta \area(\RT(AB))
		= \Delta S_{\text{gen}}(\RT(AB)|AB).
\end{equation}
But the right-hand side of this inequality vanishes for infinitesimal regions $c$ due to the quantum extremality of $\RT(AB)$. It follows that the change in $S_{\text{gen}}(\KRT(A)|A)$ due to a null deformation away from $A$ is infinitesimally nonnegative, provided that this null deformation lies in $\RT(AB)$. Linearity of perturbations to the entanglement entropy establishes that the change in $S_{\text{gen}}(\KRT(A)|A)$ is infinitesimally \emph{non-positive} for null perturbations on $\RT(AB)$ pointed \emph{toward} $A$, as claimed in the preceding paragraph.

For null perturbations lying on the portion of $\KRT(A)$ contained with $\RT(AA^*),$ the same exact result can be established using an embedding of the state into the entanglement wedge $aa^*$ in the canonical purification. Putting these results together establishes that the generalized entropy of $\KRT(A)$ is infinitesimally non-increasing for any null deformation toward $A$ that is supported on the smooth parts of $\KRT(A)$. As for null deformations toward $A$ that are supported on the corners of $\KRT(A)$, the right-angle structure of those corners guarantees that the classical expansion of any such deformation is $-\infty.$ The infinitesimal change in the bulk entropy, since this is a renormalized quantity that should not be sensitive to corner divergences, is finite. As such, the total quantum expansion of such a perturbation is negative, establishing that the generalized entropy of $\KRT(A)$ is infinitesimally non-increasing for \emph{arbitrary} null deformations toward $A$. The quantum focusing conjecture \cite{bousso2016quantum} implies that if the change in generalized entropy along a null congruence is initially locally nonpositive, then it is nonpositive everywhere on the congruence,\footnote{The quantum focusing conjecture is the statement that the quantum expansion, which is the logarithmic derivative of the generalized entropy under a local null perturbation, is non-increasing on null congruences.} provided that one takes the usual care to remove generators from the congruence after they reach caustic points. This implies that as $\KRT(A)$ is focused along a null congruence leaving $\KRT(A)$ in the direction of $A$, its generalized entropy is nonperturbatively non-increasing.

We now appeal to the quantum maximin formula \cite{quantum-maximin}, which claims the existence of a complete achronal slice of the bulk spacetime on which $\RT(A)$ has minimal generalized entropy. The intersection of the null congruence leaving $\KRT(A)$ with this slice is called the \emph{representative} $\widehat{\KRT}(A).$ We then have the chain of inequalities
\begin{equation}
	S_{\text{gen}}(\RT(A)|A) \leq S_{\text{gen}}(\widehat{\KRT}(A)|A) \leq S_{\text{gen}}(\KRT(A)|A),
\end{equation}
establishing inequality \eqref{eq:putative-KRT-RT-entropy}.

\bibliographystyle{JHEP}
\bibliography{biblio.bib}

\end{document}